\documentclass[journal=jpcafh,manuscript=article]{achemso}

\usepackage[version=3]{mhchem} 
\usepackage{amsmath,amsthm} 
\usepackage{amssymb,mathrsfs} 
\usepackage{a4wide} 
\usepackage{graphicx}
\usepackage{physics}
\usepackage{color,subcaption} 
\usepackage{enumerate}
\usepackage{comment}
\usepackage{makecell}
\usepackage{todonotes}



\newtheorem{lemma}{Lemma}

\newtheorem{remark}{Remark}

\SectionNumbersOn
\DeclareMathAlphabet{\mathpzc}{OT1}{pzc}{m}{it}
\newcommand{\dps}{\displaystyle } 
\newcommand{\rme}{\mathrm{e}}

\renewcommand{\leq}{\leqslant}
\renewcommand{\geq}{\geqslant}
\renewcommand{\le}{\leqslant}
\renewcommand{\ge}{\geqslant}
\newcommand{\fenc}{f_\mathrm{enc}}
\newcommand{\fass}{f_\mathrm{ass}}
\newcommand{\fred}{f_\mathrm{LOP}}
\newcommand{\wfenc}{\widetilde{f}_\mathrm{enc}}
\newcommand{\fdec}{f_\mathrm{dec}}
\newcommand{\cX}{\mathcal{X}}
\newcommand{\cZ}{\mathcal{Z}}
\newcommand{\sL}{\mathscr{L}}
\newcommand{\sLh}{\widehat{\mathscr{L}}}
\newcommand{\Nd}{N_\mathrm{data}}
\newcommand{\wNd}{\widetilde{N}_\mathrm{data}}
\newcommand{\E}{\mathbb{E}}
\newcommand{\cF}{\mathcal{F}}
\newcommand{\sS}{\mathscr{S}}
\newcommand{\sI}{\mathscr{I}}
\newcommand{\sC}{\mathscr{C}}
\newcommand{\sD}{\mathscr{D}}
\newcommand{\cD}{\mathcal{D}}
\newcommand{\chis}{\chi_\varepsilon}
\newcommand{\wx}{\widetilde{x}}
\newcommand{\cFenc}{\mathcal{F}_\mathrm{enc}}
\newcommand{\cFred}{\mathcal{F}_\mathrm{LOP}}
\newcommand{\cFdec}{\mathcal{F}_\mathrm{dec}}
\renewcommand{\div}{\mathrm{div}}

\allowdisplaybreaks

\author{Tony Lelièvre}
\affiliation[Cermics]
{CERMICS, École des Ponts ParisTech, 6-8 Avenue Blaise Pascal, 77455,Marne-la-Vallée, France}
\alsoaffiliation[Matherials]
{MATHERIALS team-project, Inria Paris, 2 Rue Simone Iff, 75012 Paris, France}
\email{tony.lelievre@enpc.fr}

\author{Thomas Pigeon}
\affiliation[Matherials]
{MATHERIALS team-project, Inria Paris, 2 Rue Simone Iff, 75012 Paris, France}
\email{thomas.pigeon@ifpen.fr}
\alsoaffiliation[Cermics]
{CERMICS, École des Ponts ParisTech, 6-8 Avenue Blaise Pascal, 77455,Marne-la-Vallée, France}
\alsoaffiliation[IFPEN Solaize]
{IFP Energies Nouvelles, Rond-Point de l’Echangeur de Solaize, BP 3, 69360 Solaize, France}

\author{Gabriel Stoltz}
\affiliation[Cermics]
{CERMICS, École des Ponts ParisTech, 6-8 Avenue Blaise Pascal, 77455,Marne-la-Vallée, France}
\email{gabriel.stoltz@enpc.fr}
\alsoaffiliation[Matherials]
{MATHERIALS team-project, Inria Paris, 2 Rue Simone Iff, 75012 Paris, France}

\author{Wei Zhang}
\affiliation[DMCS]
{Department of Mathematics and Computer Science, Freie Universit{\"a}t Berlin,  Arnimallee 14, 14195 Berlin, Germany}
\email{wei.zhang@fu-berlin.de}
\alsoaffiliation[ZIB]
{Zuse Institute Berlin, Takustraße 7, 14195 Berlin, Germany}

\title{Analyzing multimodal probability measures with autoencoders}

\abbreviations{AE, PCA, MEP}
\keywords{Machine learning, Algorithms, Reaction Dynamics}

\begin{document}

\begin{tocentry}
\includegraphics[width=1.0\textwidth]{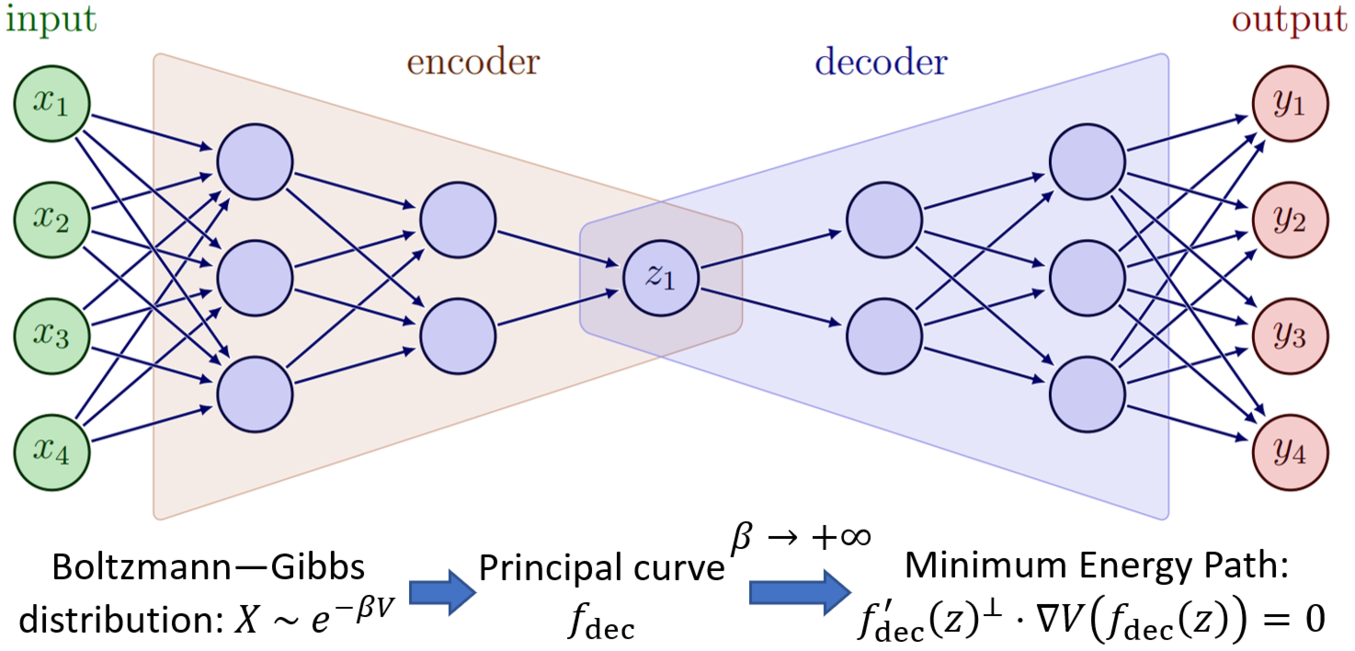}





\end{tocentry}

\begin{abstract}
  Finding collective variables to describe some important coarse-grained information on physical systems, in particular metastable states, remains a key issue in molecular dynamics. Recently, machine learning techniques have been intensively used to complement and possibly bypass expert knowledge in order to construct collective variables. Our focus here is on neural network approaches based on autoencoders. We study some relevant mathematical properties of the loss function considered for training autoencoders, and provide physical interpretations based on conditional variances and minimum energy paths. We also consider various extensions in order to better describe physical systems, by incorporating more information on transition states at saddle points, and/or allowing for multiple decoders in order to describe several transition paths. Our results are illustrated on toy two dimensional systems and on alanine dipeptide.
\end{abstract}

\section{Introduction}

Molecular simulation has proved to be an effective computational approach to understand complex systems in biophysics, chemistry, material science, etc.~\cite{HOLLINGSWORTH20181129-md-for-all,Durrant2011-md-drug}. However, sampling complex physical systems in molecular dynamics remains a computational challenge due to the metastability of the system's dynamics -- by which we mean that the system is often stuck for a long time in some state (called ``metastable state''), before switching to another state where it again remains stuck for a long time. Many efforts have been, and still are pursued to enhance the sampling of molecular systems. One successful strategy to this end relies on free energy biasing where (a fraction of) the free energy is used as an importance sampling function~\cite{colvar}. This paradigm underpins adaptive methods to compute the free energy such as the celebrated metadynamics and its variants~\cite{metadynamics-2002,using-metadynamics-2020}, and the adaptive biasing force method~\cite{abf-darve}, to quote only two options (see Ref.~\citenum{enhanced-sampling-for-md-review} for a more extensive review). A key element in these free energy biasing methods is the choice of collective variables (CVs), which summarize some coarse-grained information on the system, and describe in a lower dimensional space the various metastable states. 

CVs were chosen until recently based on expert knowledge and chemical intuition. Following the many successes of machine learning approaches in data science and computer vision, the last decade witnessed many propositions to improve, complement, or even replace expert knowledge in molecular dynamics. There are by now various reviews on the use of machine learning techniques in molecular dynamics, including also the construction of empirical force fields (see Ref.~\citenum{WGSM21} which summarizes various other review papers). Dedicated review articles are devoted to finding CVs~\cite{Ferguson_2018,SCF20_review,NTMC20,Gkeka_CECAM,GHRCNL21,Chen21}; see also Ref.~\citenum{SKH23} for more mathematically oriented elements. These methods mostly correspond to unsupervised dimensionality reduction and manifold learning techniques, although some of them are based on (semi)supervised approaches.

In order to make the best use of machine learning methods, one needs to specify criteria which make CVs appropriate. Desirable requirements include:
\begin{itemize}
\item free energy biasing based on these CVs efficiently suppresses or reduces metastability. From a mathematical viewpoint, this can be quantified through improved constants in functional inequalities (see for instance Ref.~\citenum{lelievre2013two} for a pedagogical introduction);
\item the CVs provide a parametrization of the eigenfunctions associated with the dominant eigenvalues of the transfer operators (or the generator of the dynamics), as these eigenfunctions allow to characterize metastable states~\cite{msm_generation};  
\item the transition path in the collective variable space from one metastable state to another should correspond to some form of minimum energy or free energy path.~\cite{VEV09}
\end{itemize}
Some CVs meet only some of these requirements. For instance, eigenfunctions have sharp transitions between metastable states, and may therefore not be used as such as CVs since otherwise the biasing forces used in free energy biasing dynamics, which involve the gradient of the collective variable, would be large and lead to unstable dynamics. The same remark applies to the committor function.

Most machine learning methods to find CVs are unsupervised (see however Refs.~\citenum{Bonati2020,DeepTDA} for supervised methods based on Fisher's linear discriminant analysis and some variation of it, Ref.~\citenum{automated-design-cv-supervised-ml} for a method utilizing decision functions such as support vector machines in a supervised setting, and Ref.~\citenum{BBMLSG23} for a semi-supervised approach). There are two main classes of unsupervised methods: those seeking high variances CVs, which aim at reproducing overall features of the Boltzmann--Gibbs distribution at hand; and those seeking slowly evolving CVs (e.g. based on transition path sampling~\cite{cv-by-likelihood-maximization} or committor analysis~\cite{cv-by-cross-entropy-minimization}; see also Ref.~\citenum{zhang2023understanding} for related discussions). Both classes can be separated into linear and nonlinear methods. Our focus in this work is on methods that find high variance CVs, for which no notion of dynamics is a priori used. It is useful to further distinguish between (i) linear methods, e.g. principal component analysis (PCA) or factor analysis, and (ii) non-linear methods, e.g. kernel methods, autoencoders, decision trees and random forests (see for instance Refs.~\citenum{Mehta2019,Murphy22} for introductory references on these classes of methods). In the first case, CVs are interpretable, but often too simple to give precise results. An effective way to find CVs with good intepretability and expressivity is to make a selection from a pool of candidate physical variables~\cite{automatic-identify-rc-ma-and-dinner,cv-genetic-algorithm,cv-by-cross-entropy-minimization}. 

\paragraph{Aims and scope.}
We focus in this work on autoencoders, which have been successfully used in molecular dynamics (see Ref.~\citenum{CC23} and Section~\ref{sec:presentation_AEs} for reviews). Our aim is to better characterize the learning problem associated with autoencoders, and relate it to various frameworks relevant for statistical methods (principal curves and manifolds) and/or molecular dynamics (use of conditional expectations as for free energy computations).

We also aim at better understanding what is learned when minimizing the reconstruction error, and how/whether the so-obtained information can be leveraged. Let us indeed emphasize here that the CVs obtained from maximizing the variance have a priori no dynamical relevance to describe the dynamics of the system, in particular transitions from one metastable mode to another. It is however empirically observed in various works that the CVs found by autoencoders may have some relevance to describe such transitions, possibly upon filtering out some degrees of freedom of the encoder which account for the overall variability of the system under consideration (see for instance Ref.~\citenum{BBMLSG23}). 

In order to further improve the dynamical relevance of encoders, we explore several options in this work: (i) adding extra terms in the loss function, to describe transitions from one metastable state to another, following up on Ref.~\citenum{RBGMM22}; (ii) allowing for multiple decoders in order to represent multiple transition paths connecting two metastable states; or (iii) requiring the encoder to parametrize dominant eigenfunctions.

\paragraph{Outline.}
In the Methods section, we first analyze and provide some theoretical understanding of autoencoders in Sections~\ref{sec:presentation_AEs} to~\ref{sec:cond_exp_others}, and then propose various extensions in Sections~\ref{sec:changing_ref_measure} to \ref{sec:parametrzing_eigenfunc_multiple_states}. More precisely, we briefly present in Section~\ref{sec:presentation_AEs} autoencoders (in particular, bottleneck autoencoders) and some of their properties, as well as some of their applications in molecular dynamics. After a general discussion on the loss function to train autoencoders in Section~\ref{sec:interpretation_loss}, we provide in Section~\ref{sec:cond_exp} a focus on their interpretation in terms of conditional expectations, and use this reformulation to draw a link between AE and PCA or clustering in Section~\ref{sec:cond_exp_others}. We then first consider in Section~\ref{sec:changing_ref_measure} a supervised setting where the probability measure on the data has only two modes which are known a priori, discussing options to construct a single path relating the two modes, and incorporating useful information on the saddle points and transition states observed during the transition. The situation when the modes can be related via several paths is then addressed in a second stage in Section~\ref{sec:multiple_path_2_state}. An approach to regularize autoencoders using the leading eigenfunctions of the transfer operator in the case of multiple metastable states is presented in Section~\ref{sec:parametrzing_eigenfunc_multiple_states}. In the Results section, applications to toy two dimensional systems and alanine dipeptide are reported. More precisely, we first illustrate in Section~\ref{sec:interpretation_numerics} that autoencoders allow to compute conditional expectations. We next demonstrate in Section~\ref{sec:numerics_modifying_proba_dist} that the quality of the autoencoder model can be improved by incorporating more information on saddle-points in the dataset. We then present in Section~\ref{sec:numerics_multiple_paths} results for situations where there are several transition paths, with an autoencoder model made of a single encoder and several decoders. We finally discuss in Section~\ref{sec:ad} the use of regularization terms based on transfer operators for alanine dipeptide. We conclude with some perspectives of this work. 

Let us emphasize that our baseline assumption in all this work is that the dataset at hand is sufficiently rich to correctly describe all the modes of the probability measure, and sometimes maybe also the transition zones. The latter situation is somewhat unrealistic for actual systems of interest, but can be achieved by an iterative procedure where one cycles between exploration phases using a dynamics biased by the free energy, and a learning phase to update the collective variables (as done in various works; see references in Section~\ref{sec:presentation_AEs}).

\section{Methods}

%

\subsection{Presentation of autoencoders}
\label{sec:presentation_AEs}

Autoencoders have been considered early on in the neural network literature, where they were also called auto-associative neural networks~\cite{Kramer91}. The models considered in these early works correspond to what is currently known as bottleneck autoencoders, and were rather shallow. Deep autoencoders were used later on with the advent of modern computing architectures~\cite{HS06}. Bottleneck autoencoders were initially introduced to provide a nonlinear generalization of PCA, as it was shown that the linear neural networks obtained by minimizing the mean-square error were essentially equivalent to PCA~\cite{BK88,BH89}. We refer for instance to Section~12.4.2 of Ref.~\citenum{Bishop06}, Chapter~14 of Ref.~\citenum{GBC16} and Section~20.3 of Ref.~\citenum{Murphy22} for textbook presentations of autoencoders, which include some historical perspectives, and discuss the many variations and extensions that were considered.

Autoencoders fall into the class of unsupervised machine learning methods. For a given input data point~$x \in \cX \subset \mathbb{R}^D$, we denote by $f_\theta(x)$ the prediction of the neural network. The parameters~$\theta \in \Theta$ are chosen to minimize the loss function
\begin{equation}
\label{eq:population_loss}
\sL(\theta) = \E[\ell(X,f_\theta(X))],
\end{equation}
where $\ell$ is a given elementary loss function, and the expectation is over the realizations of the input data~$X$ distributed according to some probability measure denoted by~$\mu$. By default, in the sequel, expectations are always taken with respect to the distribution of the data. The typical choice for the latter elementary loss function is the square loss~$\ell(x,y) = \|x-y\|^2$, although other choices, such as the mean absolute loss~$\ell(x,y) = \|x-y\|$ could also be considered in order to give less weight to outliers. In practice, the population loss~$\sL$ is replaced by the empirical loss over a training set of~$\Nd$ given input data points~$\{x^1,\dots,x^{\Nd}\}$:
\[
\sLh(\theta) = \frac{1}{\Nd} \sum_{n=1}^{\Nd} \ell(x^n,f_\theta(x^n)).
\]

\paragraph{Families of autoencoders.}
There are various classes of autoencoders. It is useful to distinguish between undercomplete and overcomplete models. Undercomplete models have a limited capacity that prevents them from achieving zero training loss. The most prominent example is provided by bottleneck autoencoders for which
\begin{equation}
\label{eq:f_AE}
f_\theta = f_{{\rm dec},\theta_2} \circ f_{{\rm enc},\theta_1},
\end{equation}
where the parameters~$\theta = (\theta_1,\theta_2)$ have been decomposed into parameters used in the encoder and decoder parts, respectively (see Figure~\ref{AE_notation} below), and~$\circ$ is the composition operator, namely
\[
h_2 \circ h_1(x) = h_2(h_1(x)).
\]

\begin{figure}[!ht]
	\centering
	\includegraphics[width=0.8\textwidth]{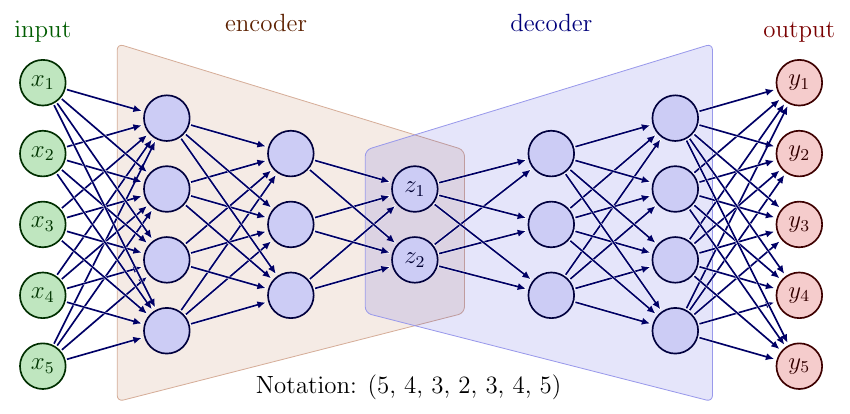}
	\caption{Schematic representation of a symmetric autoencoder. Blue neurons correspond to hidden layers, while the input and output layers are respectively in green and red. The encoder and decoder are respectively referred to as~(5, 4, 3, 2) and~(2, 3, 4, 5). Activation functions are hyperbolic tangents, except for the bottleneck and output layers, for which linear activation functions are considered in order not to restrict the range of values.}
	\label{AE_notation}
\end{figure}

The limitation in the capacity of the autoencoder arises from the fact that the encoding function~$f_{{\rm enc},\theta_1}$ has values in a latent space~$\cZ \subset \mathbb{R}^d$ of dimension~$d$ strictly smaller than the dimension~$D$ of the input/output space~$\cX$, usually much smaller in fact. Overcomplete models can on the other hand achieve zero training error. These models are of course useless as such since they would simply copy the input without extracting the salient features explaining the data at hand. This is why some regularization process should be considered to limit the capacity of the neural network. Regularization is however also useful for undercomplete models. Standard examples of regularization mechanisms include:
\begin{itemize}
	\item resorting to regularization strategies commonly used for neural networks in general, in particular early stopping, dropout, and standard weight decay to name a few options (see for instance Section~13.5 of Ref.~\citenum{Murphy22} and Chapter~7 of Ref.~\citenum{GBC16} for a more precise presentation of these options);
	\item sparse autoencoders where a penalization term is added to the loss function to prevent too many neurons to be active~\cite{Ng11};
	\item denoising autoencoders~\cite{VLLBM10}, where outputs should be predicted from inputs corrupted by some noise, which forces networks to learn the structure of the distribution of the data~\cite{AB14};
	\item contractive autoencoders, where the Jacobian of the encoder is penalized in order to ensure smoother variations of this function, and limit its sensitivity to small variations in the input~\cite{RVMGB11};
	\item variational autoencoders (VAEs)~\cite{KW14,KW19,GSVBS20} can also be interpreted as some regularized version of the usual autoencoder framework.
\end{itemize}
Other variations/extensions of autoencoders were also considered for more specific purposes, and are still studied, in particular for manifold learning, where the preservation of neighborhood relationships and/or geometric information in the dimensionality reduction process are important~\cite{JSGSS15,DMWM20,LYSP22}.

In this work, we will consider bottleneck autoencoders only, as these are the most relevant for molecular dynamics. Overcomplete models in particular are not directly interesting in terms of dimensionality reduction.

\paragraph{Autoencoders architectures.}
Autoencoders are often symmetric in their structures. In some cases, tied weights are being used, \emph{i.e.} the weights~$\theta_2$ are the transpose of the weights~$\theta_1$ when writing the prediction function as~\eqref{eq:f_AE}. This choice reproduces at the level of autoencoders the symmetric structure of PCA, where the decoder matrix is the transpose of the encoder matrix (see~\eqref{pca-objective} below and the discussion around this equation). In this case, a regularization on the encoder part only, as in contractive autoencoders, in fact also regularizes the decoder part. However, there is no particular motivation to use symmetric architectures, and various works, such as Ref.~\citenum{SMGY16}, studied the impact of an asymmetric architecture on the performance of the model.

Another important consideration, more specific to bottleneck autoencoders, is the choice of the bottleneck dimension. In PCA, the number of dimensions to retain usually corresponds to some ``knee'' in the plot of eigenvalues of the empirical covariance matrix. Similar plots can be performed for bottleneck autoencoders, for which the reconstruction error can be reported as a function of the bottleneck dimension (the remainder of the architecture being fixed). The optimal dimension should ideally coincide with the intrinsic dimension of the data set in manifold learning~\cite{DW19}, which can be quantified using the Frechet inception distance (see for instance the method described in Ref.~\citenum{IBE22} in the context of VAEs).

\paragraph{Autoencoders in molecular dynamics.}
In molecular dynamics, only bottleneck autoencoders are considered. In the remainder of this work, we simply refer to them as "autoencoders". They have been used for a few years now to find CVs, by using the encoder part as a CV for dimensionality reduction. In this situation, the decoder is not useful as such, but is still required to learn a good encoder.  

Autoencoders were first considered in a static framework, to extract information from samples distributed according to the Boltzmann--Gibbs distribution of the system. In fact, since the systems under consideration are metastable, it is often difficult to directly obtain a good sampling of the target distribution, and one should therefore turn to an iterative procedure alternating between a free energy biased sampling to obtain new data points, and a subsequent update of the collective variable given by the encoder. Such strategies were considered for applications in biophysics and materials science in unsupervised~\cite{CF18,CTF18,BGLS22,CLFFCSC22,BGSMNM22} and semi-supervised~\cite{BBMLSG23} settings, possibly also with variational autoencoders~\cite{rave,rave2019}. Autoencoders are also useful for coarse-graining, both to find the coarse-grained description and the effective forces.\cite{WGB19}

More recently, autoencoders were also considered to learn the dynamics of the systems under consideration. Besides methods for generic dynamical systems~\cite{CLKB19}, dedicated approaches were introduced for the dynamics considered in molecular simulation. Examples include time-lagged autoencoders~\cite{WN18,CSF19_TAE} and time-lagged variational autoencoders~\cite{HWSSHP18}; VAMPnets and their subsequent extensions and modifications~\cite{MPWN18,WMPN18,CSF19,SCF20}, which can be seen as versions of variational approaches to conformational dynamics~\cite{NKPMN14,KNKWKSN18} using a basis of functions represented by neural networks (in fact, only their encoder parts), instead of using linear functions as in tICA~\cite{MS94,NF13}; simultaneous reconstruction of the probability measure and learning of the committor function~\cite{FAB21,machine-guided-path-sampling}; approaches based on recurrent neural networks, in particular long short term memory networks~\cite{HS97}, used for instance in Ref.~\citenum{TKT20}, or mixture density networks~\cite{VZPK22}. All these methods differ in the details of the architecture and choice of loss function (e.g. time-lagged reconstruction loss for time-lagged autoencoders vs. autocorrelation of slow modes for VAMPnets and their extensions/modifications). 

It is often the case that no explicit regularization is introduced as the networks under consideration are rather shallow and narrow. Some works however explicitly introduce such regularization terms: for instance, an elastic net penalization is considered in Refs.~\citenum{KCL22,KL22}, and dropout plus weight decay in Ref.~\citenum{MPWN18}. In addition, it is sometimes suggested to add physically motivated terms to the loss function, for instance to favor the exploration of new metastable modes~\cite{KCL22,KL22}, or find minimum energy paths~\cite{RBGMM22}.

\begin{remark}
  \label{rmk:measure_is_known}
  In contrast to applications of ML for which the distribution of the data is typically unknown (see for instance Ref.~\citenum{AB14} where it is shown how to estimate the gradient of the log-density using autoencoders), an important difference in the use of autoencoders in MD is that the distribution of the training data generally has a density known up to a multiplicative constant (typically, the Boltzmann--Gibbs distribution). Although standard losses to train autoencoders do not make use of the expression of the density of the data, it might be possible to leverage this information in the MD context to find better encodings/decodings than in standard ML contexts. 
\end{remark}

\subsection{Interpretations of the loss function}
\label{sec:interpretation_loss}

We discuss in this section various reformulations and reinterpretations of the loss function~\eqref{eq:population_loss} for bottleneck autoencoders~\eqref{eq:f_AE} when the loss function is the square loss, and discuss in particular the relationship with principal curves/manifolds~\cite{HS89,Tib92}.  

\paragraph{Three viewpoints on the loss function.}
We consider an ideal setting where we minimize upon all measurable functions~$\fenc: \cX \to \cZ$ and~$\fdec: \cZ \to \cX$. We denote by~$\cFenc$ and~$\cFdec$ the sets of measurable functions from~$\cX$ to~$\cZ$ and from~$\cZ$ to~$\cX$, respectively; and by~$\cF$ the set of measurable functions obtained by composing functions of~$\cFenc$ with functions of~$\cFdec$:
\[
\cF = \left\{ f = \fdec \circ \fenc, \ \fenc \in \cFenc, \ \fdec \in \cFdec\right\}.
\]
Minimizing the reconstruction error over the set of functions in~$\cF$ can be then rewritten as
\begin{equation}
\label{eq:reconstruction_error_ideal}
\inf_{f \in \cF} \E\left[\left\|X-f(X)\right\|^2\right]. 
\end{equation}
Note that we do not consider a regularization term here, so that overfitting may occur in practice (for instance, even with~$\cZ$ of dimension~1, $\fdec$ can parametrize a space-filling curve).

As discussed in Ref.~\citenum{Gerber21} (which complements Ref.~\citenum{GW13} which was already hinting at autoencoders), the unsupervised least-square problem~\eqref{eq:reconstruction_error_ideal} can be thought of in various ways. In particular, there is some duality in the way the minimization over~$f \in \cF$ is performed, as one can decide to either
\begin{enumerate}[(i)]
	\item simultaneously minimize over~$\fenc$ and~$\fdec$, which is the standard way to proceed when training neural networks;
	\item minimize first over the encoder part, which allows to reformulate the minimization as the well-known problem of finding principal manifolds;
	\item minimize first over the decoder part, which is natural when thinking of the reconstruction error as some total variance to be decomposed using a conditioning on the values of the encoder.
\end{enumerate}
We discuss in the remainder of this section option~(ii), leaving the discussion of~(iii) to Section~\ref{sec:cond_exp}.

The chosen numerical approach has a natural impact on the topology of the networks which are considered: in situation~(i), encoders and decoders are treated on an equal footing, and it is therefore natural to consider them to be of a similar complexity; whereas options~(ii) and~(iii) suggest to consider asymmetric autoencoders. For instance, in option~(iii), the minimization over the decoder part, which is performed first, could be done more carefully, with more expressive networks in order to better approximate the optimal decoder for a given encoder. We numerically investigate this point in Section~\ref{sec:interpretation_numerics}.

\paragraph{Principal manifold reformulation.} We start by minimizing the reconstruction error~\eqref{eq:reconstruction_error_ideal} over the encoder part for a given decoder:
\begin{align}
\inf_{f \in \cF} \E\left[\left\|X-f(X)\right\|^2\right] & = \inf_{\fdec \in \cFdec} \left\{ \inf_{\fenc \in \cFenc} \E\left[\left\|X - \fdec \circ \fenc(X)\right\|^2\right] \right\} \notag \\
& =  \inf_{\fdec \in \cFdec} \E\left[\left\| X - \fdec \circ h^\star_{\fdec}(X)\right\|^2\right], \label{eq:encoder_first}
\end{align}
where the optimal encoder~$h^\star_{\fdec}:\cX \to \cZ$ for a given decoder~$\fdec:\cZ\to\cX$ is defined pointwise as
\[
h^\star_{\fdec}(x) \in \mathop{\mathrm{argmin}}_{z \in \cZ} \|x-\fdec(z) \|, 
\]
provided that this minimization problem admits a solution. When~$\fdec$ is smooth and has an invertible Jacobian, the principal manifold is then the set~$\fdec(\cZ) \subset \cX$. For any~$x \in \cX$, $h^\star_{\fdec}(x) \in \cZ$ gives the coordinates in the latent space of the projection of~$x$ on the principal manifold (the so-called projection index). 

The reformulation of the minimization problem in terms of the decoder function leads to a minimization problem similar to the one encountered when searching for principal curves and manifolds. These mathematical concepts generalize in some sense PCA to curves and surfaces rather than lines and hyperplanes, as already discussed in the work introducing principal curves~\cite{HS89}. The minimization problems associated with finding principal manifolds are in general difficult to solve as these manifolds correspond to saddle-points of the loss functional~\cite{DS96}. In practice, this means that it is possible to move away from a critical point without increasing the test loss. This leads to overfitting issues and prevents the use of traditional cross-validation procedures to tune regularization hyperparameters. 

In the small temperature regime, principal curves, associated with decoders starting from a one-dimensional space~$\cZ$, reduce to minimum energy paths when the measure under consideration is the Boltzmann--Gibbs measure; see the discussion in Ref.~\citenum{VEV09}. Another perspective on this statement is provided at the end of Section~\ref{sec:cond_exp}, using the interpretation relying on conditional expectations.

\subsection{Reformulating autoencoders with conditional expectations}
\label{sec:cond_exp}

We discuss here how to reformulate the training of autoencoders with conditional expectations, and provide alternative interpretations to the reconstruction error. As mentioned in Section~\ref{sec:cond_exp_others}, such reformulations can in fact be considered for other unsupervised machine learning methods such as principal component analysis and clustering.


In contrast to~\eqref{eq:encoder_first}, we minimize here the reconstruction error~\eqref{eq:reconstruction_error_ideal} by first minimizing over the decoder part for a given encoding function, as already considered in Ref.~\citenum{GTW09}. This approach is natural in molecular dynamics, as it is reminiscent of free energy computations~\cite{CP07,LRS10} where average quantities are computed for a fixed value of the collective variable~$\fenc$. From a mathematical viewpoint, it corresponds to introducing conditional averages associated with fixed values of the encoder. 

\paragraph{Rewriting the reconstruction error with conditional expectations.}
The loss function for unsupervised least-squares can be rewritten as
\begin{align}
\inf_{f \in \cF} \E\left[\left\|X-f(X)\right\|^2\right]
& = \inf_{\fenc \in \cFenc} \left\{ \inf_{\fdec \in \cFdec} \E\left[\left\|X - \fdec \circ \fenc(X)\right\|^2\right] \right\} \notag \\
& = \inf_{\fenc \in \cFenc} \E\left[\left\|X - g^\star_{\fenc} \circ \fenc(X)\right\|^2\right], \label{eq:reconstruction_error_with_Bayes_predictor}
\end{align}
where the ideal decoder~$g^\star_{\fenc}$ for a given encoder~$\fenc$ is the Bayes predictor associated with the least square regression problem (see Section~2.2.3 of Ref.~\citenum{Bach23}):
\begin{equation}
\label{eq:Bayes_predictor_dec}
g^\star_{\fenc}(z) = \E[\,X\,|\,\fenc(X) = z].
\end{equation}
Let us recall that, in all these expressions, expectations are taken with respect to the probability distribution~$\mu$ of the input data (which is not necessarily the Boltzmann--Gibbs distribution). Equations~\eqref{eq:reconstruction_error_with_Bayes_predictor}-\eqref{eq:Bayes_predictor_dec} show that the question of finding the best autoencoder can be reduced to finding the best encoding function, provided that one is able to compute good approximations of the conditional expectation.

\paragraph{Derivation of~\eqref{eq:Bayes_predictor_dec}.}
For completeness, let us recall here the derivation of~\eqref{eq:Bayes_predictor_dec}, as this derivation also allows to reformulate the minimization problem under consideration as some variance maximization for conditional expectations. We start by introducing the quantity~$g^\star_{\fenc} \circ \fenc(x)$ in the reconstruction error: for any~$\fdec \in \cFdec$ and~$\fenc \in \cFenc$,
\begin{align}
& \E\left[\left\|X - \fdec \circ \fenc(X)\right\|^2\right]
= \E\left[\left\| \left[X - g^\star_{\fenc} \circ \fenc(X)\right] + \left[g^\star_{\fenc} \circ \fenc(X) - \fdec \circ \fenc(X)\right]\right\|^2\right] \notag \\
& \qquad = \E\left[\left\|X - g^\star_{\fenc} \circ \fenc(X)\right\|^2\right] + \E\left[\left\| g^\star_{\fenc} \circ \fenc(X) - \fdec \circ \fenc(X)\right\|^2\right], \label{eq:decomposition_reconstruction_error}
\end{align}
where we used the following identity, obtained by conditioning on the values of the random variable~$Z=\fenc(X)$:
\[
\begin{aligned}
& \E\left[\left(X - g^\star_{\fenc} \circ \fenc(X)\right) \cdot \left(g^\star_{\fenc} \circ \fenc(X) - \fdec \circ \fenc(X)\right)\right] \\
& \qquad = \E\left[\left(\E\left[X \middle| Z\right]-g^\star_{\fenc}(Z)\right) \cdot \left(g^\star_{\fenc}(Z)-\fdec(Z)\right) \right] = 0,
\end{aligned}
\]
in view of the definition~\eqref{eq:Bayes_predictor_dec} of~$g^\star_{\fenc}$. It is clear from~\eqref{eq:decomposition_reconstruction_error} that the decoding function which minimizes the reconstruction error for a given encoder~$\fenc \in \cFenc$ is indeed~$g^\star_{\fenc}$, as defined by~\eqref{eq:Bayes_predictor_dec}.

\paragraph{Alternative interpretation of the reconstruction error.}
The reconstruction error~\eqref{eq:reconstruction_error_with_Bayes_predictor} can be reinterpreted in terms of variances. Indeed, on the one hand,
\begin{equation}
\label{eq:recon_error_by_condvar}
\begin{aligned}
\E\left[\left\|X - g^\star_{\fenc} \circ \fenc(X)\right\|^2\right]
& = \E\left[\left\|X - \E\left[X|\fenc(X)\right]\right\|^2\right] \\ & = \E\left[ \E\left( \left. \left\|X - \E\left[X|\fenc(X)\right]\right\|^2 \right| \fenc(X) \right) \right]  \\
& = \E\left[\mathrm{Var}(X|\fenc(X))\right].
\end{aligned}
\end{equation}
On the other hand,
\begin{equation}
\begin{aligned}
\label{eq:total_variance_formula}
\E\left[\left\|X - g^\star_{\fenc} \circ \fenc(X)\right\|^2\right]
& = \E\left[\left\|X - \E\left[X|\fenc(X)\right]\right\|^2\right]
\\ & = \E\left(\|X\|^2\right) - \E\left( \E\left[X|\fenc(X)\right]^2 \right)  \\
& = \mathrm{Var}(X) - \mathrm{Var}\left[ \mathbb{E}(X|\fenc(X)) \right]. 
\end{aligned}
\end{equation}
These two equalities yield the well-known formula for the total variance decomposition, namely~$\mathrm{Var}(X) = \E\left[\mathrm{Var}(X|\fenc(X))\right]+ \mathrm{Var}\left[ \mathbb{E}(X|\fenc(X)) \right]$.

A consequence of~\eqref{eq:total_variance_formula} is that the minimization problem~\eqref{eq:reconstruction_error_with_Bayes_predictor} can be reformulated as the following equivalent maximization problem:
\begin{equation}
\label{eq:interclass_dispersion}
\sup_{\fenc \in \cFenc} \mathrm{Var}\left[ \mathbb{E}(X|\fenc(X)) \right].
\end{equation}
In words, this reformulation translates the equivalence between (the "classes" referring here to the level sets of~$\fenc$) 
\begin{itemize}
\item minimizing the intraclass dispersion~\eqref{eq:reconstruction_error_with_Bayes_predictor}: the distribution of configurations~$x \in \cX$ for a fixed value~$z$ of~$\fenc$ should concentrate around the mean value~$g^\star_{\fenc}(z)$ by having a variance as small as possible;
\item maximizing the interclass dispersion~\eqref{eq:interclass_dispersion}: the values of the conditional averages of~$X$ for fixed values of~$\fenc$ should be as spread out as possible over the range of~$\fenc$.
\end{itemize}

\paragraph{Formal characterization of the optimal encoding function.} A key equality to establish~\eqref{eq:total_variance_formula}, namely 
\[
\E\left[\left\|X - g^\star_{\fenc} \circ \fenc(X)\right\|^2\right] = \E\left[\left\|X\right\|^2\right] - \E\left[\left\|g^\star_{\fenc} \circ \fenc(X)\right\|^2\right],
\]
shows that the minimization of the reconstruction error is equivalent to the maximization of the second moment of the conditional expectation:
\begin{equation}
\label{eq:equivalent_maximization_on_fenc}
\sup_{\fenc \in \cFenc} \E\left[\left\|g^\star_{\fenc} \circ \fenc(X)\right\|^2\right]\,.
\end{equation}
This alternative viewpoint allows to characterize the optimal encoding function~$\fenc$ by some orthogonality condition similar to the self-consistency condition of principal curves, see Section~2 of Ref.~\citenum{GW13}. In fact, it can be formally shown that critical points of~\eqref{eq:equivalent_maximization_on_fenc} satisfy
\begin{equation}
\label{eq:formal_EL_condition_optimal_fenc}
\forall j \in \{1,\dots,d\}, \quad \forall x \in \mathrm{Supp}(\mu), \qquad \left[ x - g^\star_{\fenc}(\fenc(x)) \right]^\top \partial_{z_j} g^\star_{\fenc}(\fenc(x)) = 0,
\end{equation}
where~$\mathrm{Supp}(\mu)$ is the support of the probability measure~$\mu$. The derivation of this condition, which can be read in Appendix~\ref{sec:derivation_formal_EL_condition_optimal_fenc}, can be seen as a variation of derivations of optimality conditions for principal curves, as written already in Ref.~\citenum{HS89}; see also Ref.~\citenum{GW13} where~\eqref{eq:formal_EL_condition_optimal_fenc} is used to construct a new objective function to minimize in order to find~$\fenc$. From a technical viewpoint, an originality of our approach to obtain~\eqref{eq:formal_EL_condition_optimal_fenc} is that we rely on the co-area formula~\cite{EG92,AFP00} together with the use of weak derivatives; see Appendix~\ref{sec:derivation_formal_EL_condition_optimal_fenc}.

An interesting implication of~\eqref{eq:formal_EL_condition_optimal_fenc} is that the intersection of~$\mathrm{Supp}(\mu)$ and the submanifold
\begin{equation}
\label{eq:Sigma_z}
\Sigma_z = \fenc^{-1}\{z\} = \left\{ x \in \cX \, \middle| \, \fenc(x) = z \right\}
\end{equation}
is in fact included in the $(D-d)$-dimensional hyperplane containing the point~$g^\star_{\fenc}(z)$ and orthogonal to the vectors~$\partial_{z_1} g^\star_{\fenc}(z),\dots,\partial_{z_d} g^\star_{\fenc}(z)$ (recalling that~$\cX$ and~$\cZ$ have dimensions~$D$ and~$d$, respectively). As these hyperplanes generally have a non-empty intersection, finding a regular function~$\fenc$ which satisfies~\eqref{eq:formal_EL_condition_optimal_fenc} is only possible for distributions~$\mu$ which have a support sufficiently concentrated around the principal manifold. The condition~\eqref{eq:formal_EL_condition_optimal_fenc} can be generalized when several principal manifolds are considered, see Section~\ref{sec:multiple_path_2_state} below. The issue of having hyperplanes intersecting can be seen as the counterpart in the context we consider here of the concept of ambiguity points for principal curves~\cite{HS89}. Let us mention that a condition similar to~\eqref{eq:formal_EL_condition_optimal_fenc} is derived in Ref.~\citenum{VEV09} to reinterpret the finite-temperature string method using principal curves, followed by a discussion on hyperplanes corresponding to isosurfaces of fixed values of the assignment function (the counterpart of the encoder function in our context).

\paragraph{Minimum energy paths and the small temperature limit.}
We finally study the limit~$\beta\to+\infty$ in~\eqref{eq:formal_EL_condition_optimal_fenc}, when the probability measure~$\mu$ under consideration is the Boltzmann--Gibbs measure with a density proportional to~$\rme^{-\beta V(x)}$, with~$V$ the potential energy function.

Transition paths are one-dimensional curves. It would be possible to work in a latent space~$\cZ$ of dimension~$d \geq 2$, and to extract some one dimensional path relating the initial and end configurations in this space. Here, for simplicity, we restrict ourselves to a one-dimensional latent space~$\cZ$, \emph{i.e.}~$d=1$.

We consider two local minima~$x_A$ and~$x_B$ of the potential energy function~$V$, located respectively on~$\Sigma_{z_A}$ and~$\Sigma_{z_B}$ with $z_A = \fenc(x_A)$ and~$z_B = \fenc(x_B)$ (assuming~$z_A \leq z_B$ without loss of generality). Similarly to the discussion in Ref.~\citenum{VEV09} for principal curves, it can formally be shown in the low temperature limit that the decoder path~$\{ g^\star_{\fenc}(z) \}_{z \in [z_A,z_B]}$ converges to a minimum energy path (MEP). A MEP~$\{ \gamma(z) \}_{z \in [z_A,z_B]}$ between~$x_A$ and~$x_B$ is characterized by the boundary conditions~$\gamma(z_A) = x_A$ and~$\gamma(z_B) = x_B$, and the fact that~$\gamma'(z)$ is collinear to~$\nabla V(\gamma(z))$ for~$z \in (z_A,z_B)$ (except at critical points along the path); see for instance Ref.~\citenum{HJJ02} as well as Ref.~\citenum{LCO22} for a rigorous mathematical formulation.

In view of~\eqref{eq:Bayes_predictor_dec} and~\eqref{eq:analytical_formula_g_star_fenc} in Appendix~\ref{sec:derivation_formal_EL_condition_optimal_fenc}, the conditional expectation concentrates on a minimum of~$V$ on~$\Sigma_z$ the limit~$\beta \to +\infty$. More precisely, recalling the definition~\eqref{eq:Sigma_z} of~$\Sigma_z$,
\begin{equation}
\label{eq:g_star_infty}
\forall z \in (z_A,z_B), \qquad g^\star_{\fenc}(z) \xrightarrow[\beta \to +\infty]{} g^\infty_{\fenc}(z) \in \mathop{\mathrm{argmin}}_{x \in \Sigma_z} V(x).  
\end{equation}
We assume in the sequel that the minimum in~\eqref{eq:g_star_infty} is well defined (\emph{i.e.} it exists and is unique). Then, \eqref{eq:g_star_infty} implies that
\begin{equation}
\label{eq:t_z_nabla}
\forall z \in (z_A,z_B), \quad \forall t_z \in T\Sigma_{z}\left(g^\infty_{\fenc}(z)\right), \qquad t_z^\top \nabla V\left(g^\infty_{\fenc}(z)\right) = 0,
\end{equation}
where we denote by~$T\Sigma_z(x)$ the tangent plane to~$\Sigma_z$ at~$x \in \Sigma_z$. 

Moreover, for a perfectly converged autoencoder model, we obtain, by passing in~\eqref{eq:formal_EL_condition_optimal_fenc} to the limit~$x \rightarrow g^\star_{\fenc}(z)$ for~$x \in \Sigma_z$: 
\begin{equation}
\label{eq:infinitesimal_condition_g_star_prime}
\forall z \in \cZ, \quad \forall\, t_z \in T\Sigma_z\left(g^\star_{\fenc}(z)\right), \qquad t_z^\top \left(g^\star_{\fenc}\right)'(z) = 0\,,
\end{equation}
where $(g^\star_{\fenc})'(z)$ is the derivative of the decoder. By passing to the limit~$\beta \to +\infty$ in the latter condition (assuming that the derivatives in~$z$ and the limit~$\beta\to+\infty$ can be exchanged), it follows that
\[
\forall z \in (z_A,z_B), \quad \forall t_z \in T\Sigma_{z}\left(g^\infty_{\fenc}(z)\right), \qquad t_z^\top  \left( g^\infty_{\fenc}\right)'(z) = 0.
\]
The comparison between the latter condition and~\eqref{eq:t_z_nabla} shows that~$\nabla V(g^\infty_{\fenc}(z))$ is collinear to~$(g^\infty_{\fenc})'(z)$ (provided~~$\nabla V(g^\infty_{\fenc}(z)) \neq 0$), which shows that the curve~$\{ g^\infty_{\fenc}(z) \}_{s \in [z_A,z_B]}$ is a MEP. 

\begin{remark}
  The analysis in this section shows that the autoencoder learns the principal manifold of the underlying data distribution, which collapses to the minimal energy path in the small temperature limit. Many other numerical approaches, such as the string method~\cite{VEV09}, also aim at computing the minimum energy path. Besides, let us mention that this connection between autoencoders and manifold learning methods can also be leveraged to construct generative methods to sample new configurations from the probability measure of the data. This can for instance be done with tools from manifold learning to propose stochastic perturbations along the tangent space of the learned manifold,\cite{RBDV12,BAS12} as discussed in Section 9.2 of Ref.~\citenum{BCV13}.
\end{remark}

\subsection{From autoencoders to clustering and PCA}
\label{sec:cond_exp_others}

We have seen that the unsupervised least-square reconstruction error can be reformulated in terms of conditional expectations as follows (see~\eqref{eq:reconstruction_error_with_Bayes_predictor} and~\eqref{eq:Bayes_predictor_dec}):
\begin{equation}
\label{dim-reduction-general-xi}
\inf_{\fenc \in \cFenc} \E\left[ \Big\|X-\E\left(X\,\middle|\,\fenc(X)\right)\Big\|^2\right],
\end{equation}
where the random variable~$X$ is distributed according to~$\mu$. We discuss in this section how two other unsupervised learning methods, $K$-means clustering and PCA, can be understood in terms of a similar minimization problem involving conditional expectations. Such interpretations have already been discussed in the literature (see for instance Section~4 of Ref.~\citenum{Fischer14}), but are of interest for discussions in later sections, in particular Section~\ref{sec:multiple_path_2_state} where we combine autoencoders with ideas from clustering.

\paragraph{$K$-means clustering.}
For simplicity of exposition and to be closer to traditional expositions in machine learning textbooks, we discuss here a reformulation of clustering methods for sample data, and not at the level of probability distributions (corresponding to so-called population losses, as we have considered until now). We therefore introduce a sample data set~$\cD_{\Nd} = \{ x^1,\dots,x^{\Nd} \} \subset \cX$ of cardinality~$\Nd$, and denote by~$\widehat{\mu}_{\Nd}$ the empirical probability measure
\[
\widehat{\mu}_{\Nd} = \frac{1}{\Nd} \sum_{n=1}^{\Nd} \delta_{x^n}.
\]
The data is partitioned into~$K \leq \Nd$ classes indexed by an integer~$1 \leq k \leq K$, using an assignment function~$\fass:\cD_{\Nd} \to \{1,\dots,K\}$. For a given assignment function, we denote by 
\[
\sC_k = \left\{x \in \cD_{\Nd} \, \middle| \, \fass(x) = k \right\} = \fass^{-1}\{k\}
\]
the various clusters identified by the clustering method.  The cardinality of~$\sC_k$ is denoted by~$|\sC_k|$. With this notation, the objective function of the~$K$-means algorithm reads (see for instance Section~21.3 of Ref.~\citenum{Murphy22} or Section~22.2 of Ref.~\citenum{SSBD14})
\begin{equation}
\label{k-means-loss}
\sum_{k=1}^K \sum_{i \in \sC_k} \left\| x^i - \frac{1}{|\sC_k|} \sum_{j \in \sC_k} x^j \right\|^2.
\end{equation}
It is easy to see that~\eqref{k-means-loss} is indeed of the form~\eqref{dim-reduction-general-xi} with expectations taken with respect to~$\widehat{\mu}_{\Nd}$ and~$\fenc$ replaced by~$\fass$ since 
\begin{equation*}
\forall k \in \{1,\dots,K\}, \qquad \E_{\widehat{\mu}_{\Nd}}(X \, | \, \fass(X)=k) = \frac{1}{|\sC_k|} \sum_{j \in \sC_k} x^j.
\end{equation*}
The latter two equalities show that autoencoders whose encoders have values in the discrete space~$\{1,\dots,K\}$ aim at solving the same minimization problem as~$K$-means, the assignment function being the encoder function.

\paragraph{Principal component analysis.}
We denote by~$K \in \{1,\dots,D-1\}$ the number of dimensions kept in the dimensionality reduction process performed by PCA. For a given sample data set~$\cD_{\Nd}$, we denote by
\[
\overline{x}_{\Nd} = \frac{1}{\Nd}\sum_{n=1}^{\Nd} x^n \in \mathbb{R}^D
\]
the empirical mean of the data, and by 
\[
\widehat{\Sigma}_{\Nd} = \frac{1}{\Nd} \sum_{n=1}^{\Nd} \left(x^n-\overline{x}_{\Nd}\right) \left(x^n-\overline{x}_{\Nd}\right)^\top \in \mathbb{R}^{D \times D} 
\]
the empirical covariance matrix. The PCA algorithm performs a dimensionality reduction by computing normalized eigenvectors~$\widehat{u}_1, \cdots, \widehat{u}_K$ associated with the~$K$ largest eigenvalues of the empirical covariance matrix, and proposing the following reconstruction of a test point~$x \in \mathbb{R}^D$:
\begin{equation}
\label{eq:projection_reconstruction_PCA}
\widetilde{x} = \overline{x}_{\Nd} + \sum_{k=1}^K \left[\widehat{u}_k^\top \left(x-\overline{x}_{\Nd}\right)\right]\widehat{u}_k = \overline{x}_{\Nd} + \widehat{U}\widehat{U}^\top \left(x-\overline{x}_{\Nd}\right),
\end{equation}
where the matrix~$\widehat{U} \in \mathbb{R}^{D \times K}$ has columns~$\widehat{u}_1, \cdots, \widehat{u}_K$. The maximal variance formulation of PCA is equivalent to a minimum reconstruction error formulation. In fact, as made precise in Section~23.1 of Ref.~\citenum{SSBD14} for instance, the above matrix~$\widehat{U}$ is a solution to the minimization problem
\begin{equation}
\label{pca-objective}
\min_{\substack{W\in \mathbb{R}^{D\times K} \\ W^\top W = \mathrm{I}_K}} \E_{\widehat{\mu}_{\Nd}}\left( \left\|X - WW^\top X\right\|^2 \right).
\end{equation}
This formulation is the one relevant to write PCA in the autoencoder framework~\cite{BK88,BH89}.

PCA can be reformulated as the minimization task~\eqref{dim-reduction-general-xi} in situations when the affine transformation in~\eqref{eq:projection_reconstruction_PCA} can be interpreted in terms of some conditioning. This is the case when~$X$ is a Gaussian random variable and the empirical loss in~\eqref{pca-objective} is replaced by a population loss. Without loss of generality, to simplify the notation, the Gaussian distribution can be assumed to be centered (upon subtracting the mean of the distribution to the random variable under consideration). To state the result, we introduce the set~$\cFred$ of linear orthogonal projections:
\begin{equation}
\label{function-space-pca}
\cFred = \left\{\fred:\left\{\begin{aligned}\mathbb{R}^D& \rightarrow \mathbb{R}^K \\ x & \mapsto W^\top x \end{aligned}\right. \ \mbox{for some}~W\in \mathbb{R}^{D\times K} \mbox{ such that}~ W^\top W = \mathrm{I}_K\right\}. 
\end{equation}

\begin{lemma}
  \label{lem:cond_exp_PCA}
  Assume that~$X$ is a Gaussian random variable with mean~0. Then, the PCA minimization problem~\eqref{pca-objective} is equivalent to
  \begin{equation}
    \label{eq:dim-reduction-pca}
    \min_{\fred \in \cFred} \E\left[ \Big\| X-\E\left(X\,\middle|\,\fred(X)\right)\Big\|^2\right].
  \end{equation}
\end{lemma}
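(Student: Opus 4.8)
The plan is to reduce both minimization problems to one and the same variational problem over $K$-dimensional subspaces, namely the maximization of $\mathrm{tr}(P_V\Sigma)$, where $\Sigma=\E[XX^\top]$ is the covariance of $X$ and $P_V$ is the orthogonal projector onto a $K$-dimensional subspace $V\subset\mathbb{R}^D$. Throughout I would assume $\Sigma$ to be positive definite, the degenerate case requiring only a restriction to the support of $\mu$ (see the caveat below).

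First I would compute the Bayes decoder $\E[X\mid\fred(X)]=\E[X\mid W^\top X]$ explicitly. The cleanest route is to whiten: write $X=\Sigma^{1/2}Z$ with $Z\sim\mathcal{N}(0,\mathrm{I}_D)$ and set $M=\Sigma^{1/2}W$, so that $W^\top X=M^\top Z$. Decomposing $Z=P_MZ+(\mathrm{I}_D-P_M)Z$, with $P_M=M(M^\top M)^{-1}M^\top$ the orthogonal projector onto $\mathrm{Range}(M)$, the component $(\mathrm{I}_D-P_M)Z$ is independent of $M^\top Z$ (orthogonal components of a standard Gaussian are independent, and $M^\top(\mathrm{I}_D-P_M)=0$), while $P_MZ$ is a deterministic function of $M^\top Z$ (since $M^\top$ is injective on the $K$-dimensional space $\mathrm{Range}(M)$). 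Hence $\E[Z\mid M^\top Z]=P_MZ$, and multiplying back by $\Sigma^{1/2}$ gives $\E[X\mid W^\top X]=\Sigma W(W^\top\Sigma W)^{-1}W^\top X$, where $W^\top\Sigma W$ is invertible because $W$ has full column rank and $\Sigma\succ0$.

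Next I would evaluate both objectives as traces. For the PCA objective~\eqref{pca-objective}, using $WW^\top=P_{\mathrm{Range}(W)}$ and cyclicity of the trace, $\E[\|X-WW^\top X\|^2]=\mathrm{tr}(\Sigma)-\mathrm{tr}(P_{\mathrm{Range}(W)}\Sigma)$. For the conditional objective~\eqref{eq:dim-reduction-pca}, substituting the formula above and working in the whitened variables yields $\E[\|X-\E[X\mid W^\top X]\|^2]=\E[\|\Sigma^{1/2}(\mathrm{I}_D-P_M)Z\|^2]=\mathrm{tr}((\mathrm{I}_D-P_M)\Sigma)=\mathrm{tr}(\Sigma)-\mathrm{tr}(P_M\Sigma)$, where $P_M$ projects onto $\mathrm{Range}(M)=\Sigma^{1/2}\mathrm{Range}(W)$. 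Thus the PCA problem amounts to maximizing $\mathrm{tr}(P_V\Sigma)$ over $V=\mathrm{Range}(W)$, whereas the conditional problem maximizes the same quantity over $V=\Sigma^{1/2}\mathrm{Range}(W)$.

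Finally I would note that as $W$ ranges over matrices with $W^\top W=\mathrm{I}_K$, the subspace $\mathrm{Range}(W)$ ranges over all $K$-dimensional subspaces, and since $\Sigma^{1/2}$ is invertible it induces a bijection on the Grassmannian, so $\Sigma^{1/2}\mathrm{Range}(W)$ does too. Both problems are therefore equivalent to $\max_V\mathrm{tr}(P_V\Sigma)$, whose value is $\sum_{i=1}^K\lambda_i$ (the $K$ largest eigenvalues of $\Sigma$), attained precisely when $V$ is a top-$K$ eigenspace of $\Sigma$, by Ky Fan's theorem. Both minima thus equal $\sum_{i>K}\lambda_i$; and since a top-$K$ eigenspace of $\Sigma$ is $\Sigma^{1/2}$-invariant, the optimal $\mathrm{Range}(W)$ is that same eigenspace in both problems, so the minimizers coincide (up to the gauge freedom $W\mapsto WR$, $R$ orthogonal, leaving both objectives unchanged). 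I expect the main obstacle to be bookkeeping rather than depth: one must secure invertibility of $W^\top\Sigma W$ (via $\Sigma\succ0$, or a pseudo-inverse on $\mathrm{supp}(\mu)$ when $\Sigma$ is singular) and state the coincidence of minimizers modulo this orthogonal gauge and modulo eigenvalue multiplicity (when $\lambda_K=\lambda_{K+1}$ the optimal subspace is non-unique, and the two problems then share the same non-unique optimal set). The conceptual core, identifying both losses with the single functional $\mathrm{tr}(P_V\Sigma)$, is straightforward once the whitening computation is in place.
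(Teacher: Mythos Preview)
Your proof is correct and in fact more general than the paper's. The paper takes a shorter route: it completes $W$ to an orthonormal basis $[W\,|\,U]$ with $U^\top W=0$, then argues that $U^\top X$ and $W^\top X$ are independent Gaussians, so $\E[U^\top X\mid W^\top X]=0$ and hence $\E[X\mid W^\top X]=WW^\top X$. This makes the two objective functions in~\eqref{pca-objective} and~\eqref{eq:dim-reduction-pca} \emph{identical at every $W$}, not merely equal at the minimum. That independence step, however, tacitly requires the covariance $\Sigma$ to be a scalar multiple of the identity: for general $\Sigma$, $U^\top W=0$ does not imply $U^\top\Sigma W=0$. Your whitening approach handles arbitrary positive-definite $\Sigma$; the price is that the two objectives are no longer pointwise equal (your conditional loss depends on $W$ through $\Sigma^{1/2}\mathrm{Range}(W)$ rather than $\mathrm{Range}(W)$), so you must argue that both reduce to the same Grassmannian problem $\max_V\mathrm{tr}(P_V\Sigma)$ via two different but surjective parametrizations, and then match the minimizers via $\Sigma^{1/2}$-invariance of the top eigenspace. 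In short: the paper's argument is a one-line identity of objectives under an implicit isotropy hypothesis; yours is a genuine reduction that establishes equality of optimal values and optimizers in full generality.
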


The proof of this statement can be read in Appendix~\ref{sec:proof_lem:cond_exp_PCA}. PCA can thus be seen as an autoencoder method applied to a Gaussian distribution, with linear activation functions. Conversely, one can thus argue that autoencoder methods are nonlinear generalizations of PCA, which was exactly the historical motivation for their introduction.~\cite{BK88,BH89}

\subsection{Improving the dataset in order to better describe the transition}
\label{sec:changing_ref_measure}

As discussed at the end of Section~\ref{sec:cond_exp}, the decoder parametrizes the MEP in the small temperature regime, and thus gives information on the transition between local minima. We discuss here ways to improve the ability of autoencoders to describe such transitions.

The loss function of autoencoders focuses the reconstruction effort on more likely regions under the probability measure~$\mu$ of the data points. For physical systems described by the Boltzmann--Gibbs measure, this means that the reconstruction should be accurate in metastable states around local minima of the potential energy function. In contrast, transition states between metastable regions are typically only scarcely sampled, and hence a large reconstruction error on such configurations may not have a strong impact on the overall reconstruction error. However, from a physical viewpoint, transition states are very important to understand how the system can undergo a transition from one metastable state to another.

One idea to better take into account transition states is to change the reference measure from the Boltzmann--Gibbs measure to a measure putting more mass on regions between metastable states, such as the reactive trajectory measure~\cite{EVE06,LN15} (which is the distribution of configurations sampled by portions of trajectories switching from one metastable state to another). Another idea, discussed below in Section~\ref{sec:multiple_path_2_state}, is to use extra physical information encoded via additional terms in the loss functions (as considered in Ref.~\citenum{RBGMM22} for instance). A final option, not explored here, would be to reweight configurations in the data set by some factor motivated by an importance sampling approach, relying on the known expression for the distribution of the data (recall Remark~\ref{rmk:measure_is_known}). 

More specifically, we consider a family of probability measures obtained by a convex combination of the usual Boltzmann--Gibbs measure and the reactive trajectory distribution. The latter one is sampled by running adaptive multilevel splitting (AMS)~\cite{CG07,LL19,Pigeon_al_2023} to sample the transition paths for the overdamped Langevin dynamics described in Section~\ref{sec:interpretation_numerics}. More precisely, the probability distribution~$\mu$ is obtained by considering a fraction~$\lambda \in [0,1]$ of the Boltzmann--Gibbs measure, and a fraction~$1-\lambda$ of the reaction path measure. Denoting by~$\Nd$ the number of data points~$\{x^n\}_{1 \leq n \leq \Nd}$ distributed according to the Boltzmann--Gibbs measure, and by~$\wNd$ the number of data points~$\{\widetilde{x}^n\}_{1 \leq n \leq \wNd}$ distributed according to the reactive trajectory distribution, the associated empirical training loss is then (recall \eqref{eq:f_AE})
\begin{equation}
  \label{eq:sLh_lambda}
  \begin{aligned}
  \sLh_\lambda(\theta) = & \frac{\lambda}{\Nd} \sum_{n=1}^{\Nd} \left\|x^n - f_\theta(x^n)\right\|^2 \\ &  + \frac{1 - \lambda}{\wNd} \sum_{n=1}^{\wNd} \left\|\widetilde{x}^n - f_\theta(\widetilde{x}^n)\right\|^2.
  \end{aligned}
\end{equation}
The optimal value of~$\lambda$ should be determined by cross-validation, possibly on an objective function different from the above reconstruction loss. More precisely, for the numerical results reported in Section~\ref{sec:numerics_modifying_proba_dist}, we measure the quality of conditional expectations by some alignment criterion. For simplicity, we present the idea for a one-dimensional latent space~$\cZ$, \emph{i.e.} $d=1$; but our analysis can be extended to more general settings. Since $\nabla \fenc (g^\star_{\fenc}(z))$ is orthogonal to~$\Sigma_z$ at~$g^\star_{\fenc}(z) \in \Sigma_z$ (as~$\Sigma_z$ is a level set of~$\fenc$), and recalling~\eqref{eq:infinitesimal_condition_g_star_prime}, we conclude that the cosine of the angle between vectors $(g^\star_{\fenc})'(z)$ and $\nabla \fenc (g^\star_{\fenc}(z))$, namely 
\begin{equation}
\label{eq:cos_angle_alignement}
\rho(z) = \frac{\nabla \fenc (g^\star_{\fenc}(z))^\top (g^\star_{\fenc})'(z)}{\left\|\nabla \fenc (g^\star_{\fenc}(z))\right\| \left\| (g^\star_{\fenc})'(z)\right\|}
\end{equation}
should ideally be~$1$. Appropriate values for~$\lambda$ in~$\sLh_\lambda$ can thus be determined by requiring~$\rho$ to be close to~1, in addition to the reconstruction error to be small. 

\subsection{Better loss functions: multiple pathways and physically informed regularizations}
\label{sec:multiple_path_2_state}

In a situation where multiple transition paths link two metastable states, the autoencoder may fail to properly represent the system in the transition region between local minima for a one-dimensional latent space~$\cZ$, as it constructs only a single curve for the conditional expectations. An idea to address this issue is to consider multiple decoders associated with a common encoder, and to choose for a given configuration the decoder which best reconstructs the state through some assignment function reminiscent of the one considered for clustering (recall Section~\ref{sec:cond_exp_others}). We present here such a strategy. Related strategies were suggested for principal curves (see Section~II.B of Ref.~\citenum{VEV09} and Section~3.5 of Ref.~\citenum{Fischer14} for instance), and actually tested in Ref.~\citenum{KS17}.

\paragraph{Loss function for multiple decoders.} In order to take into account the presence of multiple decoders, the reconstruction loss in~\eqref{eq:population_loss} for a given encoder function~$\fenc$ and~$K$ decoder functions~$f_{{\rm dec},k}$ is modified as
\[
\E\left[\underset{k \in 1, .., K}{\min}\big\|x - f_{{\rm dec},k} \circ \fenc(x)\big\|^2\right].
\]
In practice, as in~\eqref{eq:f_AE}, the encoder~$\fenc$ is represented by a neural network~$f_{\mathrm{enc},\theta_1}$, where~$\theta_1$ gathers all the parameters of the encoder (weights and biases). This encoder is shared by the~$K$ decoders~$f_{{\rm dec},k}$, which are represented by neural networks~$f_{{\rm dec},\theta_{k+1}}$ for~$1 \leq k \leq K$, where the parameters for the~$K$ decoders (weights and biases) are denoted respectively by~$\theta_2,\dots,\theta_{K+1}$. The associated training loss for a given dataset~$\{x^n\}_{1 \leq n \leq \Nd}$ then reads
\begin{equation}
\label{eq:loss_multiple_dec}
\sLh_K(\theta) = \frac{1}{\Nd} \sum_{n=1}^{\Nd} \underset{k \in 1, .., K}{\min} \left\|x^n - f_{{\rm dec},\theta_{k+1}} \circ f_{\mathrm{enc},\theta_1}(x^n)\right\|^2.
\end{equation}
A situation where each decoder accounts for a transition path corresponds to some local minimum of the loss function. Intuitively, one can also argue that, when the number of decoders employed is equal to or less than the number of transition paths discovered by data points, the solution where each decoder accounts for a different pathway gives a smaller loss compared to the solution where multiple decoders correspond to the same pathway. When there are more decoders than transition paths, some decoders are redundant (as numerically demonstrated in Figure~\ref{fig:double_dec_mullerbrown} below in Section~\ref{sec:numerics_multiple_paths}).  

\begin{remark}
  Although we did not explore this option here, it would be possible to replace the hard assignment encoded by taking the minimum over~$k$ by a probabilistic assignment arising from a softmin function, similarly to what is considered in the probabilistic approach to classification problems.
\end{remark}

One issue with the loss~\eqref{eq:loss_multiple_dec} is that there is no mechanism to enforce that the various decoders parametrize a transition between the two metastable states under consideration (see Figure~\ref{fig:double_dec_circle} below in Section~\ref{sec:numerics_multiple_paths}). Additional, physically motivated regularization terms are needed to better describe transition paths. We present two possibilities in the remainder of this section. The relative weights of the additional terms in the loss function should ideally be fixed by cross-validation.


\paragraph{Penalizing the gradient of the encoder.}
A first option for additional terms in~\eqref{eq:loss_multiple_dec} is 
\begin{align} \sLh_{K,\lambda_0,\lambda_1,\lambda_2}(\theta) = \lambda_0 \sLh_{K}(\theta)
+ \lambda_1 
\left(\widehat{\mathrm{Var}}_{\Nd}(f_{\rm{enc},\theta_1}) - 1 \right)^2 + \frac{\lambda_2}{\Nd} \sum_{n=1}^{\Nd} \left| \nabla f_{\rm{enc},\theta_1}(x^n)\right|^2, \label{eq:loss_multiple_dec_constrained}
\end{align}
where $\lambda_0, \lambda_1, \lambda_2 \geq 0$ allow to tune the extra loss terms on the encoder
and 
\begin{equation}
\widehat{\mathrm{Var}}_{\Nd} (f)=
\frac{1}{\Nd}\sum_{n=1}^{\Nd} |f(x^n)|^2
- \left|\frac{1}{\Nd}\sum_{n=1}^{\Nd} f(x^n)\right|^2
\label{eq:empirical-var}
\end{equation}
denotes the empirical variance of a function $f$. Here we assume a one-dimensional latent space so that the encoder $f_{\rm{enc},\theta_1}$ is scalar-valued. Note that the extra terms in the loss function are insensitive to shifts in the values of the encoder function. The penalization of the variance (term proportional to~$\lambda_1$) is simply a normalization term, further discussed below. The last term, proportional to~$\lambda_2$, is the same as the one considered in the loss function of contractive autoencoders.\cite{RVMGB11} 

Let us now motivate the extra terms which are considered in~\eqref{eq:empirical-var}, by focusing on the particular case~$\lambda_0 = 0$. In the limit~$\Nd \to +\infty$ and~$\lambda_1 \to +\infty$, one recovers the loss function characterizing the eigenfunction associated with the first non-zero eigenvalue of the generator of the overdamped Langevin dynamics. Indeed, the optimization problem converges in the limit~$\Nd \to +\infty$ to the minimization of
\[
\lambda_1 \left[ \int_\cX \left(\fenc(x)-\int_\cX \fenc \, d\mu \right)^2 \, \mu(dx) - 1 \right]^2 + \lambda_2 \int_\cX \left| \nabla f_\mathrm{enc}(x)\right|^2 \mu(dx).
\]
When~$\lambda_1 \to +\infty$, the first term in the expression above enforces a normalization constraint on~$\fenc$. The minimization of~$\sLh_{K,0,\lambda_1,\lambda_2}$ in the limits~$\Nd \to +\infty$ and~$\lambda_1 \to +\infty$ can then be reformulated as
\[
\begin{aligned}
\inf_{\fenc \in \cFenc} \Bigg\{ & \int_\cX \left| \nabla \fenc(x)\right|^2 \mu(dx) \  \\ & \Bigg|\ \int_\cX \left|\fenc(x)-\int_\cX \fenc \, d\mu \right|^2 \mu(dx) = 1 \Bigg\},
\end{aligned}
\]
which characterizes (up to constant shifts) the eigenfunction associated with the first non-zero eigenvalue of the generator of the overdamped Langevin dynamics; see Ref.~\citenum{ZLS22} for a more detailed derivation and a numerical strategy based on neural networks. The reason why the eigenfunctions are relevant is that they are typically almost constant on the metastable states, and index the transition from one mode to another~\cite{Bovier2005}.

\paragraph{Normalizing the values of the encoder.}
In addition to the extra terms in~\eqref{eq:loss_multiple_dec_constrained}, one can also fix the values of the encoder at specific points, such as local minima or centers of metastable modes. For instance, when there are two metastable modes~$A$ and~$B$ with respective centers~$x_A$ and~$x_B$, one can add penalization terms $\fenc(x_A)^2$ and~$[\fenc(x_B)-1]^2$ to force the encoder to have values close to~0 in the left mode and close to~1 in the right mode, respectively. This requires the encoder to describe the transition from values~0 to~1. In essence, this can be seen as some form of semi-supervised approach where specific points are identified as belonging to the left or right mode, and the values of the encoder are softly enforced there. Of course, this approach relies on some prior information on the metastable states. 

In view of the above discussion, the loss that we will use in the experiments reported in Section~\ref{sec:numerics_multiple_paths} finally reads
\begin{align}
\sLh_{K,\lambda_0,\lambda_1,\lambda_2,\lambda_3,\lambda_4}(\theta)
=
\lambda_0 \sLh_{K}(\theta)
& + \lambda_1 
\left(\widehat{\mathrm{Var}}_{\Nd}(f_{\rm{enc},\theta_1}) - 1 \right)^2  \notag \\
& + \lambda_2 \frac{1}{\Nd} \sum_{n=1}^{\Nd} \left| \nabla f_{\rm{enc},\theta_1}(x^n)\right|^2 \notag \\
& + \lambda_3  \frac{1}{N_\mathrm{pen}} \sum_{n=1}^{N_\mathrm{pen}} \sum_{k = 1}^K \big(\wx^n - f_{{\rm dec},\theta_{k+1}} \circ f_{\mathrm{enc},\theta_1}(\wx^n)\big)^2 \notag \\
& + \lambda_4 \frac{1}{N_\mathrm{pen}} \sum_{n=1}^{N_\mathrm{pen}} \left(\widetilde{y}^n - f_{\rm{enc},\theta_1}(\wx^n) \right)^2,
\label{eq:loss_multiple_dec_constrained_2}
\end{align} 
where $\lambda_0$, $\lambda_1$, $\lambda_2$, $\lambda_3$ and $\lambda_4$ are nonnegative numbers which allow to tune the strength of the constraints, and~$\{\wx^n\}_{1 \leq n \leq N_{\rm pen}}$ is the set of points at which the value of the encoder is softly constrained. These points are typically not points of the training data set, but extra points where the value of the encoder is a priori known. The term in factor of~$\lambda_3$ ensures that the reconstruction is correct on the set~$\{\wx^n\}_{1 \leq n \leq N_{\rm pen}}$ for all~$K$ decoders, while the term in factor of~$\lambda_4$ forces the encoder values to be close to~$\widetilde{y}^n$ for the configuration~$\wx^n$. The chosen values of~$\widetilde{y}^n$ depend on the probability mode to which the configuration belongs. When there are two modes, $\widetilde{y}^n \in \{0,1\}$. In practice, for the numerical results reported in Section~\ref{sec:numerics_multiple_paths}, only two points ($N_\mathrm{pen}=2$) corresponding to the two local minima on the considered potential are penalized, with $\widetilde{y}^1 = 0$ and $\widetilde{y}^2=1$. 

Let us emphasize that the variance term in factor of~$\lambda_1$ in the above loss function is incompatible with the reconstruction term in factor of~$\lambda_4$, as both terms determine in some sense the range of values taken by the encoder, but not in the same manner. Therefore, one of the parameters~$\lambda_1$ or~$\lambda_4$ needs to be set to~0. Overall, the learning problem is semi-supervised as it is the concatenation of unsupervised learning on~$\{x^n\}_{1 \leq n \leq \Nd}$ and supervised learning on~$\{\wx^n\}_{1 \leq n \leq N_{\rm pen}}$.

\subsection{Parametrizing eigenfunctions of the transfer operator for systems with multiple metastable states}
\label{sec:parametrzing_eigenfunc_multiple_states}

In various situations, it can be beneficial to consider transfer operators in order to incorporate some information on the dynamics. Transfer operators can be more convenient than generators since the loss function of the transfer operator framework does not require knowing the equation of the underlying dynamics. This is advantageous when applied to MD applications where data comes from sampling of an underlying dynamics (via an MD package) whose potential function often involves many parameters and is too complicated to be written down explicitly. 

Our objective here is to learn an encoder that is optimized not only for configuration reconstruction but also for parametrizing the leading eigenfunctions (corresponding to the largest non-trivial eigenvalues) of the transfer operator of the underlying dynamics. This is achieved by adding to the standard autoencoder architecture another neural network, which takes the output of the encoder as input and provides the values of eigenfunctions as output; see Figure~\ref{fig-reg-ae} for an illustration of the architecture where the regularization part involves the first eigenfunction only. The extension to regularization involving multiple eigenfunctions is straightforward.

\begin{figure}[ht!]
  \centering
  \includegraphics[width=0.5\textwidth]{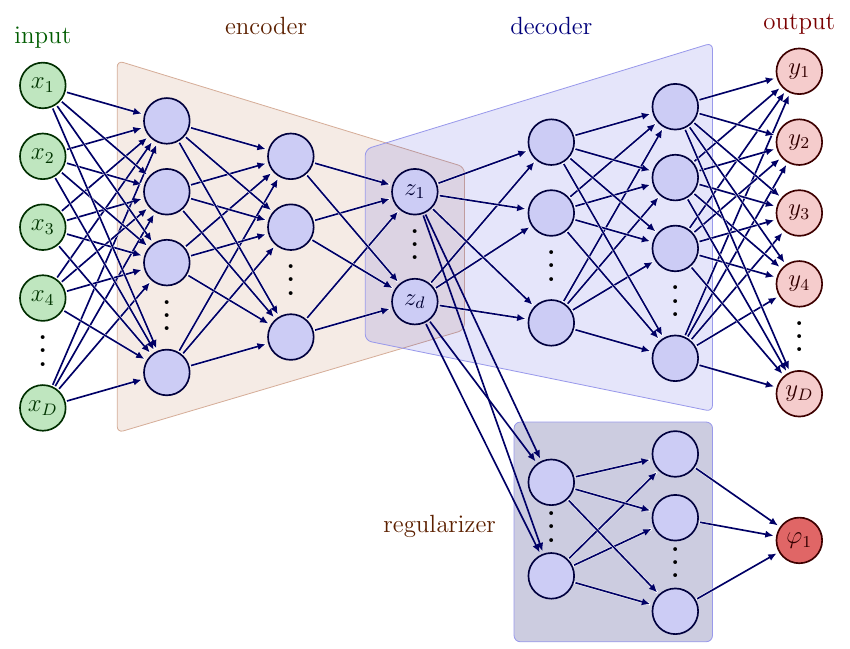}
  \caption{Illustration of an autoencoder architecture with a regularization component to represent the leading eigenfunction.}
  \label{fig-reg-ae}
\end{figure}

Recall the notation $f_{\mathrm{enc},\theta_1}$ and $f_{\rm dec,\theta_2}$ from~\eqref{eq:f_AE} for the encoder and the decoder with parameters $\theta=(\theta_1, \theta_2)$, and denote by $\{\widetilde{\varphi}_{\mathrm{reg},\widetilde{\theta}_i}: \mathbb{R}^d\rightarrow\mathbb{R}\}_{1 \le i \le K}$ the functions corresponding to the $K$ neural networks with parameters $\widetilde{\theta}=(\widetilde{\theta}_1,\dots, \widetilde{\theta}_K)$ introduced to the regularization part of the architecture. The leading eigenfunctions $\{\varphi_i:\mathbb{R}^D\rightarrow \mathbb{R}\}_{1\le i \le K}$ are represented as compositions of the encoder and regularizers, \emph{i.e.}
$\varphi_i$ is approximated by~$\widetilde{\varphi}_{\mathrm{reg},\widetilde{\theta}_i}\circ f_{\mathrm{enc},\theta_1}$ for $i=1,\dots, K$  (recall Figure~\ref{fig-reg-ae}).
We consider the training loss 
\begin{equation}
\begin{aligned}
& \sLh_{\lambda_0,\lambda_1, \lambda_2, 
	\{\omega_i\}_{1\le i\le K}, \tau}\left(\theta, \widetilde{\theta}\right) \\
& \qquad =  
\frac{\lambda_0}{\Nd} \sum_{n=1}^{\Nd} \big\|x^n - f_{\rm dec,\theta_2} \circ f_{\mathrm{enc},\theta_1}(x^n)\big\|^2
+ \frac{\lambda_1}{\tau}\sum_{i=1}^K \omega_i
\frac{\widehat{\mathscr{E}}_{\Nd,\tau}( \widetilde{\varphi}_{\mathrm{reg},\widetilde{\theta}_i}\circ f_{\mathrm{enc},\theta_1})}{\widehat{\mathrm{Var}}_{\Nd}(\widetilde{\varphi}_{\mathrm{reg},\widetilde{\theta}_i}\circ f_{\mathrm{enc},\theta_1})}  \\
& \qquad \ \ + \lambda_2 \sum_{1 \le i \le j \le K}
\left(\widehat{\mathrm{Cov}}_{\Nd}\left(\widetilde{\varphi}_{\mathrm{reg},\widetilde{\theta}_i}\circ f_{\mathrm{enc},\theta_1}, \widetilde{\varphi}_{\mathrm{reg},\widetilde{\theta}_j}\circ f_{\mathrm{enc},\theta_1}\right) - \delta_{ij}\right)^2, 
\end{aligned}
\label{regularized-loss-in-practice}
\end{equation}
where $\Delta t>0$ is the time step of the sampled data, $\tau=\ell \Delta t$ is a lag-time for some integer~$\ell \ge 1$, $\{\omega_i\}_{1\le i \le K}$ are fixed (non-increasing) weights, $\widehat{\mathrm{Var}}_{\Nd}(\cdot)$ is the empirical variance defined in \eqref{eq:empirical-var} and, for functions $f,g: \cX \rightarrow \mathbb{R}$, we have used the shorthand notation
\begin{align*}
\widehat{\mathscr{E}}_{\Nd,\tau}(g)&=\frac{1}{2(\Nd-\ell)}\sum_{n=1}^{\Nd-\ell}|g(x^{n+\ell}) - g(x^n)|^2, \\
\widehat{\mathrm{Cov}}_{\Nd} (g,h)&=
\frac{1}{\Nd}\sum_{n=1}^{\Nd} g(x^n) h(x^n)
  - \left(\frac{1}{\Nd}\sum_{n=1}^{\Nd} g(x^n)\right)
\left(\frac{1}{\Nd}\sum_{n=1}^{\Nd} h(x^n)\right)\,.
\end{align*}
Note that \eqref{regularized-loss-in-practice} is a sum of 
the reconstruction loss and the loss in Ref.~\citenum{zhang2023understanding} for learning eigenfunctions of transfer operators (also see Refs.~\citenum{MPWN18,CSF19}). The term in factor of~$\lambda_2$ in the loss function encourages the approximate eigenfunctions~$\{ \widetilde{\varphi}_{\mathrm{reg},\widetilde{\theta}_i}\circ f_{\mathrm{enc},\theta_1} \}_{1 \leq i \leq K}$ to form an orthonormal basis since
\[
\begin{aligned}
& \widehat{\mathrm{Cov}}_{\Nd}(\widetilde{\varphi}_{\mathrm{reg},\widetilde{\theta}_i}\circ f_{\mathrm{enc},\theta_1}, \widetilde{\varphi}_{\mathrm{reg},\widetilde{\theta}_j}\circ f_{\mathrm{enc},\theta_1}) \\
& \qquad \xrightarrow[\Nd \to +\infty]{} \int_\cX \Pi \left( \widetilde{\varphi}_{\mathrm{reg},\widetilde{\theta}_i}\circ f_{\mathrm{enc},\theta_1} \right) \Pi \left(\widetilde{\varphi}_{\mathrm{reg} \widetilde{\theta}_j}\circ f_{\mathrm{enc},\theta_1} \right) d\mu,
\end{aligned}
\]
where we introduced the centering operator~$\Pi$ acting as
$(\Pi g)(x) = g(x) - \int_\cX g\, d\mu$.

The various terms in factor of~$\lambda_1$ encode the property that $\{ \widetilde{\varphi}_{\mathrm{reg},\widetilde{\theta}_i}\circ f_{\mathrm{enc},\theta_1} \}_{1 \leq i \leq K}$ are approximations of the eigenfunctions of the transfer operator~$P_\tau$ since, assuming that~$P_\tau$ is self-adjoint (see Lemma~1 and Appendix~A of Ref.~\citenum{zhang2023understanding}),
\begin{equation}
\sum_{i=1}^K\omega_i\frac{\widehat{\mathscr{E}}_{\Nd,\tau}(g_i)}{\widehat{\mathrm{Var}}_{\Nd}(g_i)} \xrightarrow[\Nd \to +\infty]{} \sum_{i=1}^K\omega_i\frac{\dps \int_\cX g_i\left(g_i-P_\tau g_i\right) d\mu}{\dps \int_\cX \left(\Pi g_i\right)^2 d\mu},
\label{eqn:raylaigh-ritz-quotient}
\end{equation}
where $P_\tau g_i(x) = \mathbb{E}(g_i(x^\ell)|x^0=x)$ encodes the average evolution of the system over the time~$\tau$. It can be shown that the quantity on the right-hand side of \eqref{eqn:raylaigh-ritz-quotient} attains the minimum value $\sum_{i=1}^K \omega_i (1-\nu_i)$, when~$\{g_i\}_{1 \le i \le K}$ are the eigenfunctions associated with the largest (but smaller than $1$) eigenvalues~$\{\nu_i\}_{1\le i\le K}$ of the transfer operator~$P_\tau$~\cite{zhang2023understanding}. This allows to estimate each individual eigenvalue $\nu_i$ after training by computing the quotient on the left-hand side of \eqref{eqn:raylaigh-ritz-quotient} using the trained neural network $\widetilde{\varphi}_{\mathrm{reg},\widetilde{\theta}_i}\circ f_{\mathrm{enc},\theta_1}$. Notice that $P_\tau$ is related to the generator~$L$ of the dynamics as $P_\tau =\rme^{\tau L}$ and the eigenvalues satisfy $\nu_i=\rme^{-\rho_i \tau}$ where~$\{\rho_i\}_{1\le i\le K}$ are the smallest non-zero eigenvalues of $-L$. This relation motivates the factor $1/\tau$ in front of the second term of the loss~\eqref{regularized-loss-in-practice}, so that its minimum is of order~1 even for~$\tau$ small.

The approach described in this section is illustrated by numerical experiments in Section~\ref{sec:ad}. 

\section{Results and discussion}

We illustrate in this section the theoretical discussions of the Methods section. We mostly present results for two-dimensional model potentials (Sections~\ref{sec:interpretation_numerics} to~\ref{sec:numerics_multiple_paths}), but also present in Section~\ref{sec:ad} an application to alanine dipeptide. Although these numerical illustrations are currently limited to somewhat low dimensional systems, we believe that most of the methods we discuss can also be applied to realistic biophysical systems which were already tackled with autoencoders. 

\subsection{Decoders as conditional expectations}
\label{sec:interpretation_numerics}

We illustrate in this section the theoretical analysis performed in Section~\ref{sec:cond_exp}. In particular, we demonstrate the fact that ideal decoders predict conditional expectations by considering a concrete example in~$\mathbb{R}^2$. The formula~\eqref{eq:Bayes_predictor_dec} suggests to consider decoders which are sufficiently expressive, in order to correctly approximate the conditional expectation. This motivates working with asymmetric autoencoders, where decoders have a larger complexity than encoders. In contrast, it is important to keep relatively simple encoders (computationally not too expensive) in molecular dynamics since they are intended to be used as CVs, for instance to perform free energy computations.

\paragraph{Architecture and training of autoencoders.} In the following, the architectures of autoencoders are characterized using the formulation of Figure~\ref{AE_notation}. We systematically use hyperbolic tangents as activation functions for all layers, except the one leading to the bottleneck and the output, for which linear activation functions are considered. As discussed in Section~\ref{sec:presentation_AEs}, autoencoders with complex architectures can overfit when the dataset is not large enough, which is why various regularization strategies are considered. As the models used in this section contain a moderate number of parameters, it is sufficient to rely on early stopping for regularization. Concretely, in the present case, the training was stopped once the validation loss did not decrease after a certain number of epochs~$n_\mathrm{wait}$. The model giving the lowest validation loss was kept. Obviously, increasing $n_\mathrm{wait}$ leads to potential overfitting, and the appropriate value of this parameter, which depends on the chosen architecture, has to be chosen with some form of cross--validation.

\paragraph{M{\"u}ller--Brown potential.} We consider the M{\"u}ller-Brown potential~\cite{Muller1979} defined for~$x=(x_1,x_2)\in \mathbb{R}^2$ as
\begin{equation}
\label{eq:Muller-Brown}
V(x_1,x_2) = \sum_{i=1}^{4} A_i \exp\left( a_i \left(x_1 - u_i\right)^2 + b_i \left(x_1 - u_i\right)\left(x_2 - v_i\right) + c_i \left(x_2 - v_i\right)^2 \right),
\end{equation}
with the parameters~$A = (-200, -100, -170, 15)$, $a = (-1, -1, -6.5, 0.7)$, $b = (0, 0, 11, 0.6)$, $c = (-10, -10, -6.5, 0.7)$, $u = (1, 0, -0.5, -1)$ and~$v = (0, 0.5, 1.5, 1)$. The dataset is obtained by sampling the Boltzmann--Gibbs measure~$\mu$ associated with~$V$ for~$\beta = 0.05$ (\emph{i.e.} the density of~$\mu$ is proportional to~$\rme^{-\beta V}$), using overdamped Langevin dynamics discretized with a Euler--Maruyama scheme and a time step $\Delta t = 10^{-4}$. A dataset of $2 \times 10^7$ points covering all the relevant parts of the potential landscape was generated by running two trajectories of $10^7$ points starting from two minima of the potential, located respectively at~$(-0.56, 1.44)$ and~$(0.62, 0.03)$.

The training and validation datasets were built by first drawing randomly without replacement $\Nd = 10^4$ points for each of them from the full dataset. The hyperparameters for the training are chosen using a $K$-fold cross--validation procedure, with~$K = 6$ folds. The test set, composed of the remaining points in the initial dataset, was used to assess the quality of the results. All the models were trained using the Adam algorithm~\cite{Adam} with a learning rate of $\eta = 0.005$. The remaining training hyperparameters to select are the model architecture, the minibatch size and the early stopping criterion~$n_\mathrm{wait}$. We first use a model with a given architecture (namely (2, 5, 5, 1, 20, 20, 2)) to select the batch size and~$n_\mathrm{wait}$, and then study the impact of the model architecture in a second stage. For all these trainings, the same random seed was used to initialize the parameters of the model. We give typical orders of magnitudes for the number of epochs in the training in the last column of Table~\ref{tab:K_fold_cross_val_batchsize_nwait}. 

\begin{table}[!ht]
	\centering
	\caption{Test reconstruction error~\eqref{eq:reconstruction_error_ideal} for the training of the AE model (2, 5, 5, 1, 20, 20, 2) with fixed learning rate $\eta=0.005$, for various values of~$n_\mathrm{wait}$ and batch sizes.}
	\label{tab:K_fold_cross_val_batchsize_nwait}
	\begin{tabular}{cccc}
		\hline
		$n_\mathrm{wait}$ &  batch size & \thead{relative test \\ reconstruction error} & \thead{number of epochs \\ (order of magnitude)} \\
		\hline
		25      & $1 \times 10^{0}$ & $1.96 \times 10^{-2}$ & 40  \\
		50      & $1 \times 10^{0}$ & $1.91 \times 10^{-2}$ & 100 \\
		100     & $1 \times 10^{0}$ & $1.90 \times 10^{-2}$ & 190 \\
  
  		25      & $1 \times 10^{1}$ & $1.81 \times 10^{-2}$ & 120 \\
		50      & $1 \times 10^{1}$ & $1.77 \times 10^{-2}$ & 190 \\
		100     & $1 \times 10^{1}$ & $1.73 \times 10^{-2}$ & 300 \\
  
  		25      & $1 \times 10^{2}$ & $1.87 \times 10^{-2}$ & 100 \\
		50      & $1 \times 10^{2}$ & $1.82 \times 10^{-2}$ & 240 \\
		100     & $1 \times 10^{2}$ & $1.66 \times 10^{-2}$ & 880 \\  
  
		25      & $5 \times 10^{2}$ & $1.90 \times 10^{-2}$ & 210 \\
		50      & $5 \times 10^{2}$ & $1.83 \times 10^{-2}$ & 300 \\
		100     & $5 \times 10^{2}$ & $1.82 \times 10^{-2}$ & 530 \\
  
        25      & $1 \times 10^{3}$ & $1.99 \times 10^{-2}$ & 500 \\
		50      & $1 \times 10^{3}$ & $1.93 \times 10^{-2}$ & 700 \\
		100     & $1 \times 10^{3}$ & $1.86 \times 10^{-2}$ & 850 \\ 

  		25      & $1 \times 10^{4}$ & $1.95 \times 10^{-2}$ & 1490 \\
		50      & $1 \times 10^{4}$ & $1.86 \times 10^{-2}$ & 2790 \\
		100     & $1 \times 10^{4}$ & $1.80 \times 10^{-2}$ & 3960 \\
		\hline
	\end{tabular}
\end{table}

Table~\ref{tab:K_fold_cross_val_batchsize_nwait} shows that, on the range of hyperparameter values under consideration, the test reconstruction error~\eqref{eq:reconstruction_error_ideal} slightly decreases as~$n_\mathrm{wait}$ increases. On the other hand, it first decreases as the batchsize increases in the range $10^0$~to~$10^2$, while it increases in the range $10^2$~to~$10^4$ (which is the size of the training dataset). The variation is however not very significant as the relative test reconstruction error (test reconstruction error divided by the empirical variance of the test data) is quite small for all the values of the parameters that are considered. The choice of batch size and early stopping criterion are not very critical as the quality of the results is rather robust to changes in these parameters. 

In view of~\eqref{eq:total_variance_formula}, the fraction of unexplained variance, which corresponds to the reconstruction error divided by the total variance, is (for a perfectly converged decoder)
\[
\frac{\E\left[\mathrm{Var}(X|\fenc(X))\right]}{\mathrm{Var}(X)}
= 1 - \frac{\mathrm{Var}\left[ \mathbb{E}(X|\fenc(X)) \right]}{\mathrm{Var}(X)}.
\]
This quantity can be estimated by replacing expectations with empirical averages on the test set. For the settings considered in Table~\ref{tab:K_fold_cross_val_batchsize_nwait}, the empirical variance of the test set is~$0.86$, and the model fails to explain less than~2\% of the variance of the test set. A more careful determination of the batch size or~$n_\mathrm{wait}$ may even further decrease this percentage, but such levels of error are sufficiently low to illustrate the theoretical points discussed in the previous sections. In the following, the trainings of AE models are done with a batchsize of~$500$ and $n_\mathrm{wait} = 100$. This particular choice of batchsize instead of~$100$ is motivated by the following facts: first, the relative test reconstruction error is not significantly smaller with a batchsize of~$100$ than with one of~$500$; secondly, the training is much faster, which allows to cover more tests; finally, the alignment criterion~\eqref{eq:cos_angle_alignement} evaluated at the values of the conditional averages is worse for a batchsize of~$100$ instead of~$500$. This last point is also true for the smaller batchsize of~$10$ and will also be discussed in the following paragraphs where two neural networks with the same encoder but different decoders are compared. 

We now illustrate the impact of the decoder's architecture on its capacity to approximate conditional expectations (see~\eqref{eq:Bayes_predictor_dec}). We trained two AEs, with respective architectures (2, 5, 5, 1, 5, 5, 2) and (2, 5, 5, 1, 20, 20, 2), using the same training and validation sets. To compute the conditional averages, the range of values of the encoder in the one dimensional latent space~$\cZ$ was split into $100$ non-overlapping intervals (bins) of constant lengths. Conditional averages are then approximated as the mean values of the data points whose encoded values are in the corresponding bins. Conditional averages and values of the decoder are plotted on the potential energy heatmap in Figures~\ref{fig:cdt_avg_heatmap_a} and~\ref{fig:cdt_avg_heatmap_b} for both architectures, together with the MEP and isolevels of the encoder~$f_\mathrm{enc}$. The values of the decoder are obtained by computing the image of points that are uniformly spaced along a one-dimensional interval, under the decoder map. The boundaries of the interval are adjusted in order for the decoder values to fit the picture.

Figures~\ref{fig:cdt_avg_heatmap_a} and~\ref{fig:cdt_avg_heatmap_b} qualitatively demonstrate that an increase in the complexity of the decoder allows it to better approximate conditional averages. Note also that the decoder path seems to be orthogonal to the isolevels of the encoder, which is in agreement with the discussion before~\eqref{eq:cos_angle_alignement}. The quality of the agreement can be made more quantitative by computing the Euclidean distance between the values of the decoder and the conditional expectations for the~100 bins considered in~$\cZ$ space, and the alignment criterion~\eqref{eq:cos_angle_alignement} between the gradient of the encoder and the derivative of the decoder; see Figures~\ref{fig:dist_cdt_avg_dec} and~\ref{fig:cos_angle_enc_dec}. These pictures show respectively that the Euclidean distance is small, and that the alignment factor is close to~1. Note also that more expressive decoders lead to better values of these criteria.

\begin{figure}[!ht]
  \centering
  \begin{subfigure}[b]{0.49\textwidth}
    \centering
    \includegraphics[width=\textwidth]{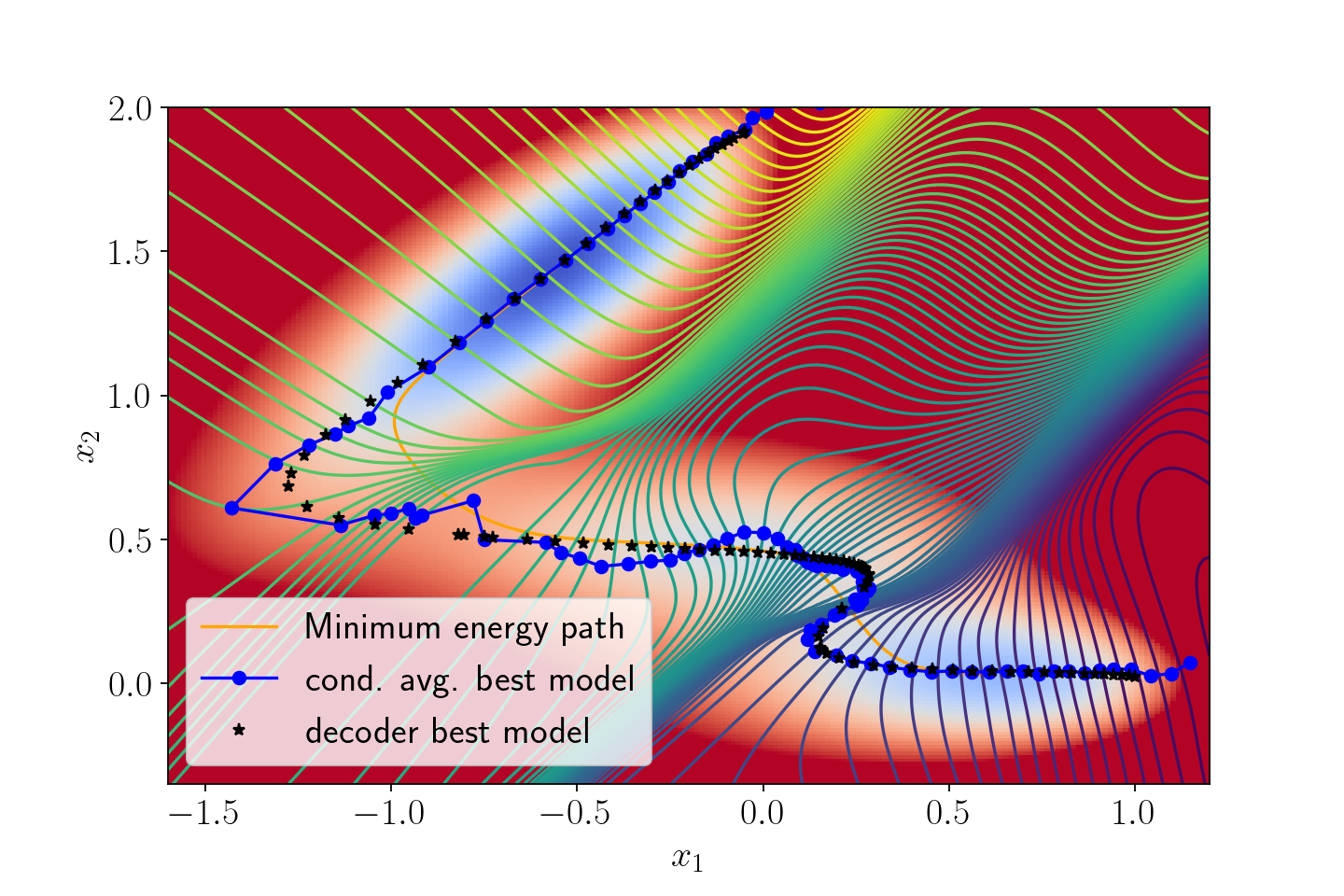}
    \caption{}
    \label{fig:cdt_avg_heatmap_a}
  \end{subfigure}
  \begin{subfigure}[b]{0.49\textwidth}
    \centering
    \includegraphics[width=\textwidth]{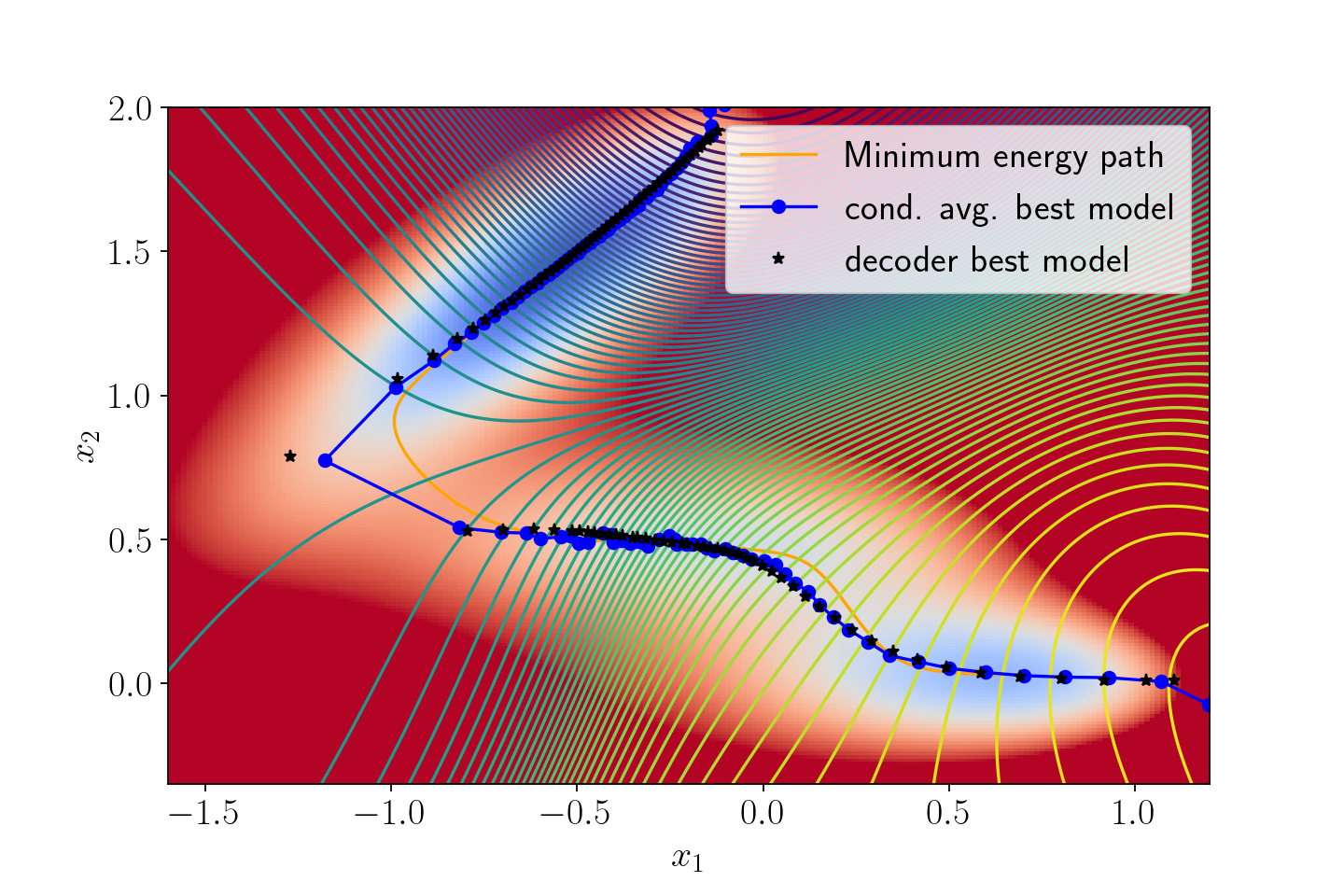}
    \caption{}
    \label{fig:cdt_avg_heatmap_b}
  \end{subfigure}
  \caption{Minimum energy path, conditional averages and decoder on the potential energy heatmap with isolevels of the encoder~$f_\mathrm{enc}$ for the AE models (a) (2, 5, 5, 1, 5, 5, 2) and (b) (2, 5, 5, 1, 20, 20, 2). Conditional averages are computed using 100 uniformly spaced bins in encoder space.}
  \label{fig:cdt_avg_heatmap}
\end{figure}

\begin{figure}[!ht]
	\centering
	\begin{subfigure}[b]{0.49\textwidth}
		\centering
		\includegraphics[width=\textwidth]{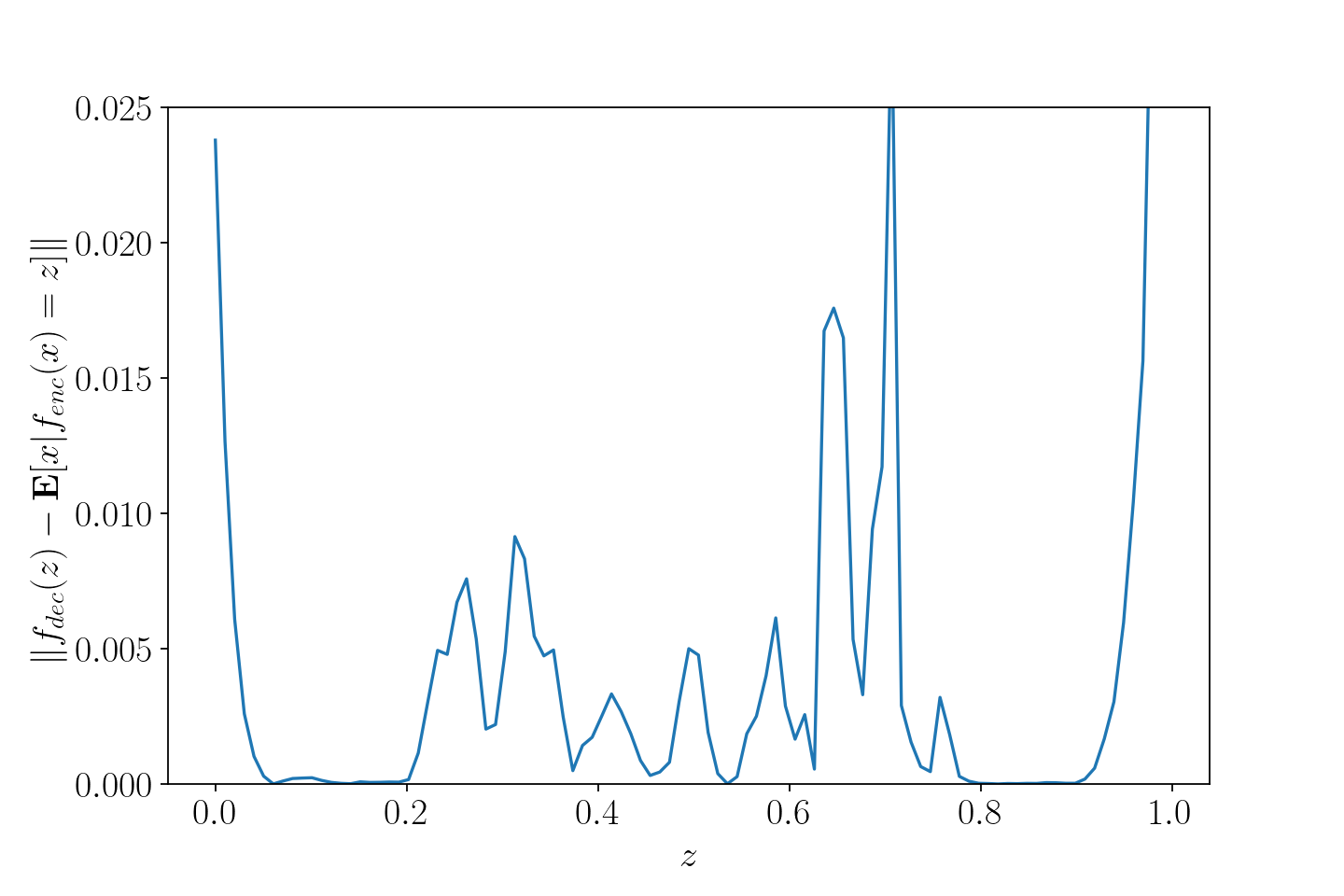}
		\caption{}
            \label{fig:dist_cdt_avg_dec_a}
	\end{subfigure}
	\begin{subfigure}[b]{0.49\textwidth}
		\centering
		\includegraphics[width=\textwidth]{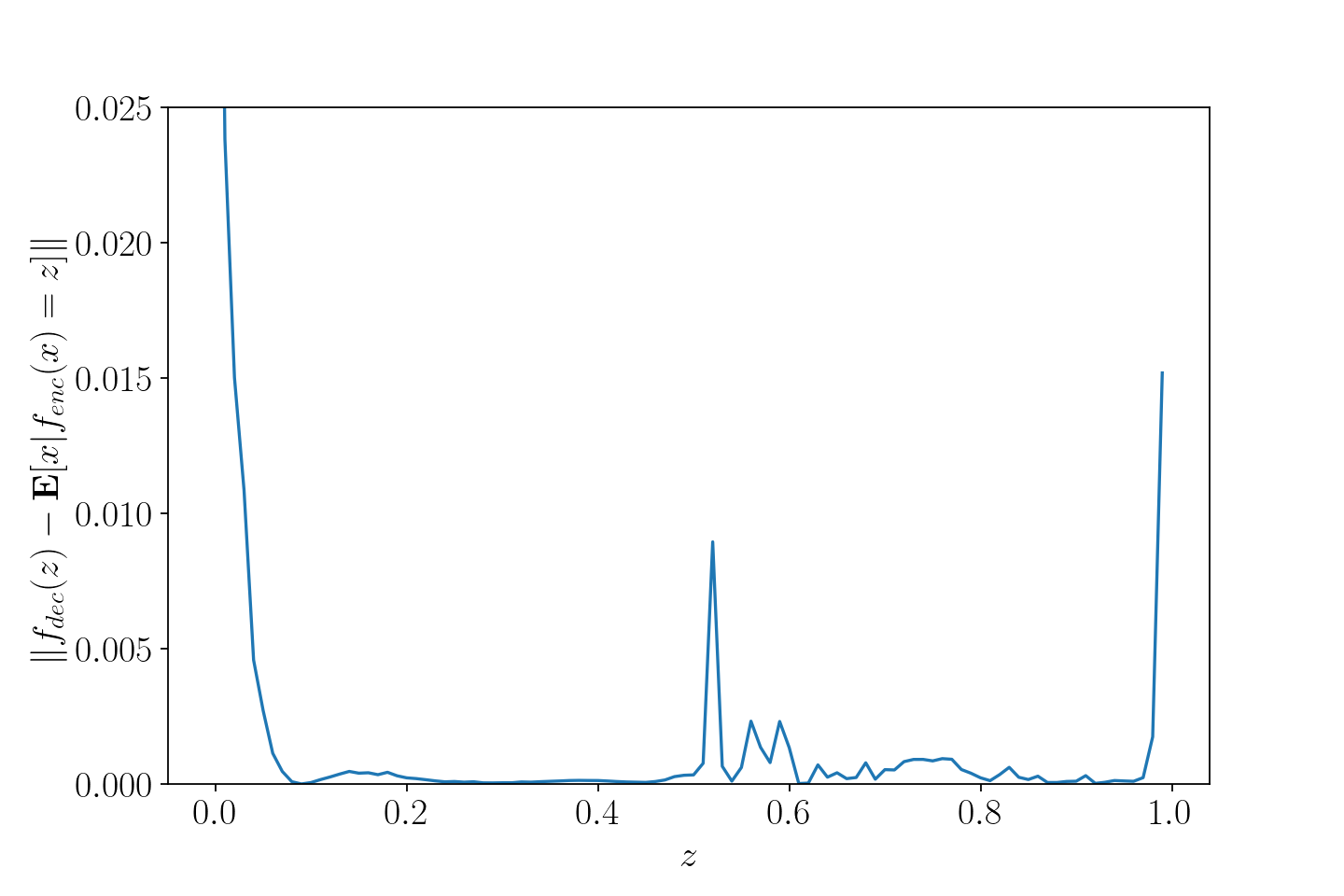}
		\caption{}
            \label{fig:dist_cdt_avg_dec_b}
	\end{subfigure}
	\caption{Euclidean distance between the conditional averages and the values of the decoder, where conditional averages are computed using 100 uniformly spaced bins in encoder space~$\cZ$ for the AE models (a) (2, 5, 5, 1, 5, 5, 2) and (b) (2, 5, 5, 1, 20, 20, 2).}
	\label{fig:dist_cdt_avg_dec}
\end{figure}

\begin{figure}[!ht]
	\centering
	\begin{subfigure}[b]{0.49\textwidth}
		\centering
		\includegraphics[width=\textwidth]{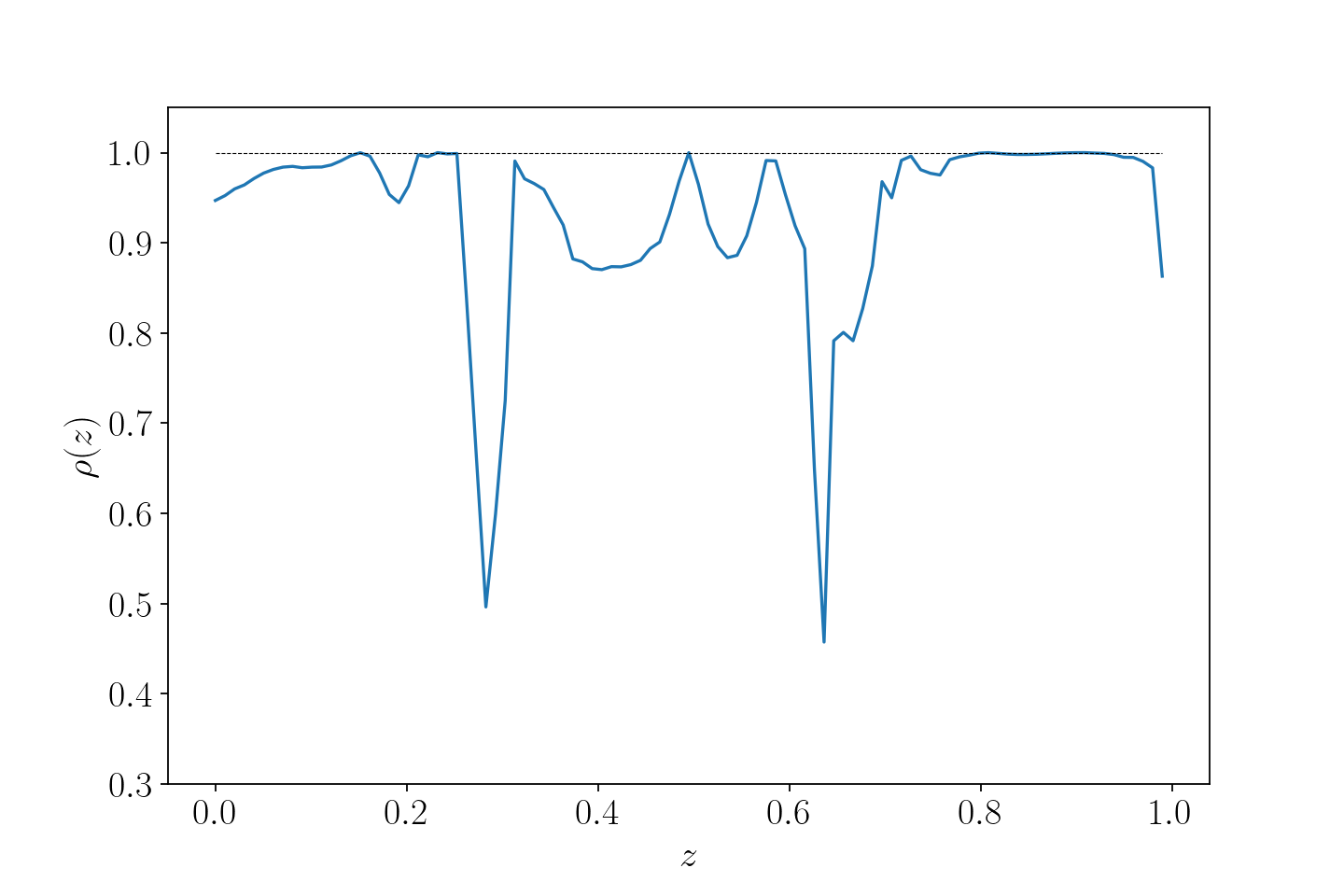}
		\caption{}
            \label{fig:cos_angle_enc_dec_a}
	\end{subfigure}
	\begin{subfigure}[b]{0.49\textwidth}
		\centering
		\includegraphics[width=\textwidth]{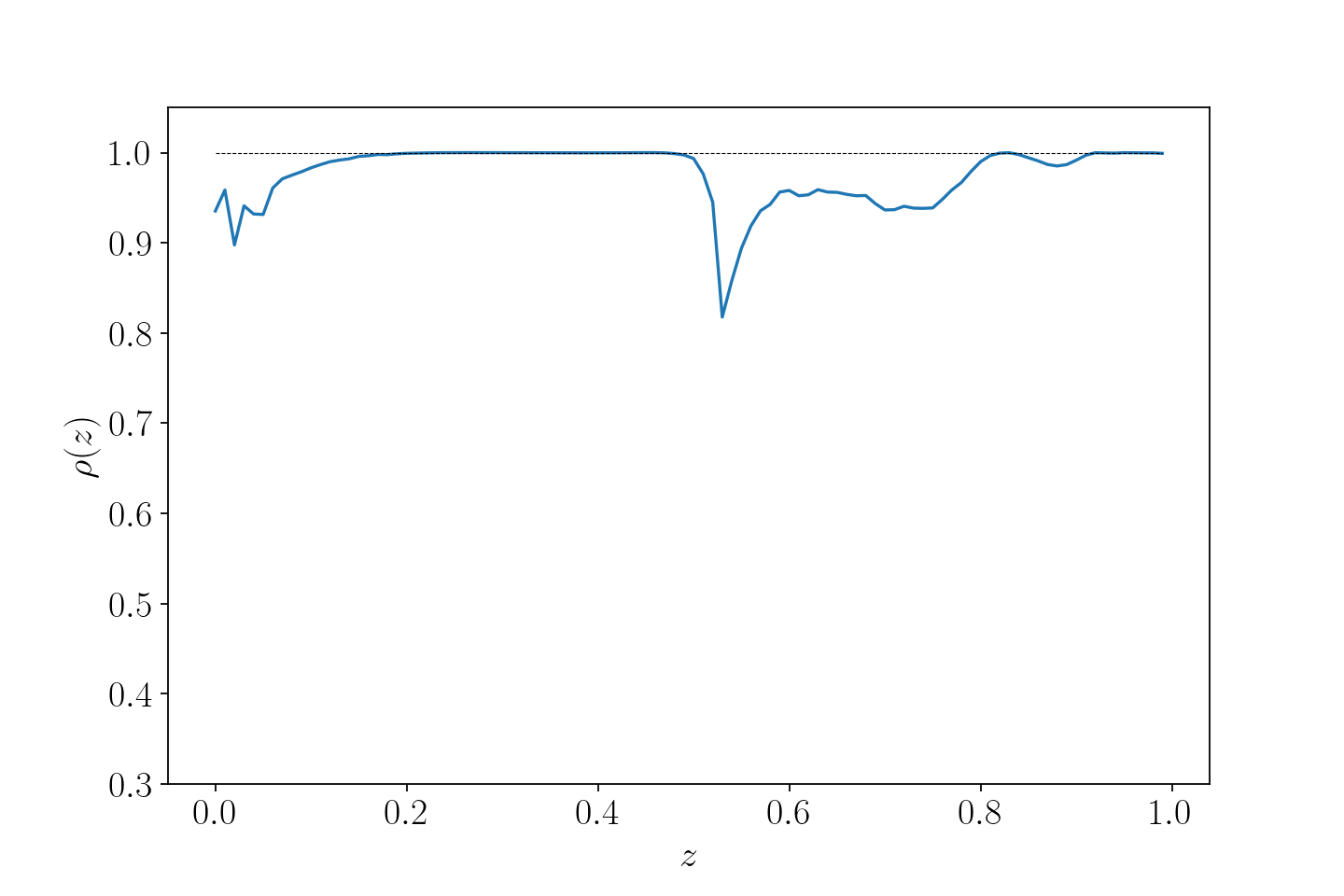}
		\caption{}
            \label{fig:cos_angle_enc_dec_b}
	\end{subfigure}
	\caption{Alignment criterion~\eqref{eq:cos_angle_alignement} evaluated at the values of the conditional averages computed using 100 uniformly spaced bins in encoder space~$\cZ$ for the AE models (a) (2, 5, 5, 1, 5, 5, 2) and (b) (2, 5, 5, 1, 20, 20, 2).}
	\label{fig:cos_angle_enc_dec}
\end{figure}

Note moreover from Figure~\ref{fig:cdt_avg_heatmap} that the decoder path with the more complex architecture is closer to the MEP, although there are some discrepancies. The agreement would presumably be better for larger values of~$\beta$. Although more complex decoders allow to better approximate conditional expectations, let us mention that the associated test losses may not be smaller than those for less complex decoders. The value of the test loss alone may therefore be misleading to assess the convergence of the decoder to the conditional average, as a trained model with a slightly smaller test loss might lead to an encoder/decoder pair less close to the conditional average (in the sense of equation~\eqref{eq:cos_angle_alignement}). Here, the relative test reconstruction error for the trained model with the smaller decoder was~$1.71 \times 10^{-2}$, which is comparable to the relative test reconstruction error~$1.83 \times 10^{-2}$ obtained with the larger decoder (the fact that this reconstruction error can be larger may seem surprising, but can be attributed to some variability in the training procedure and choice of data sets). On the other hand, the decoder path qualitatively looks nicer for the more complex model, and this can be quantified using our measure of alignment and/or the distance between the decoder path and conditional averages, as illustrated in Figures~\ref{fig:dist_cdt_avg_dec} and~\ref{fig:cos_angle_enc_dec}. 

\subsection{Improving the description by modifying the probability distribution of the data}
\label{sec:numerics_modifying_proba_dist}

Following the discussion in Section~\ref{sec:changing_ref_measure}, we illustrate the impact of changing the reference probability measure~$\mu$ of the data distribution on the quality of the autoencoder. We therefore consider in this section the loss function~\eqref{eq:sLh_lambda}, and the same two dimensional system as in Section~\ref{sec:interpretation_numerics}.

When training an autoencoder on this model with any kind of architecture, there is always a chance that the training converges to a ``wrong'' solution, depending on the random initialization of the weights and the subsequent randomness in the training procedure. Such a situation is illustrated in Figure~\ref{fig:bolz_vs_reac_cdt_avg_heatmap_a}. Although this ``wrong'' solution allows to capture the variance, the decoder path between the probability mode on the left and the one on the right is far away from the MEP, and does not have a particular meaning. Somehow, the mode on the left is described in the wrong direction. We use such spurious solutions to study the impact of the change of the reference probability measure by observing whether the wrong solution subsists after training.  

To obtain the reactive trajectory distribution with AMS simulations, the initial state~$A$ and final state~$B$ are defined as small discs of radius~$0.1$ centered on the two local minima of the potential, and the one dimensional reaction coordinate used for AMS is the finite element approximation of the committor function (as done in Ref.~\citenum{ECT22} for instance). A dataset of $\wNd = 2 \times 10^5$ configurations~$\{\widetilde{x}^n\}_{1 \leq n \leq \wNd}$ distributed according to the reactive trajectory measure is obtained by drawing at random without replacement configurations obtained by running two AMS simulations: one forward from~$A$ to~$B$, and one backward from~$B$ to~$A$, with 1000 replicas each. To compare visually this distribution with the Boltzmann--Gibbs distribution, two sets of $10^4$ points distributed according to these two distributions are plotted in Figure~\ref{fig:bolz_vs_reac_on_heatmap}.

\begin{figure}[!ht]
	\centering
	\begin{subfigure}[b]{0.49\textwidth}
		\centering
		\includegraphics[width=\textwidth]{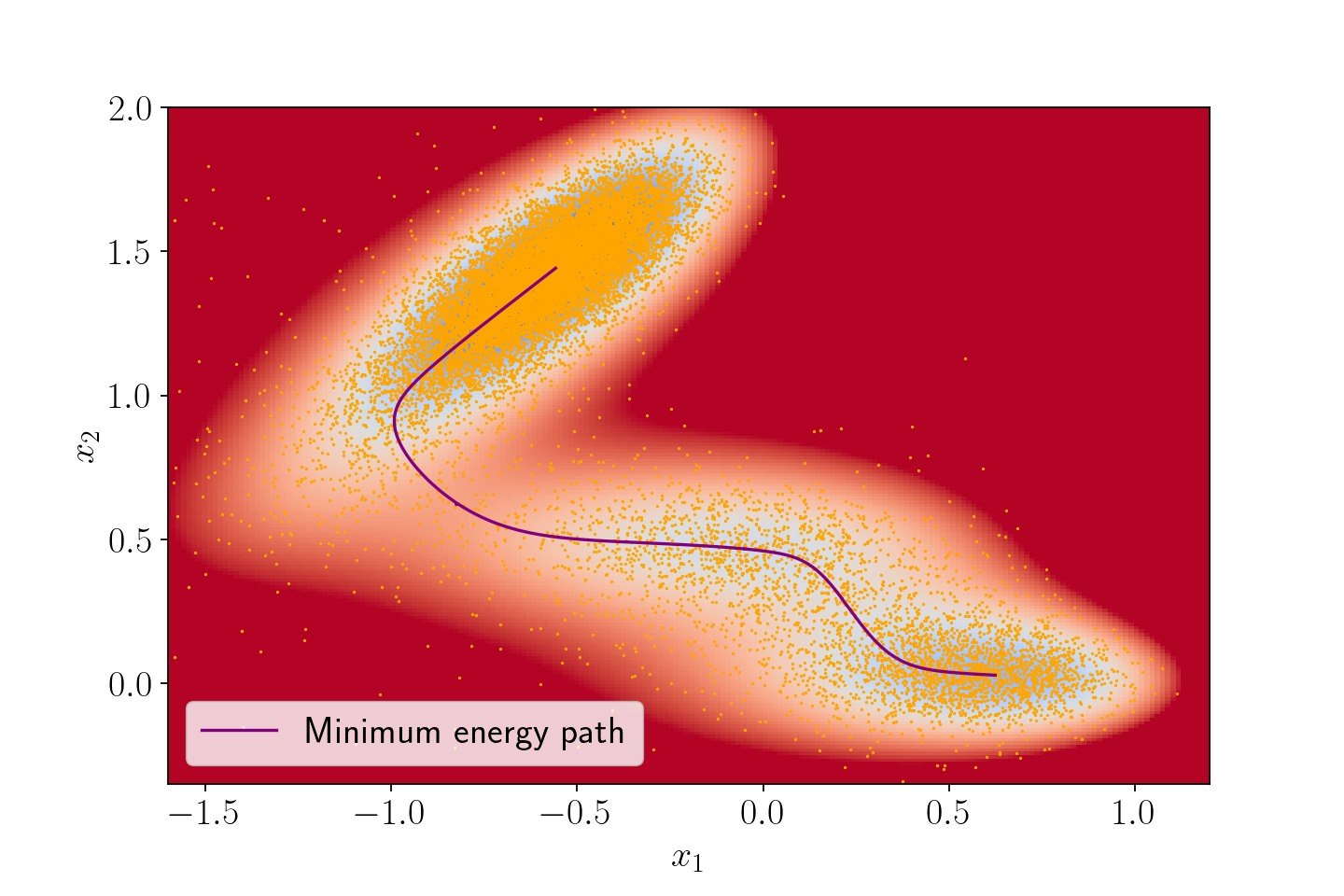}
		\caption{}
            \label{fig:bolz_vs_reac_on_heatmap_a}
	\end{subfigure}
	\begin{subfigure}[b]{0.49\textwidth}
		\centering
		\includegraphics[width=\textwidth]{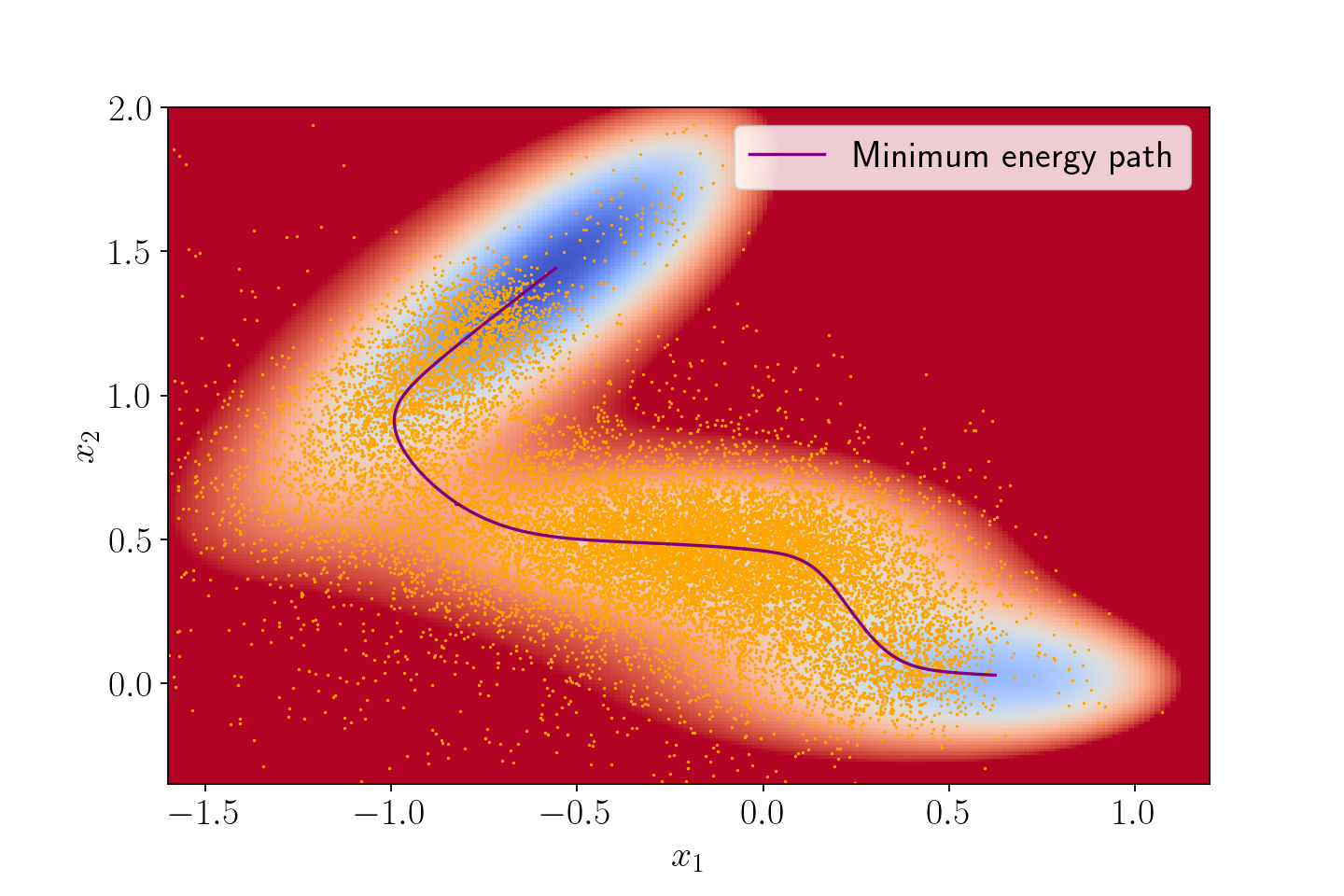}
		\caption{}
            \label{fig:bolz_vs_reac_on_heatmap_b}
	\end{subfigure}
	\caption{Minimum energy path (purple line) and configurations (orange dots) distributed according to (a) the Boltzmann--Gibbs distribution, and (b) the reactive trajectory distribution.}
	\label{fig:bolz_vs_reac_on_heatmap}
\end{figure}

The differences between the results obtained with the trained models in Figures~\ref{fig:cdt_avg_heatmap_b} and~\ref{fig:bolz_vs_reac_cdt_avg_heatmap_a} come from the values of the initial weights of the model and the changes in the chosen minibatches during the training. Here, the initial weights of the models are chosen at random, using the default initialization of PyTorch (see Section~8.4 of Ref.~\citenum{GBC16} for further background). This initialization and the choices made in the minibatching procedures are determined by the seed of the random number generators, which is set to different values for Figures~\ref{fig:cdt_avg_heatmap_b} and~\ref{fig:bolz_vs_reac_cdt_avg_heatmap_a}.

\begin{figure}[!ht]
	\centering
	\begin{subfigure}[b]{0.49\textwidth}
		\centering
		\includegraphics[width=\textwidth]{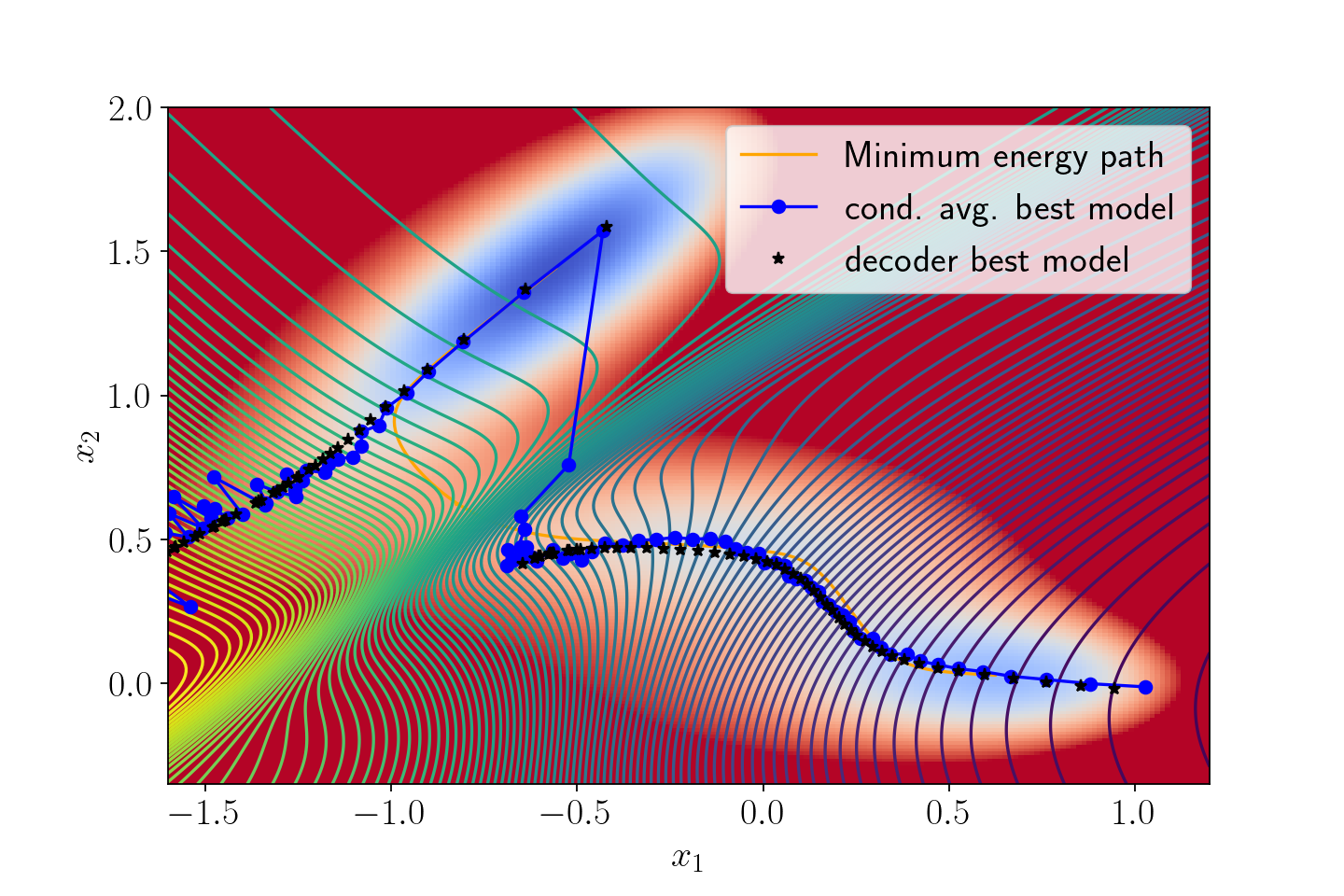}
		\caption{}
            \label{fig:bolz_vs_reac_cdt_avg_heatmap_a}
	\end{subfigure}
	\begin{subfigure}[b]{0.49\textwidth}
		\centering
		\includegraphics[width=\textwidth]{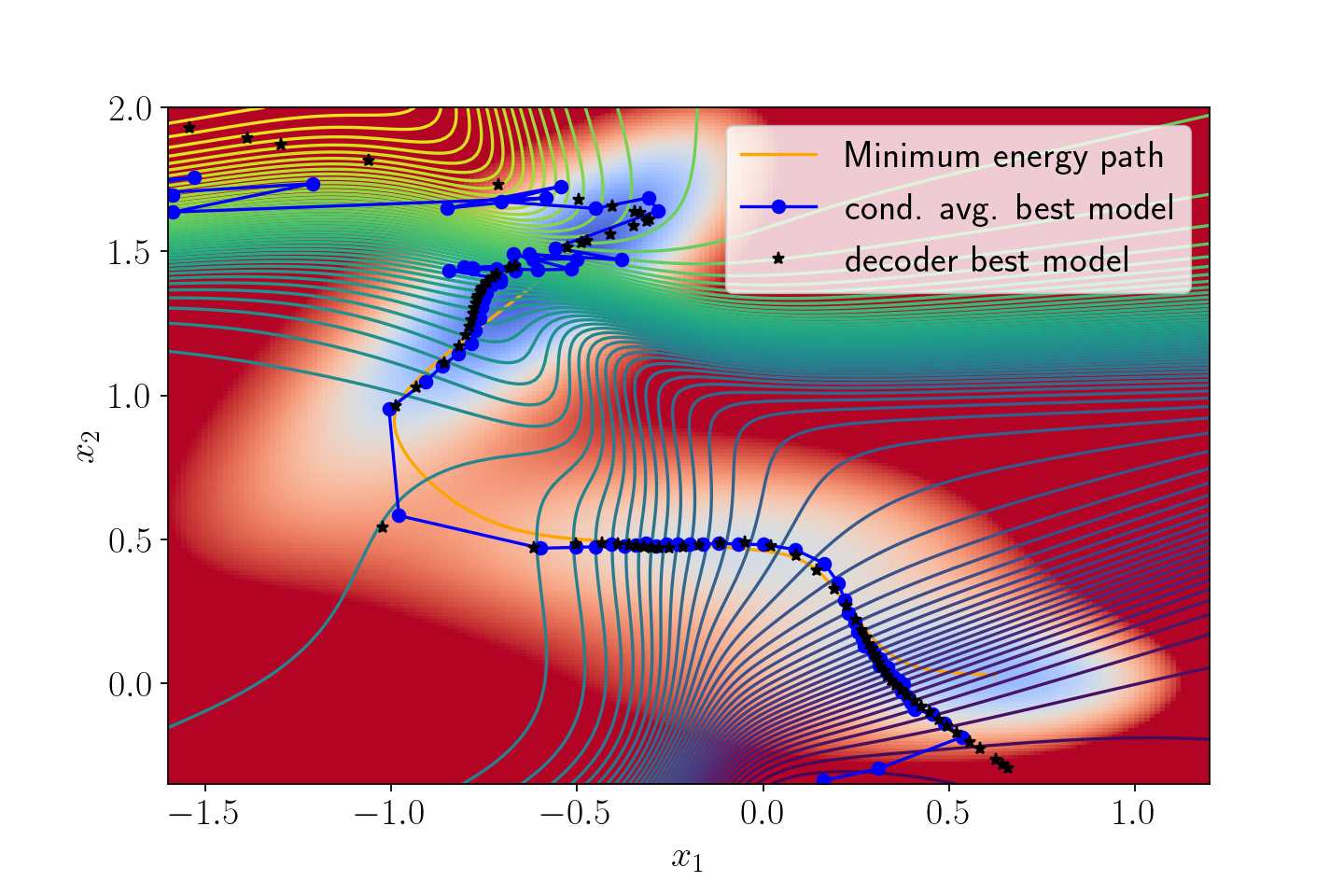}
		\caption{}
            \label{fig:bolz_vs_reac_cdt_avg_heatmap_b}
	\end{subfigure}
	\caption{Minimum energy path, conditional averages and decoder on the potential energy heatmap with isolevels of the encoder~$f_\mathrm{enc}$ for the AE models (2, 5, 5, 1, 20, 20, 2) trained using (a) only the Boltzmann--Gibbs distribution ($\lambda = 1$) and (b) only the reactive trajectory distribution ($\lambda = 0)$ as reference measures. Conditional averages are computed using (a) the Boltzmann-Gibbs distribution or (b) the reactive trajectory distribution with 100 uniformly spaced bins in encoder space~$\cZ$.}
	\label{fig:bolz_vs_reac_cdt_avg_heatmap}
\end{figure}

\begin{figure}[!ht]
	\centering
	\begin{subfigure}[b]{0.49\textwidth}
		\centering
		\includegraphics[width=\textwidth]{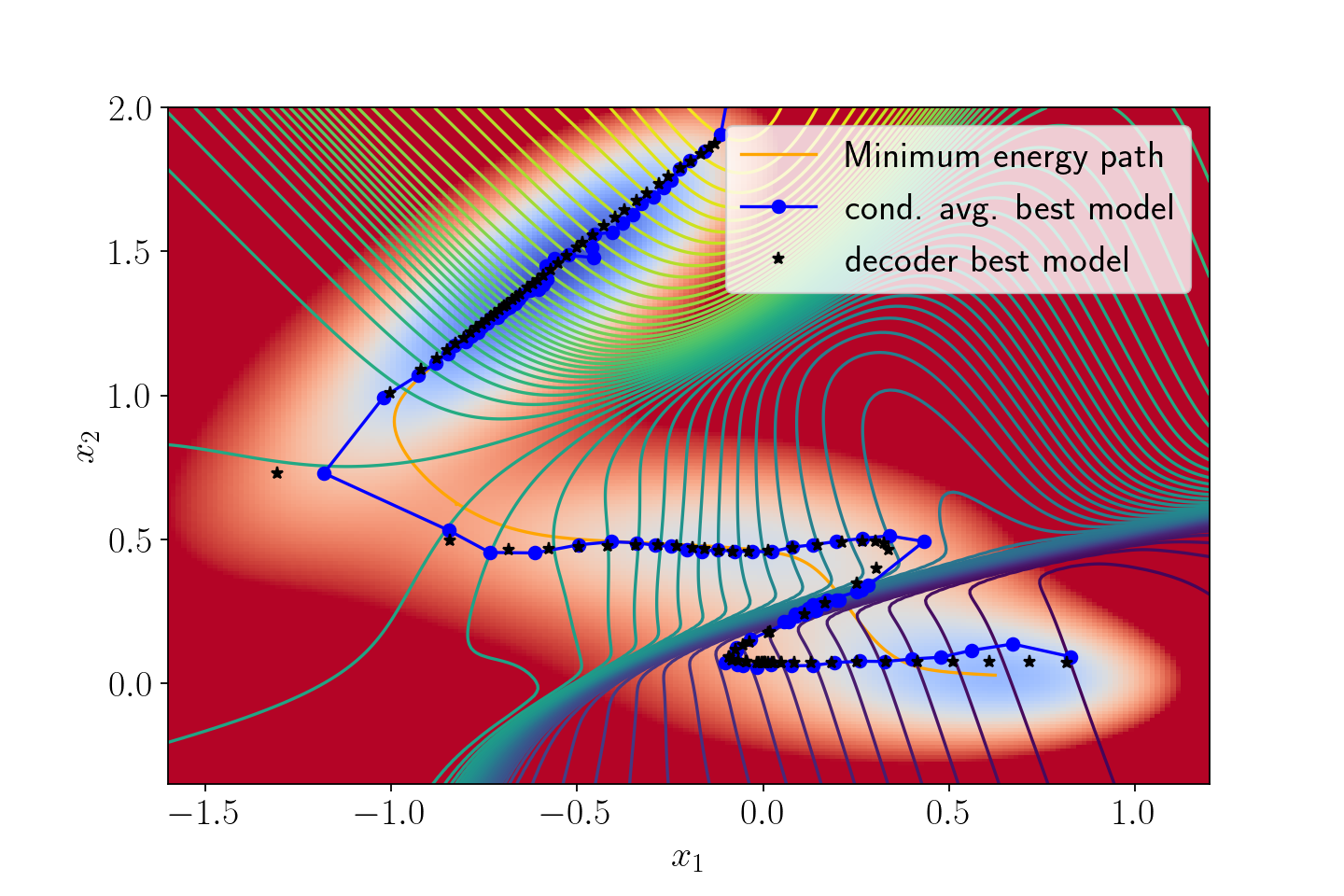}
		\caption{}
            \label{fig:bolz_plus_reac_cdt_avg_heatmap_a}
	\end{subfigure}
	\begin{subfigure}[b]{0.49\textwidth}
		\centering
		\includegraphics[width=\textwidth]{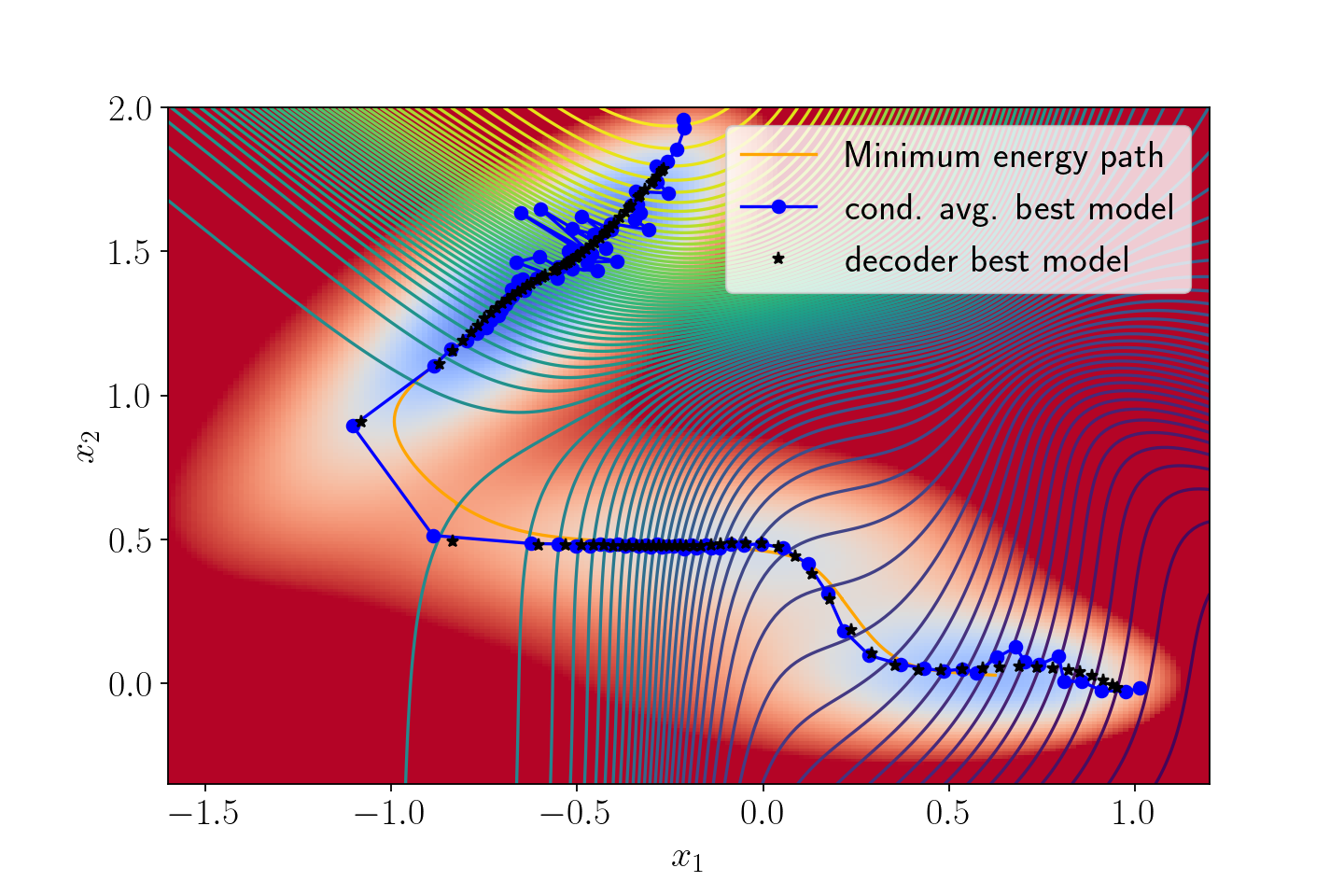}
		\caption{}
            \label{fig:bolz_plus_reac_cdt_avg_heatmap_b}
	\end{subfigure}
	\caption{Minimum energy path, conditional averages and decoder on the potential energy heatmap with isolevels of the encoder~$f_\mathrm{enc}$ for the AE models (2, 5, 5, 1, 20, 20, 2) trained on~\eqref{eq:sLh_lambda} with (a) $\lambda = 0.5$ and (b) $\lambda = \frac{1}{11}$. Conditional averages are consistently computed using a $\lambda$-mixture of both measures with 100 uniformly spaced bins in encoder space~$\cZ$.}
	\label{fig:bolz_plus_reac_cdt_avg_heatmap}
\end{figure}

Depending on the relative weight of the two reconstruction errors in the definition of~$\sLh_\lambda(\theta)$, the ``wrong solution'' can be avoided since configurations distributed according to the reactive measure emphasize reconstruction efforts in the transition region. In Figures~\ref{fig:bolz_vs_reac_cdt_avg_heatmap} and~\ref{fig:bolz_plus_reac_cdt_avg_heatmap}, the training results starting from the same initial conditions are shown for various values of~$\lambda$. For~$\lambda=1$ (training on the Boltzmann--Gibbs distribution only), the ``wrong solution'' is obtained. For $\lambda=0$ (only configurations from the reactive path measure), transition regions are well described but the two important modes in the probability measure are not well taken into account. Additional numerical tests (not reported here) indicate that the best results are obtained for~$\lambda$ small, as this is a situation where the addition of a significant fraction of points in the transition regions allows to avoid converging to the ``wrong solution''. The value of the parameter~$\lambda$ could be fine-tuned (using a $K$-fold cross-validation procedure for instance) to obtain the most satisfactory results in terms of the values of the alignment criterion~\eqref{eq:cos_angle_alignement} and distance between decoder values and conditional averages. 

\subsection{Describing multiple transition paths}
\label{sec:numerics_multiple_paths}

To illustrate the approach presented in Section~\ref{sec:multiple_path_2_state} to describe multiple transition paths, we first consider a double well potential on a circle. More precisely, we consider the potential energy function in $\mathbb{R}^2$
\begin{equation}
\label{eq:doublewell_circle}
\begin{aligned}
V(x_1,x_2) = 2 x_2^2 + 10 \left( x_1^2 + x_2^2 -1\right)^2.
\end{aligned}
\end{equation}
A dataset of $2 \times 10^5$ points distributed according to the Boltzmann--Gibbs measure was generated using a Euler--Maruyama discretization of the overdamped Langevin dynamics with $\beta = 2$ and $\Delta t = 0.01$. The early stopping waiting time $n_\mathrm{wait}=50$, learning rate~$0.005$ and minibatch size~$500$ were kept constant for all the subsequent training experiments with~\eqref{eq:loss_multiple_dec} and its variations. The weights of the regularization terms were not chosen using cross-validation, but chosen based on preliminary runs to identify appropriate values. 

\paragraph{Unregularized approach.}
We compare in Figure~\ref{fig:double_dec_circle} results obtained with $K=2$ decoders when minimizing the unregularized loss~\eqref{eq:loss_multiple_dec} to results with a single decoder. When a single decoder is considered, the decoder path describes accurately both modes, but only one transition path is correctly indexed, while on the second path, in the zone of lowest weight according to the Boltzmann--Gibbs measure, the description is less precise and the encoder function varies very fast. When~$K=2$ decoders are considered, each decoder takes care of a mode of the probability measure, with some interface more or less in the middle. The encoder has rather horizontal isolevels in the modes, constructed so that the decoder path is close to some principal curve. Of course, this does not allow to infer anything about the transition mechanism for configurations switching from one probability mode to another.

\begin{figure}[!ht]
	\centering
	\begin{subfigure}[b]{0.49\textwidth}
		\centering
		\includegraphics[width=\textwidth]{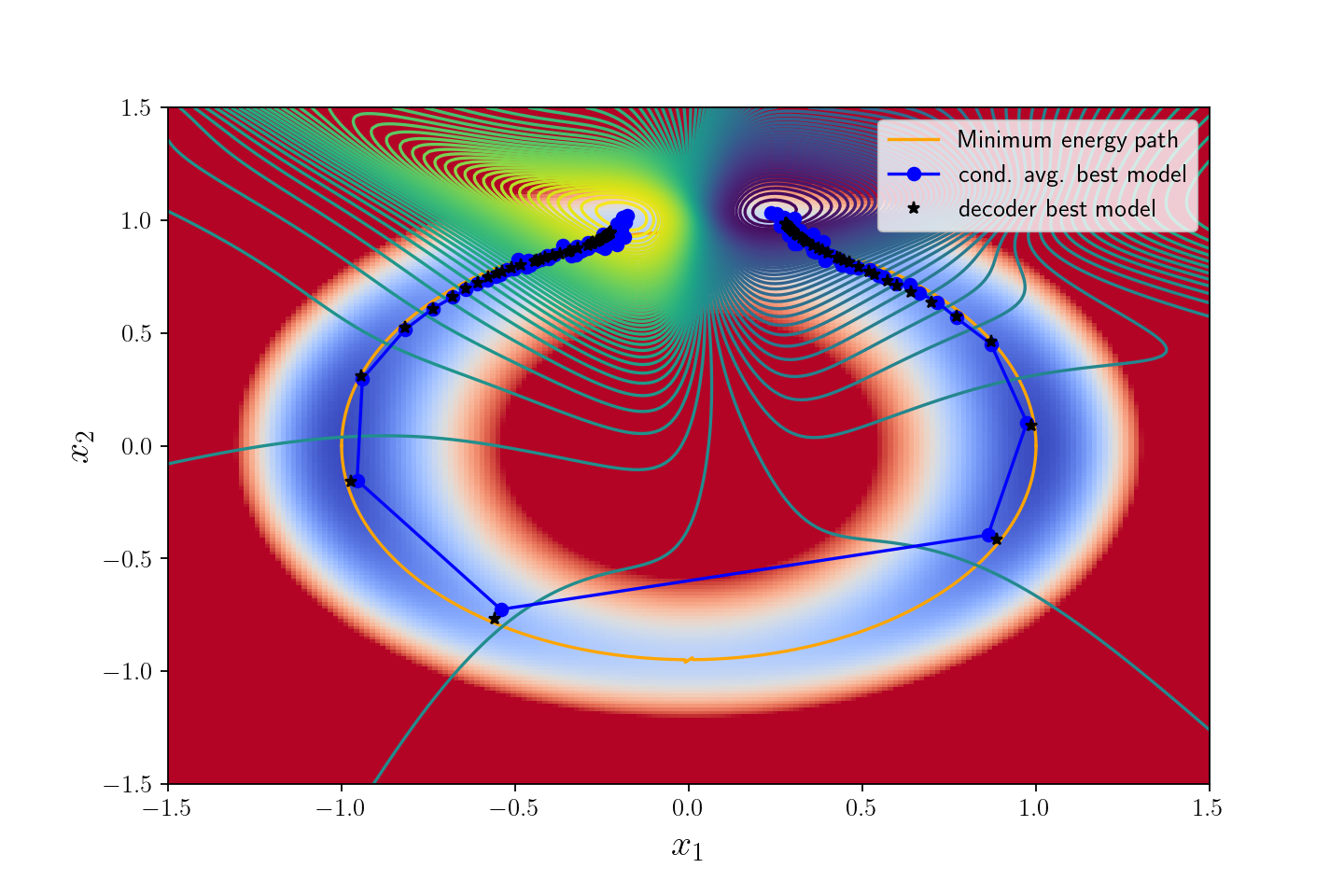}
		\caption{}
            \label{fig:double_dec_circle_a}
	\end{subfigure}
	\begin{subfigure}[b]{0.49\textwidth}
		\centering
		\includegraphics[width=\textwidth]{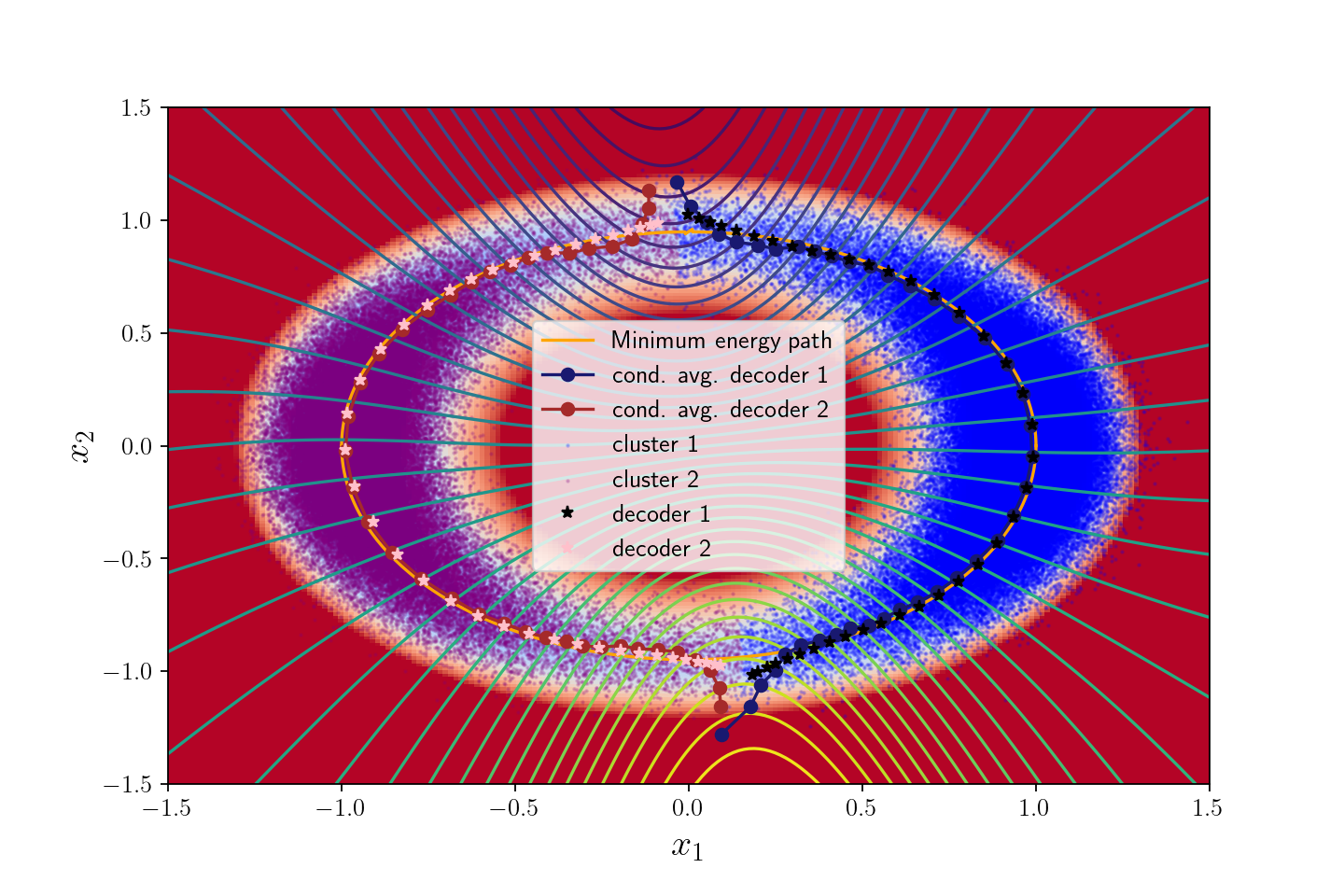}
		\caption{}
            \label{fig:double_dec_circle_b}
	\end{subfigure}
	\caption{(a): Minimum energy path, conditional averages and decoders on the potential energy heatmap with isolevels of the encoder $\fenc$ for models with a single decoder ($K=1$). (b): Same plots for a model with~$K=2$ decoders on top of which the training dataset points are colored according to the index of the decoder which leads to the minimal reconstruction error. Conditional averages are computed using 100 bins uniformly spaced in the encoded dimension.}
	\label{fig:double_dec_circle}
\end{figure}

Further physical insights need to be injected in the model in order to obtain a reconstruction of the system which is also relevant to describe the transition paths switching from one metastable region to another. One possible approach to this end would be to train the model on configurations sampled according to the reactive path measure, already considered in Section~\ref{sec:changing_ref_measure}, as each decoder would concentrate on one mode of this probability measure (located around the transition points in the vicinity of~$(0,-1)$ and~$(0,1)$).

\begin{figure}[!ht]
	\centering
	\begin{subfigure}[b]{0.45\textwidth}
		\centering
		\includegraphics[width=\textwidth]{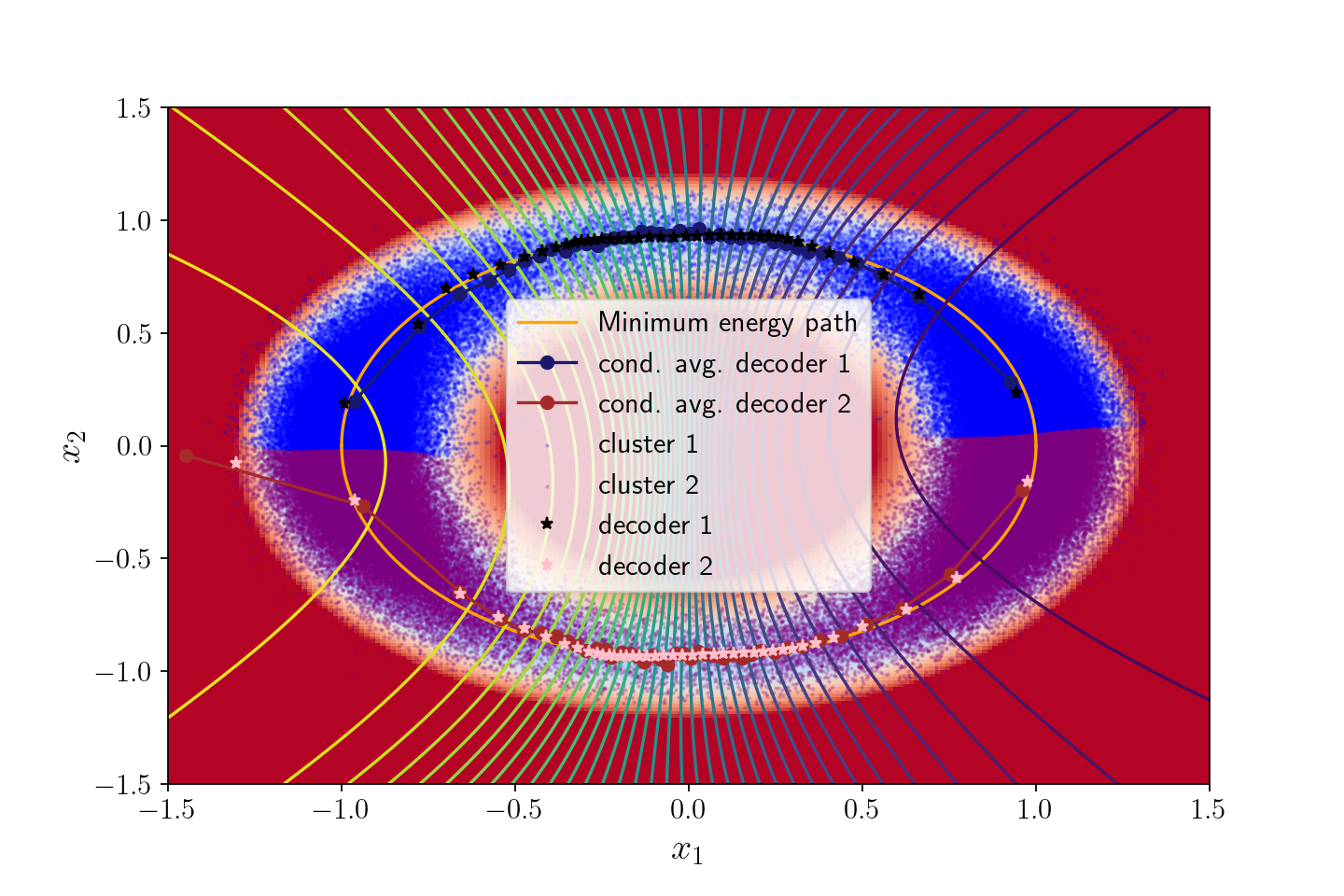}
		\caption{}
            \label{fig:double_decoder_b}
	\end{subfigure}
	\begin{subfigure}[b]{0.45\textwidth}
		\centering
		\includegraphics[width=\textwidth]{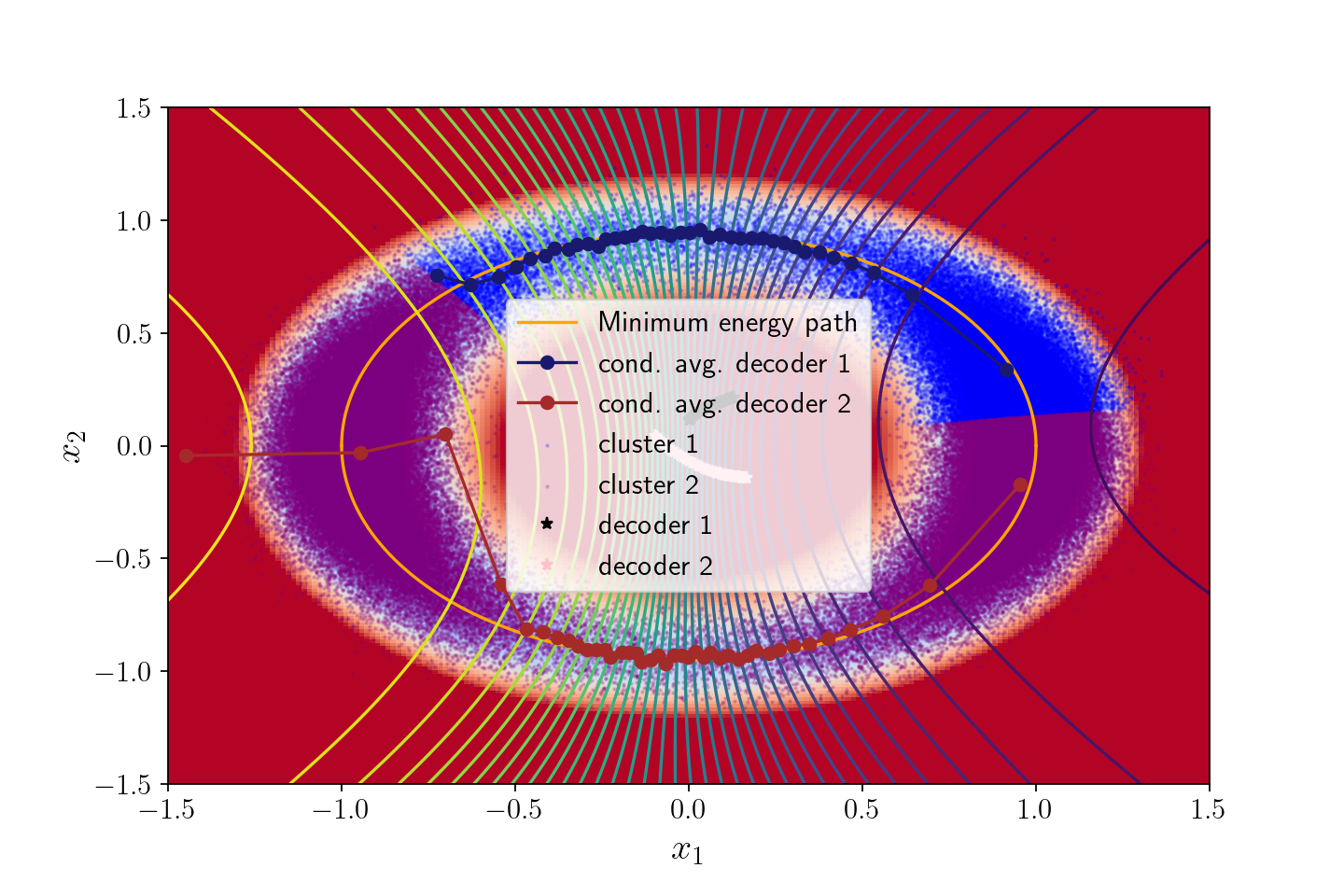}
		\caption{}
            \label{fig:double_decoder_c}
	\end{subfigure}
	\caption{Minimum energy path, conditional averages and decoders on the potential energy heatmap with isolevels of the encoder $f_\mathrm{enc}$ for models trained to minimize the loss~\eqref{eq:loss_multiple_dec_constrained} 
		(a) $\lambda_0 = 1$, $\lambda_1 = 10^4$ and $\lambda_2 = 1$ and (b) $\lambda_0 = 0$, $\lambda_1 = 10^4$ and $\lambda_2 = 1$. These results should be compared to the one obtained in Figure~\ref{fig:double_dec_circle_b}, which corresponds to the choice~$\lambda_0 = 1$ and $\lambda_1 = \lambda_2 = 0$.}
	\label{fig:double_decoder}
\end{figure}

\paragraph{Training with regularization.}
Including the regularization terms to the multiple decoder loss as in~\eqref{eq:loss_multiple_dec_constrained} allows to obtain a model for which each decoder correctly indexes a different transition path (see Figure~\ref{fig:double_decoder_b}). In fact using only the added penalization terms ($\lambda_0 = 0$) allows to obtain a collective variable able to index both paths (see Figure~\ref{fig:double_decoder_c}). The reconstruction error with the two decoders allows to perform a clustering of the dataset corresponding to the two transition paths. 

We illustrate the behavior of models obtained by minimizing the loss function~\eqref{eq:loss_multiple_dec_constrained_2} for the entropic switch potential~\cite{Park2003}:
\begin{equation}
\label{eq:entropic_switch_potential}
\begin{aligned}
V(x_1,x_2) & = 3 \, \mathrm{e}^{-x_1^2} \left( \mathrm{e}^{ - \left(x_2 - \frac{1}{3} \right)^2} - \mathrm{e}^{-\left(x_2 - \frac{5}{3}\right)^2} \right) - 5 \,  \mathrm{e}^{-x_2^2} \left( \mathrm{e}^{ - \left(x_1 - 1 \right)^2} - \mathrm{e}^{-\left(x_1 + 1 \right)^2} \right) \\ & \;+ 0.2 \, x_1^4 + 0.2 \left(x_2 - \frac{1}{3}\right)^4.
\end{aligned}
\end{equation}
The dataset of points distributed according to the associated Boltzmann--Gibbs measure was generated following the same procedure as for the M{\"u}ller--Brown potential (see Section~\ref{sec:interpretation_numerics}) with a time step~$\Delta t = 0.01$ and~$\beta = 2$. There are $N_\mathrm{pen} = 2$ points at which the values of the encoder are fixed, respectively to~0 for~$\wx^1 = (-1.05, 0.04)$ and to~1 for~$\wx^2 =(1.05, -0.04)$. The training results of an AE model (2, 5, 5, 1, 20, 20, 2) with two decoders using the loss~\eqref{eq:loss_multiple_dec_constrained_2} are presented in Figure~\ref{fig:ent_switch_doub_dec}. Using the semi--supervised approach to constrain both decoders in each well (Figure~\ref{fig:ent_switch_doub_dec_b}) allows to differentiate correctly the top and bottom paths linking the right and left well. Depending on the initial parameters in the model, when the variance term is used instead (Figure~\ref{fig:ent_switch_doub_dec_a}), these paths are not as nicely separated. Such a behavior cannot be detected by observing the convergence of the decoder to the conditional averages, or the angle between the gradient of the encoder and the derivative of the decoder. One needs to rely on the semi-supervised nature of the problem, and analyze decoder paths to see whether they go from one probability mode to the other. 

\begin{figure}[!ht]
	\centering
	\begin{subfigure}[b]{0.49\textwidth}
		\centering
		\includegraphics[width=\textwidth]{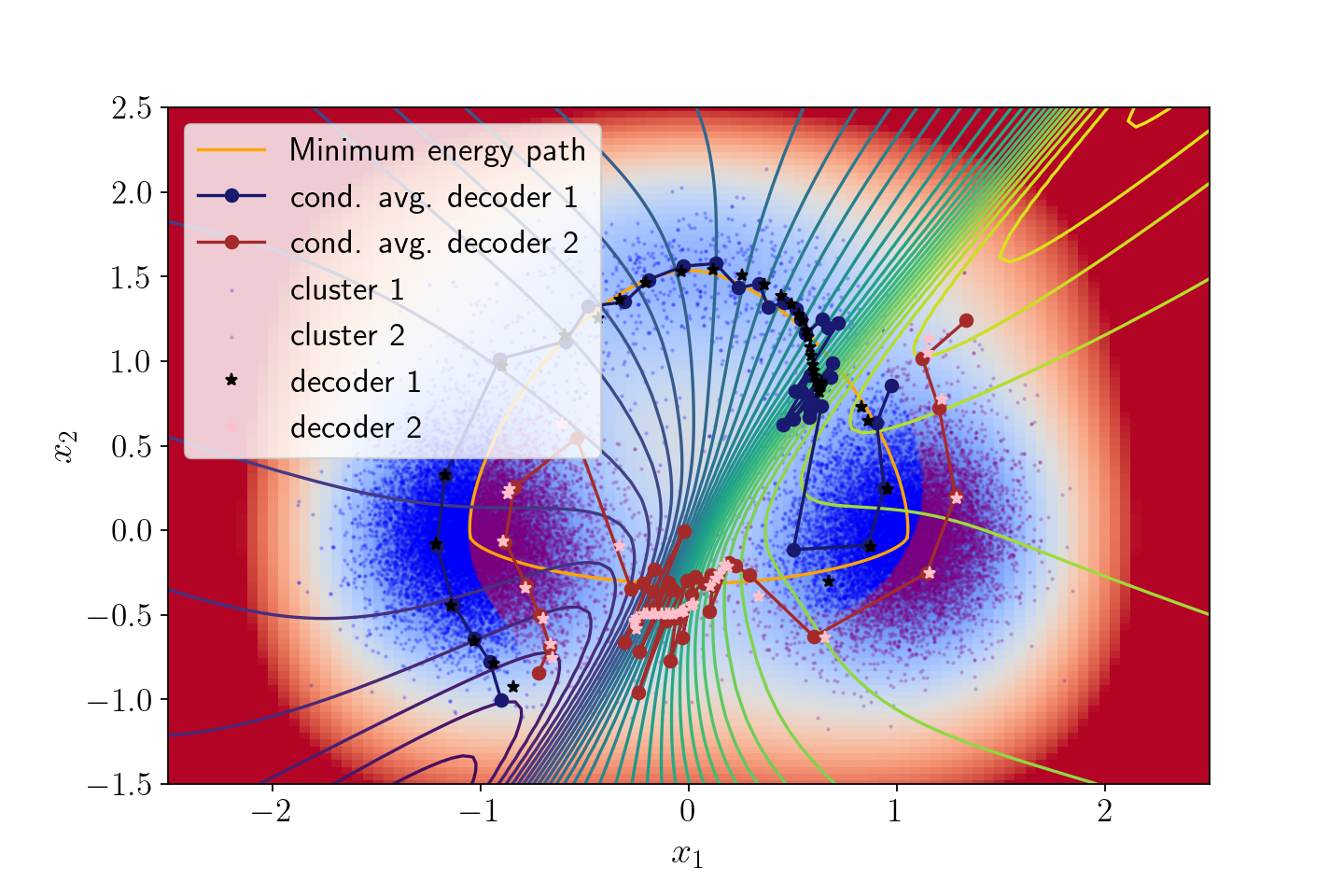}
		\caption{}
            \label{fig:ent_switch_doub_dec_a}
	\end{subfigure}
	\begin{subfigure}[b]{0.49\textwidth}
		\centering
		\includegraphics[width=\textwidth]{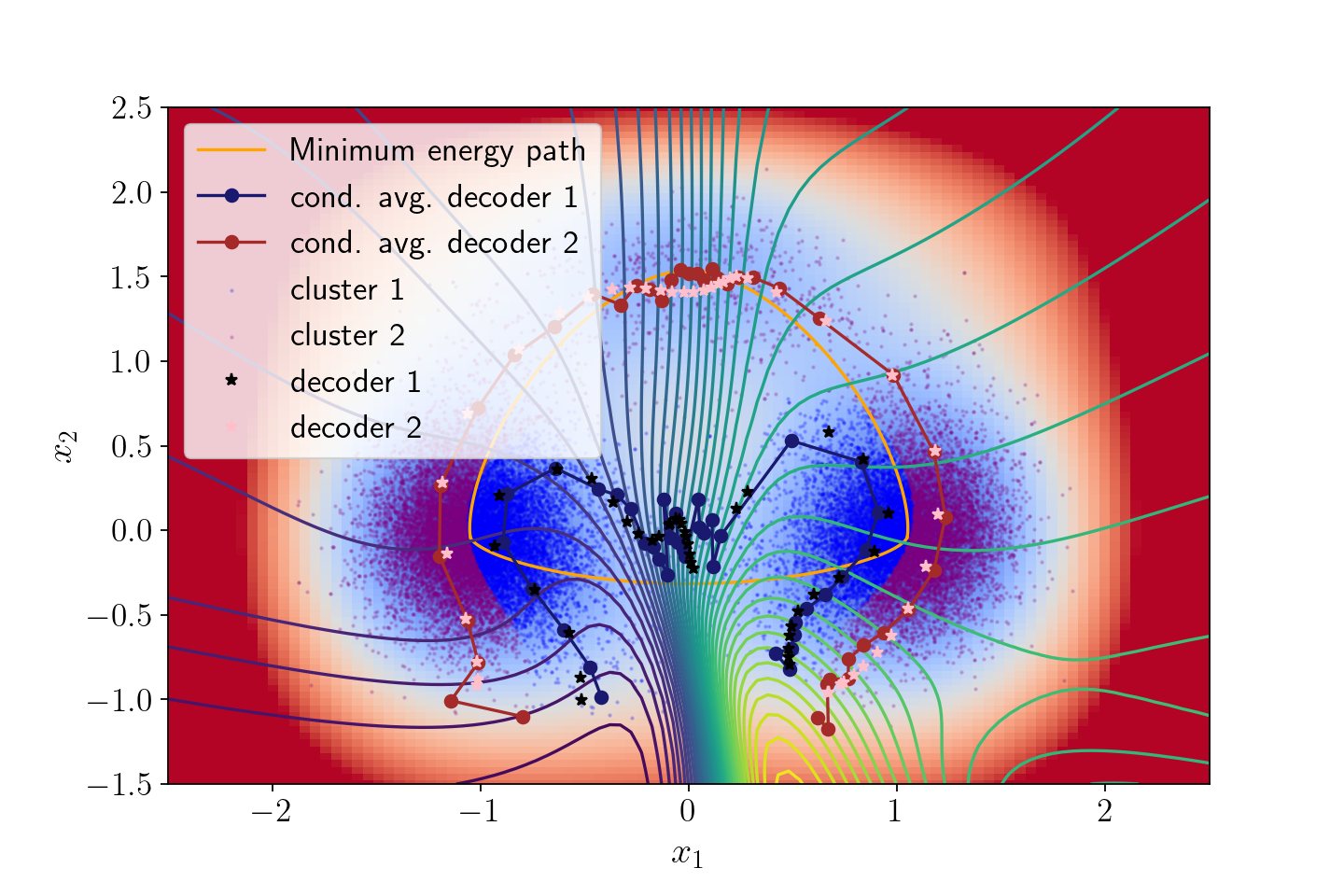}
		\caption{}
            \label{fig:ent_switch_doub_dec_b}
	\end{subfigure}
	\caption{Minimum energy path, conditional averages and decoders on the potential energy heatmap with isolevels of the encoder for models trained to minimize the loss~\eqref{eq:loss_multiple_dec_constrained} with (a) $\lambda_0 = 1$, $\lambda_1 = 1$, $\lambda_2 = 10^{-4}$ and $\lambda_3 = \lambda_4 = 0$; (b) $\lambda_1 = 0$, $\lambda_2 = 10^{-4}$ and $\lambda_3 = \lambda_4 = 0.1$. The dataset points are colored according to the decoders leading to the minimal reconstruction error. 
	} 
	\label{fig:ent_switch_doub_dec}
\end{figure}

\paragraph{Overparametrized setting.}
We finally consider a situation where there are more decoders than transition paths. We tested this on a system for which there is only one transition path, namely the M{\"u}ller--Brown potential (with the same dataset as the one used in Section~\ref{sec:interpretation_numerics}), for which we are looking for~$K=2$ paths corresponding to different decoders. In such a situation, the two paths build for each decoders are almost parallel except in the high energy region (see Figure~\ref{fig:double_dec_mullerbrown}). An analysis of the transition paths therefore reveals that there is in fact a single transition mechanism here. We expect that such an analysis can be similarly carried out in more complicated physical systems in order to detect redundant decoders.

\begin{figure}[!ht]
	\centering
	\includegraphics[width=14.5cm]{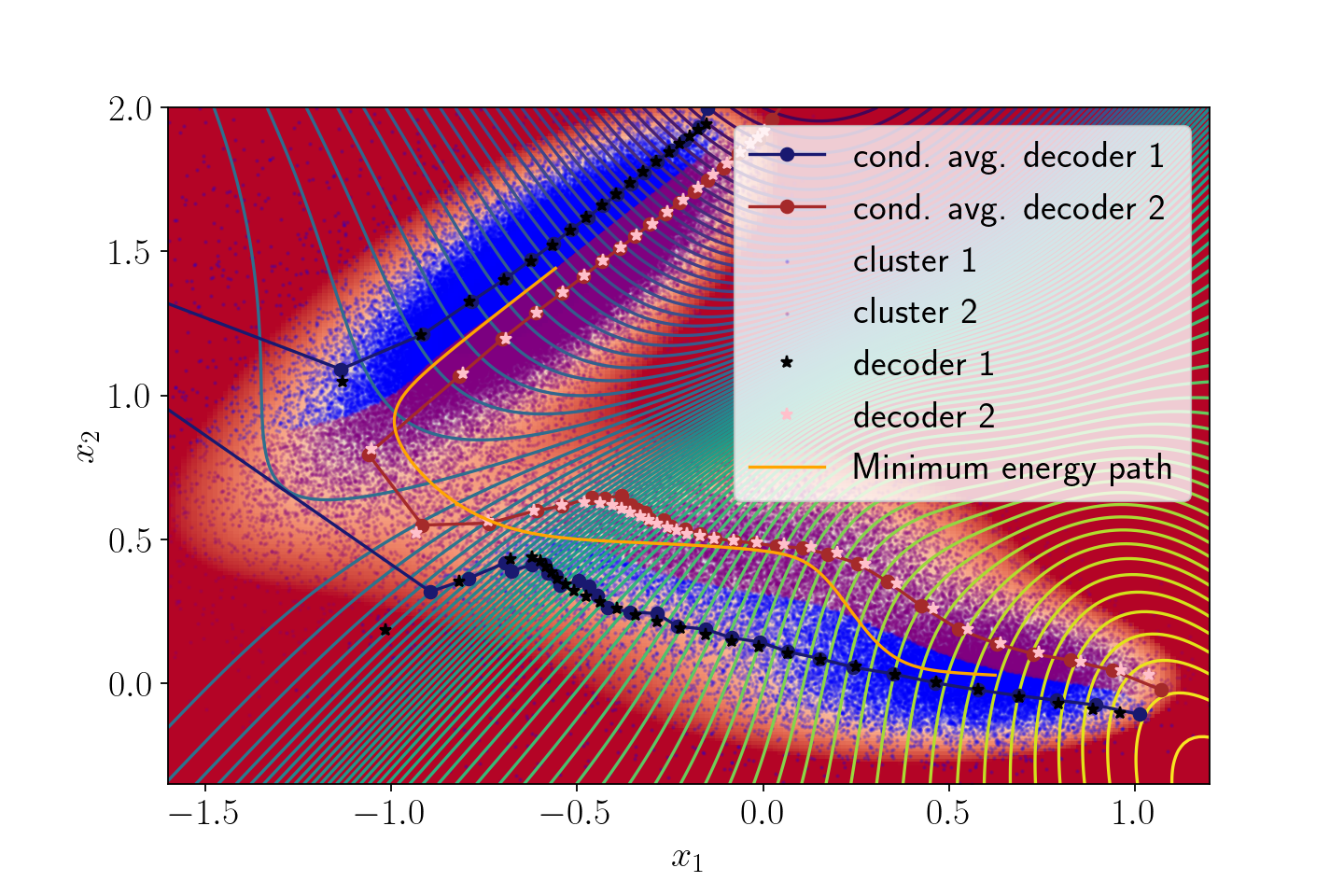}
	\caption{ Minimum energy path, conditional averages and decoders on the potential energy heatmap with isolevels of the encoder $f\mathrm{enc}$. The training dataset points are colored according to which decoder leads to the minimal reconstruction error for models trained to minimize the  loss~\eqref{eq:loss_multiple_dec_constrained} with $\lambda_0=1$, $\lambda_1=1$ and $\lambda_2=10^{-4}$.} 
	\label{fig:double_dec_mullerbrown}
\end{figure}

\subsection{Alanine dipeptide in water}
\label{sec:ad}

We finally study in this section the use of autoencoders on the molecular system alanine dipeptide. We compare the quality of unregularized and regularized autoencoders. For the latter ones, a regularization term is added to the reconstruction loss in order to require autoencoders to parametrize the leading eigenfunctions (corresponding to the largest non-trivial eigenvalues) of the transfer operator of the underlying dynamics; see \eqref{regularized-loss-in-practice} of Section~\ref{sec:parametrzing_eigenfunc_multiple_states}. 

\paragraph{Description of the system.}
Alanine dipeptide is composed of~$22$ atoms, $10$~of which are non-hydrogen atoms. It is well known that the conformations of the system, in particular the metastable states, can be well described by two backbone dihedral angles $\phi$ and $\psi$.
A training data consisting of~$1.5\times 10^5$ configurations sampled along a trajectory was obtained from MD simulations of the molecule solvated in water, using the GROMACS package~\cite{GROMACS}; see Appendix~\ref{app:MD} for further precisions. As Figure~\ref{fig-ad-traj_a} indicates, the metastable conformations of the system are well sampled by the training data.
Figure~\ref{fig-ad-traj_b} shows the conditional variances of the Cartesian coordinates of the~$10$ non-hydrogen atoms, given the values of the two dihedral angles, computed using  trajectory data binned on a uniform grid of size $100\times 100$ according to the values of the dihedral angles. 
One can observe that the conditional variance is small in the metastable region where the $\phi$ angle is negative, but it is larger when the configuration is
near the boundary of the metastable regions or in the metastable region where $\phi$ is positive. In essence, the conditional variance is smaller in regions of the phase-space which have a larger weight under the target Boltzmann--Gibbs measure. The total variance of the trajectory data is $3.988$~nm$^2$, whereas the average of the conditional variances (represented in Figure~\ref{fig-ad-traj_b}) with respect to the empirical distribution of the two dihedral angles (shown in Figure~\ref{fig-ad-traj_a}) is $0.167$~nm$^2$. The latter is an approximation over a grid in latent space of the quantity in~\eqref{eq:recon_error_by_condvar} (and therefore an approximation of the reconstruction error) in the case where the encoder~$\fenc$ is given by the two dihedral angles~$\phi,\psi$ and the decoder is the conditional mean.

\begin{figure}[h!]
	\centering
	\begin{subfigure}{0.32\textwidth}
		\includegraphics[width=1.0\textwidth]{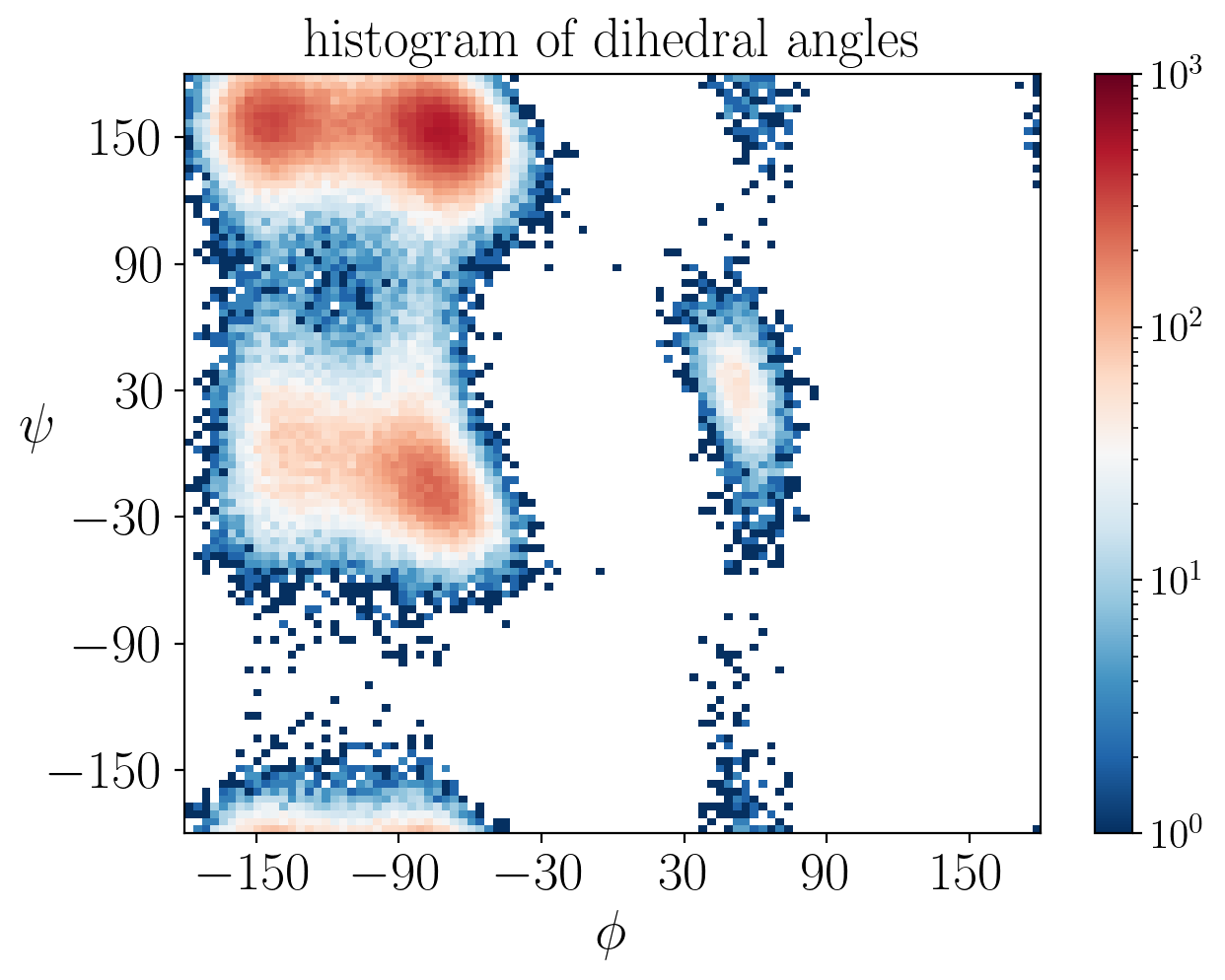}
		\caption{}
            \label{fig-ad-traj_a}
	\end{subfigure}
	\begin{subfigure}{0.32\textwidth}
		\includegraphics[width=1.0\textwidth]{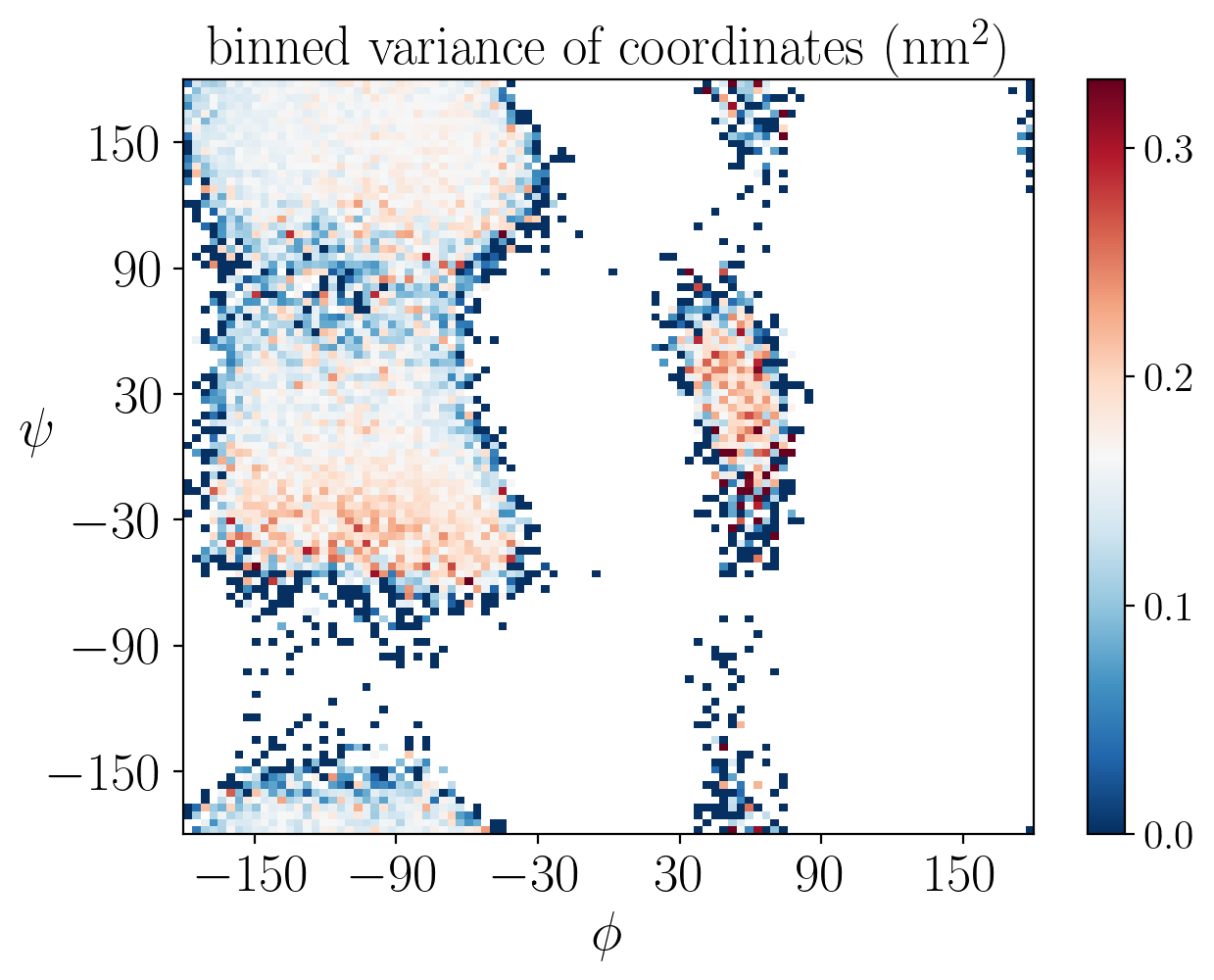}
		\caption{}
            \label{fig-ad-traj_b}
	\end{subfigure}
	\caption{Alanine dipeptide. (a) Histogram of dihedral angles along the $1.5~\mu$s-long trajectory. (b) Conditional variances of the Cartesian coordinates of the~$10$ non-hydrogen atoms given the value of two dihedral angles.}
        \label{fig-ad-traj} 
\end{figure}

\paragraph{Conditional variances for unregularized autoencoders.}
With the data prepared as discussed above, we first train an autoencoder with the standard reconstruction loss (\emph{i.e.}~\eqref{regularized-loss-in-practice} with~$\lambda_0 =1$ and~$\lambda_1=\lambda_2=0$). The dimension of the state space was chosen to be $D=30$, since we are only interested in the configurations of the $10$ non-hydrogen atoms in alanine dipeptide and the coordinate of each atom is in~$\mathbb{R}^3$. The encoded dimension was set to $d=2$. For the encoder, we employed
a neural network that consists of a transformation layer without training parameters (a map from $\mathbb{R}^{30}$ to $\mathbb{R}^{30}$ that aligns the configurations by minimizing the root mean squared deviation using Kabsch algorithm~\cite{Kabsch}; see Ref.~\citenum{ZLS22} for implementation details and also Ref.~\citenum{cv-using-invariant-rep} for an alternative approach) in order to guarantee both translation and rotation invariance, and a feedforward neural network of size $(30, 20, 15, 10, 2)$, \emph{i.e.}\ an input layer of size $30$, an output layer of size $2$, and three hidden layers of sizes~$20, 15, 10$, respectively. For the decoder, we employed a feedforward neural network of size~$(2, 10, 15, 20, 30)$. We used \textrm{tanh} as activation functions in both encoder and decoder (except in their output layers where there is no activation function). To train the neural network, the dataset of size $1.5 \times 10^5$ was split into a training set and a test set with ratio of sizes~$4:1$. We employed the Adam optimizer~\cite{Adam} in PyTorch with a (large) batch size $20,000$ and learning rate $2\times 10^{-3}$. The random seed was fixed in the training experiments below.

\begin{figure}[h!]
	\centering
	\begin{subfigure}{0.32\textwidth}
		\includegraphics[width=1.0\textwidth]{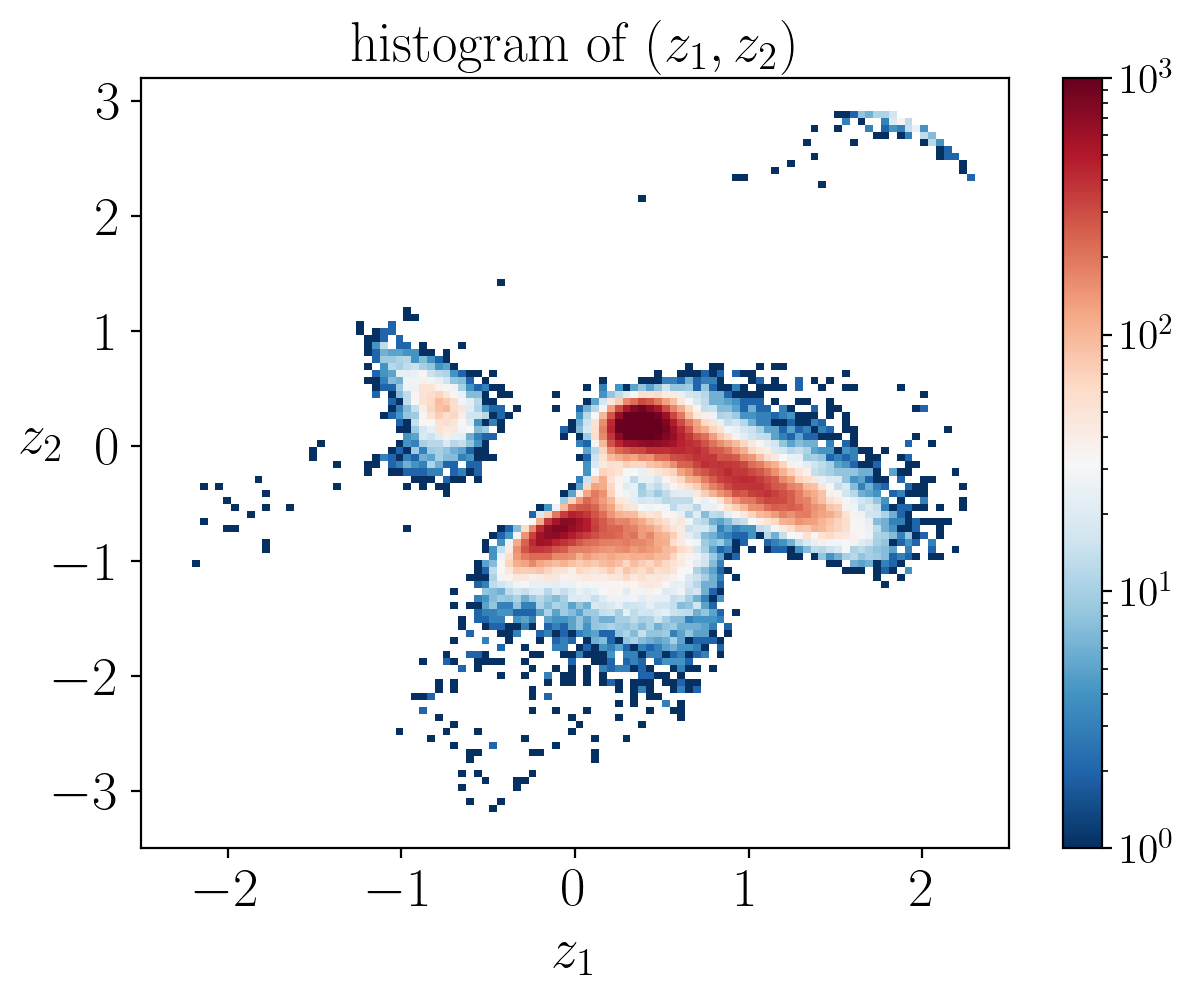}
		\caption{}
            \label{fig-ad-standard-ae_a}
	\end{subfigure}
	\begin{subfigure}{0.32\textwidth}
		\includegraphics[width=1.0\textwidth]{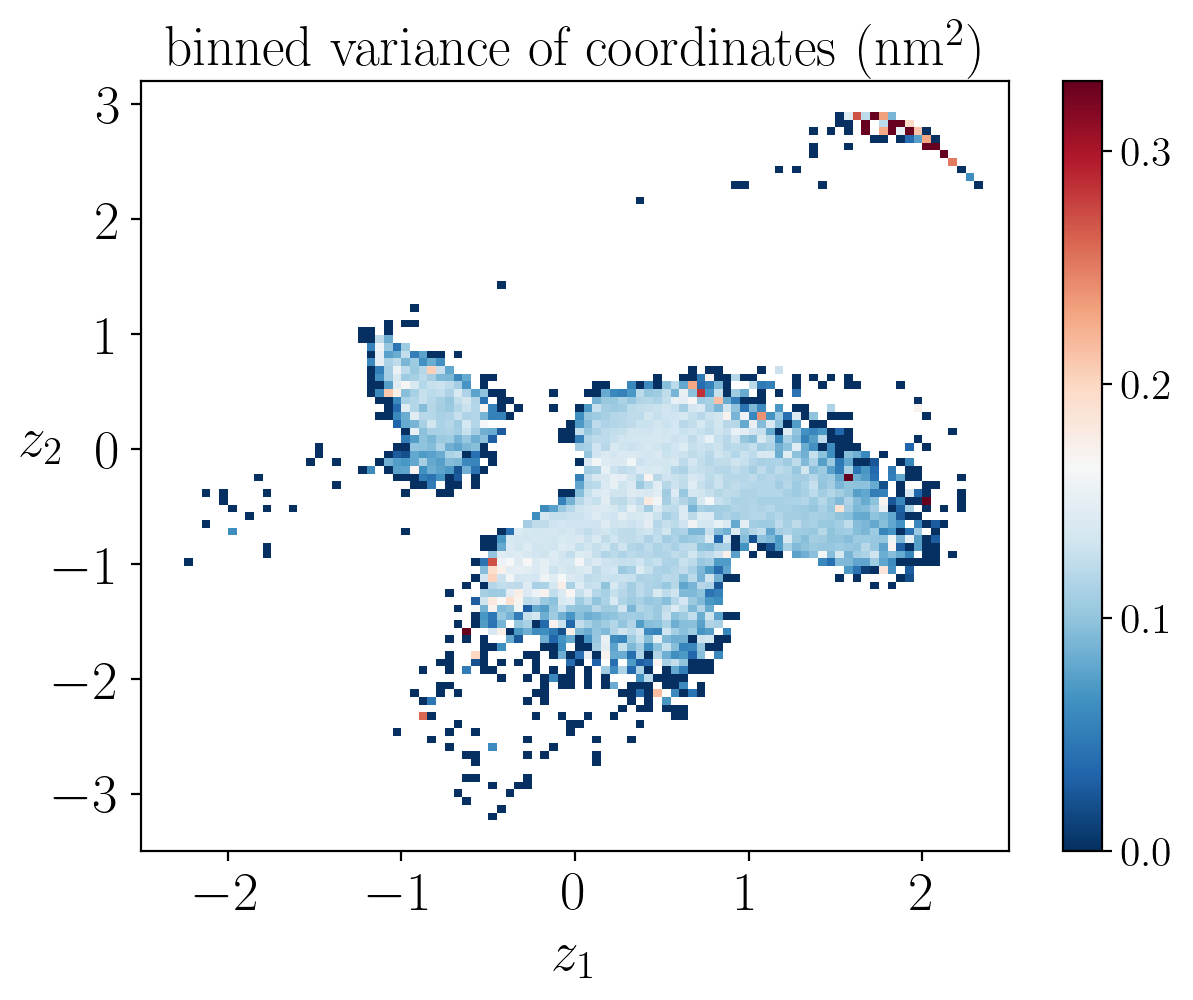}
		\caption{}
            \label{fig-ad-standard-ae_b}
	\end{subfigure}
	\begin{subfigure}{0.32\textwidth}
		\includegraphics[width=1.0\textwidth]{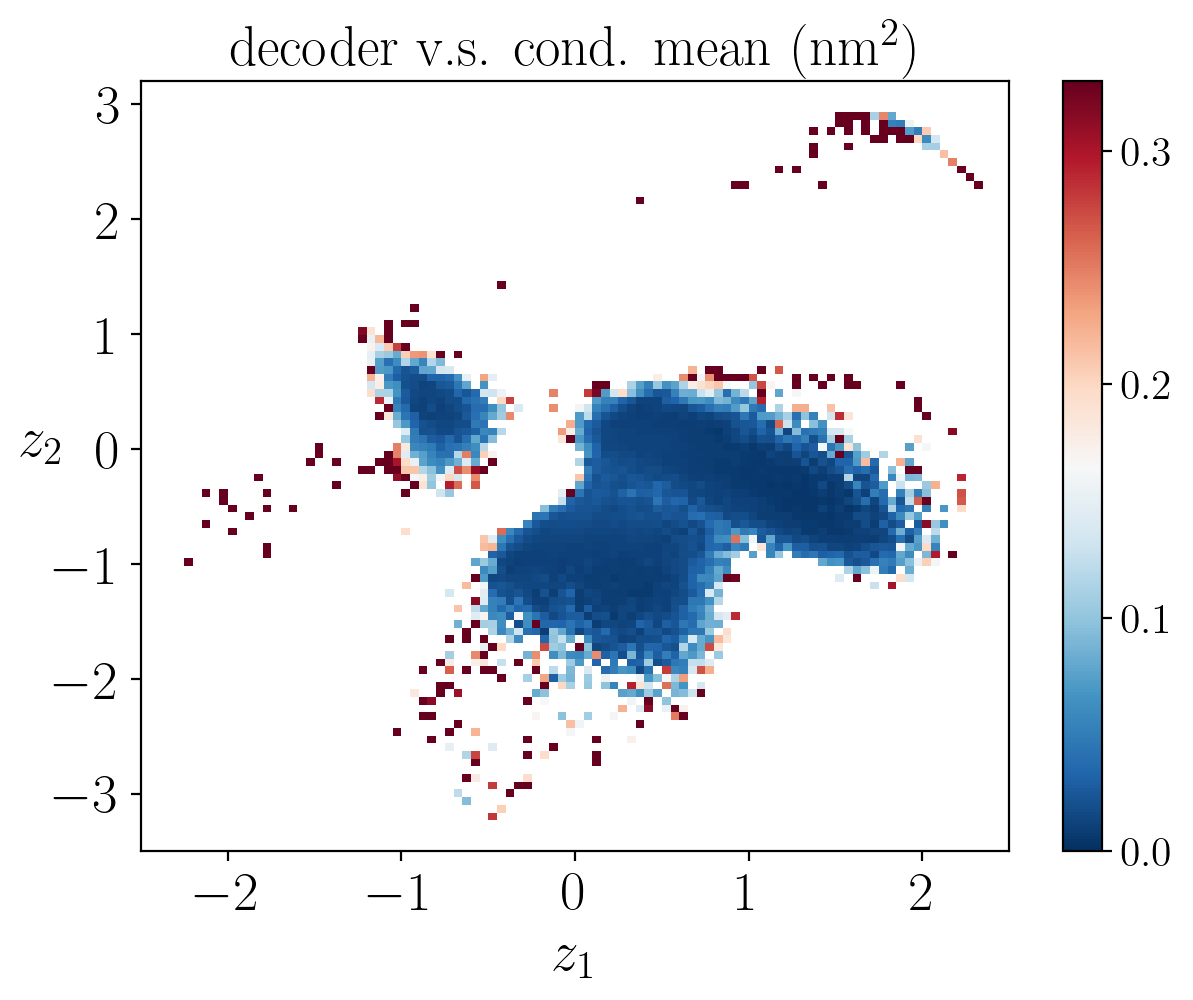}
		\caption{}
            \label{fig-ad-standard-ae_c}
	\end{subfigure}
	\caption{Autoencoder learned by training the standard autoencoder on alanine dipeptide with reconstruction loss. (a) Histogram of the encoded values along the $1.5~\mu$s-long trajectory.
		(b) Conditional variances of coordinates of the $10$ non-hydrogen atoms given
		the encoded values. (c) Mean squared differences between the decoded coordinates (by the autoencoder) and the conditional mean coordinates
		of the $10$ non-hydrogen atoms, given the encoded values.}
        \label{fig-ad-standard-ae}
\end{figure}

Figure~\ref{fig-ad-standard-ae} presents the results of the trained autoencoder after~$2,000$ training epochs. The histogram in Figure~\ref{fig-ad-standard-ae_a} shows that, similarly to the dihedral angles, the learned encoder is able to map the system's metastable conformations into different clusters in latent space. Figure~\ref{fig-ad-standard-ae_b} shows the conditional variances of the $10$ non-hydrogen atoms given the encoded values, computed using trajectory data binned on a uniform grid of size $100\times 100$ according to the encoded values. A comparison between Figure~\ref{fig-ad-traj_b} and Figure~\ref{fig-ad-standard-ae_b} reveals that the learned encoder indeed outperforms dihedral angles in reducing the conditional variances. In fact, as shown in Table~\ref{tab-ad-compare-cvs}, the average conditional variance (with respect to the empirical distribution of the encoder shown in Figure~\ref{fig-ad-standard-ae_a} is~$0.126$~nm$^2$, which is smaller than the average conditional variance in the case of dihedral angles. Figure~\ref{fig-ad-standard-ae_c} shows the mean square difference between the decoded coordinates and the conditional average coordinates of the $10$ non-hydrogen atoms given the encoded values. It can be seen that the decoded coordinates are close to the conditional averages in the metastable regions, while the difference is larger in certain low-density regions. 

\paragraph{Regularized autoencoders.}
We now consider the training loss~\eqref{regularized-loss-in-practice}. 
In our numerical experiments, we chose $K=3$, $(\omega_1,\omega_2,\omega_3)=(1.0, 0.4. 0.2)$, $(\lambda_0, \lambda_1, \lambda_2)=(1.0, 0.2, 10.0)$, and a lag-time $\tau=10\,\mathrm{ps}$. We adopted the same encoder, decoder and training parameters as for the unregularized networks considered later on in this section. For the regularizers, we employed three neural networks with identical sizes~$(2, 20, 20, 20, 1)$ in order to represent the three leading eigenfunctions.

Figure~\ref{fig-ad-reg-ae} shows the jointly learned encoder $f_{\mathrm{enc},\theta_1}$ and the three leading eigenfunctions~$\varphi_1, \varphi_2, \varphi_3$. As one can see in Figures~\ref{fig-ad-reg-ae_a}-\ref{fig-ad-reg-ae_b} and Table~\ref{tab-ad-compare-cvs}, similarly to the encoder trained with the standard reconstruction loss, the encoder trained with the loss~\eqref{regularized-loss-in-practice} is able to achieve dimensionality reduction with a smaller mean conditional variance (\emph{i.e.}\ $0.134$ nm$^2$) then the one obtained with dihedral angles.
Figure~\ref{fig-ad-reg-ae_c} shows the mean square difference between the decoded coordinates and the conditional mean coordinates of the $10$ non-hydrogen atoms given the encoded values. Similarly to the case of the standard reconstruction loss (Figure~\ref{fig-ad-standard-ae_c}),  the decoded coordinates are close to the conditional averages in the metastable regions, while the difference is visible in certain low--density regions. Figures~\ref{fig-ad-reg-ae_d} to~\ref{fig-ad-reg-ae_i} show the three leading eigenfunctions jointly learned in the training. Each of these eigenfunctions reveals information about certain transition events of the system on a large time-scale (related to the eigenvalues in Table~\ref{tab-ad-compare-cvs}, where larger eigenvalues correspond to transition events on longer time-scales\cite{msm_generation}). The eigenfunctions are close to being constant inside metastable regions, and exhibit a sign change in the transition regions between metastable states.
From these results, we see that 
training with the loss~\eqref{regularized-loss-in-practice} indeed yields an encoder that is capable of reconstructing conformations and at the same time parametrizing the first three eigenfunctions. We have also performed experiments with $K=1$ and $K=2$ in the loss~\eqref{regularized-loss-in-practice}, for which we have obtained similar results in terms of the encoder's quality for both conformation reconstruction and eigenfunction parametrization.

To summarize, training with the loss~\eqref{regularized-loss-in-practice} yields for this example results seemingly similar to the ones obtained with the standard reconstruction loss. This may be due to the simplicity of the alanine dipeptide system, for which the autoencoder trained to minimize the reconstruction error is already sufficient to characterize the dynamics, \emph{i.e.} the leading eigenfunctions. Nevertheless, the numerical experiments validate the capability of the training algorithm with the loss~\eqref{regularized-loss-in-practice}. We expect that this can provide more prominent regularizing effects in training autoencoders on MD systems with more complex kinetics. 

\begin{table}[htp]
  \begin{tabular}{lcccc}
    \hline 
    CV map & cond. var. & $\nu_1$ & $\nu_2$ & $\nu_3$ \\
    \hline 
    \hline 
    dihedrals   & $0.167$ & $0.99$ & $0.88$ & $0.74$ \\
    \hline
    standard AE & $0.126$ & $0.99$ & $0.88$ & $0.74$ \\
    \hline
    reg.\ AE& $0.134$ & $0.99$ & $0.88$ & $0.76$ \\
    \hline
  \end{tabular}
  \centering
  \caption{Comparison of different CV maps for alanine dipeptide. The CV maps are built using either the two dihedral angles (``dihedrals''), the encoder learned using the standard reconstruction loss (``standard AE''), and the encoder learned using the loss \eqref{regularized-loss-in-practice} (``reg.\ AE''). Column ``cond.  var.'' records the average conditional variances (unit: $\mathrm{nm}^2$) computed using different CV maps. Each row of the columns ``$\nu_1$'', ``$\nu_2$'' and ``$\nu_3$'' records the estimations of the first three non-trivial eigenvalues of the transfer operator with $\tau=10$ ps (see \eqref{eqn:raylaigh-ritz-quotient} and the follow-up discussion there). For the CV map ``reg. AE'', these eigenvalues and the eigenfunctions are estimated jointly during the training with the loss~\eqref{regularized-loss-in-practice}, whereas for the CV maps ``dihedrals'' and ``standard AE'' they are estimated by performing a subsequent training with the loss~\eqref{regularized-loss-in-practice} where $\lambda_0=0$ and the corresponding CV map is fixed.}
  \label{tab-ad-compare-cvs}
\end{table}

\begin{figure}[h!]
  \begin{subfigure}{0.32\textwidth}
    \includegraphics[width=1.0\textwidth]{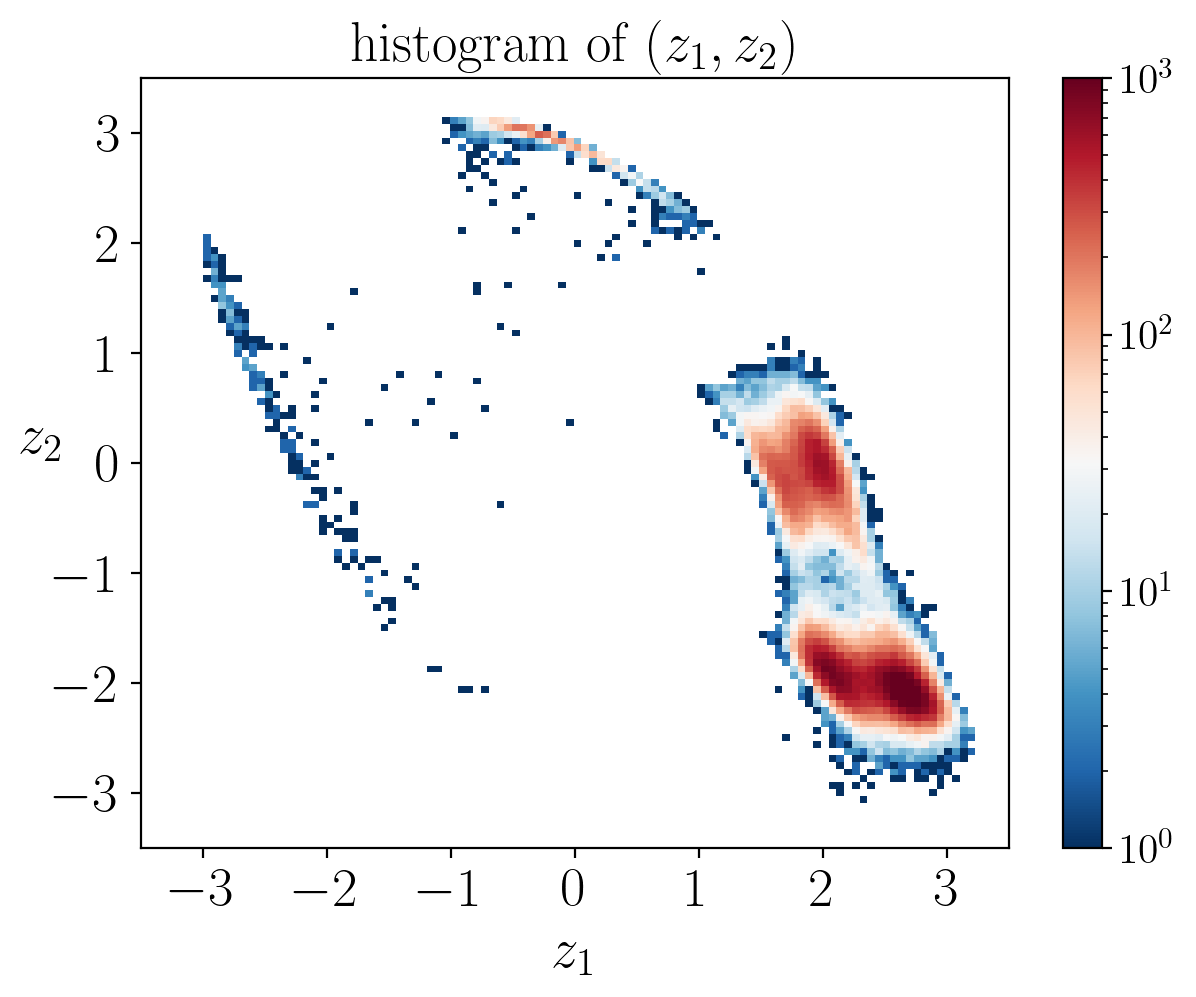}
    \caption{}
    \label{fig-ad-reg-ae_a}
  \end{subfigure}
  \begin{subfigure}{0.32\textwidth}
    \includegraphics[width=1.0\textwidth]{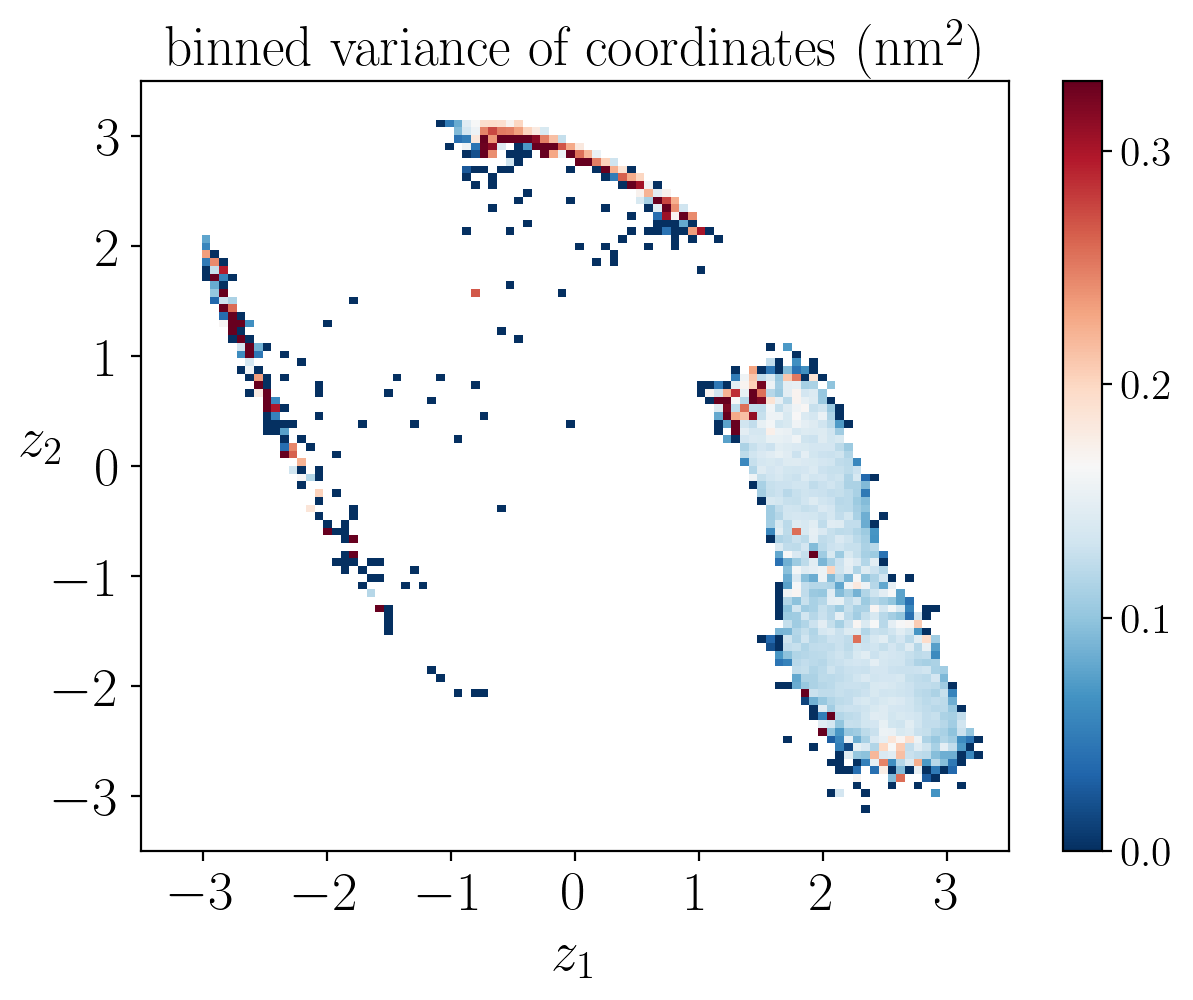}
    \caption{}
    \label{fig-ad-reg-ae_b}
  \end{subfigure}
  \begin{subfigure}{0.32\textwidth}
    \includegraphics[width=1.0\textwidth]{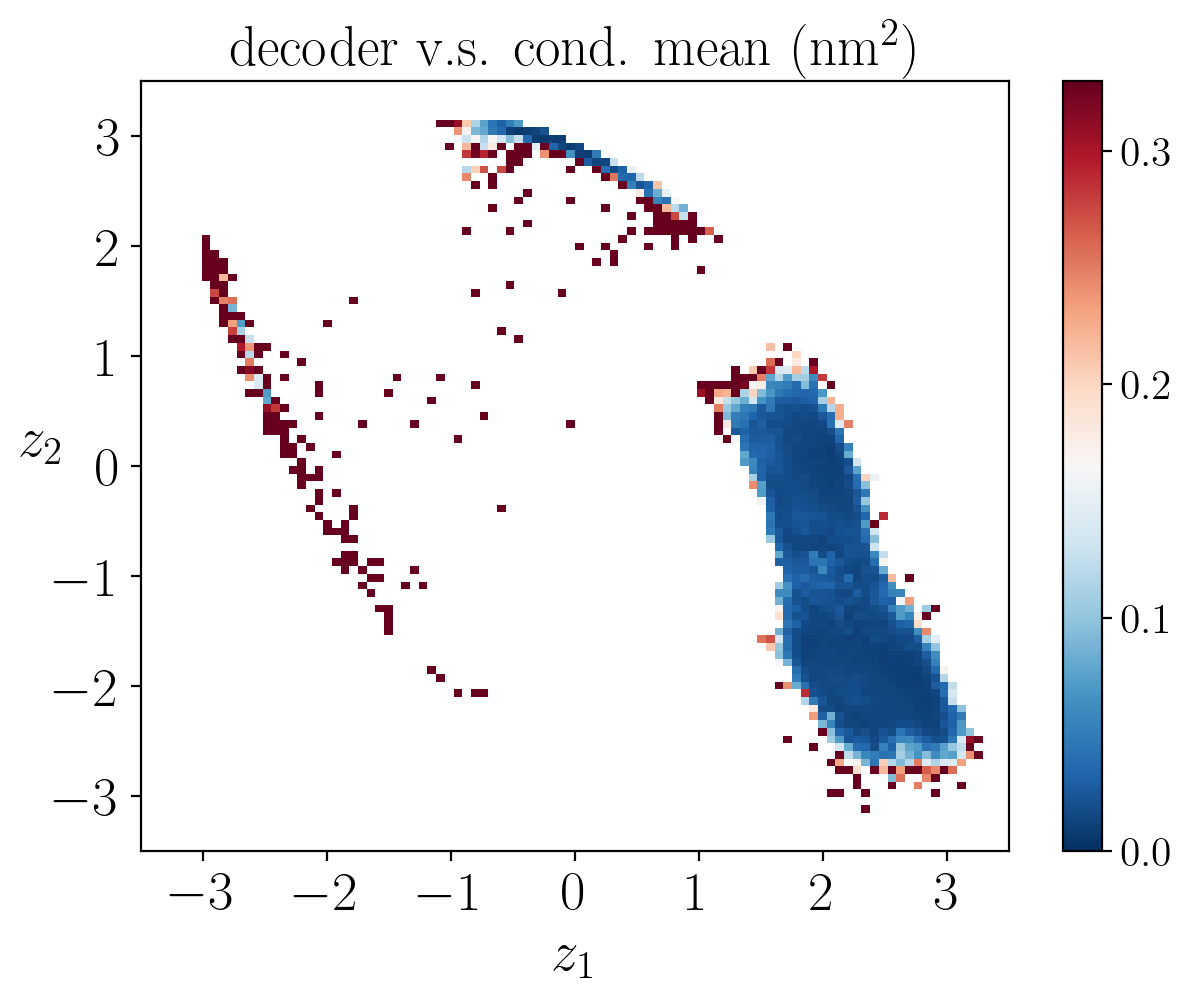}
    \caption{}
    \label{fig-ad-reg-ae_c}
  \end{subfigure}  
  \begin{subfigure}{0.32\textwidth}
    \includegraphics[width=0.95\textwidth]{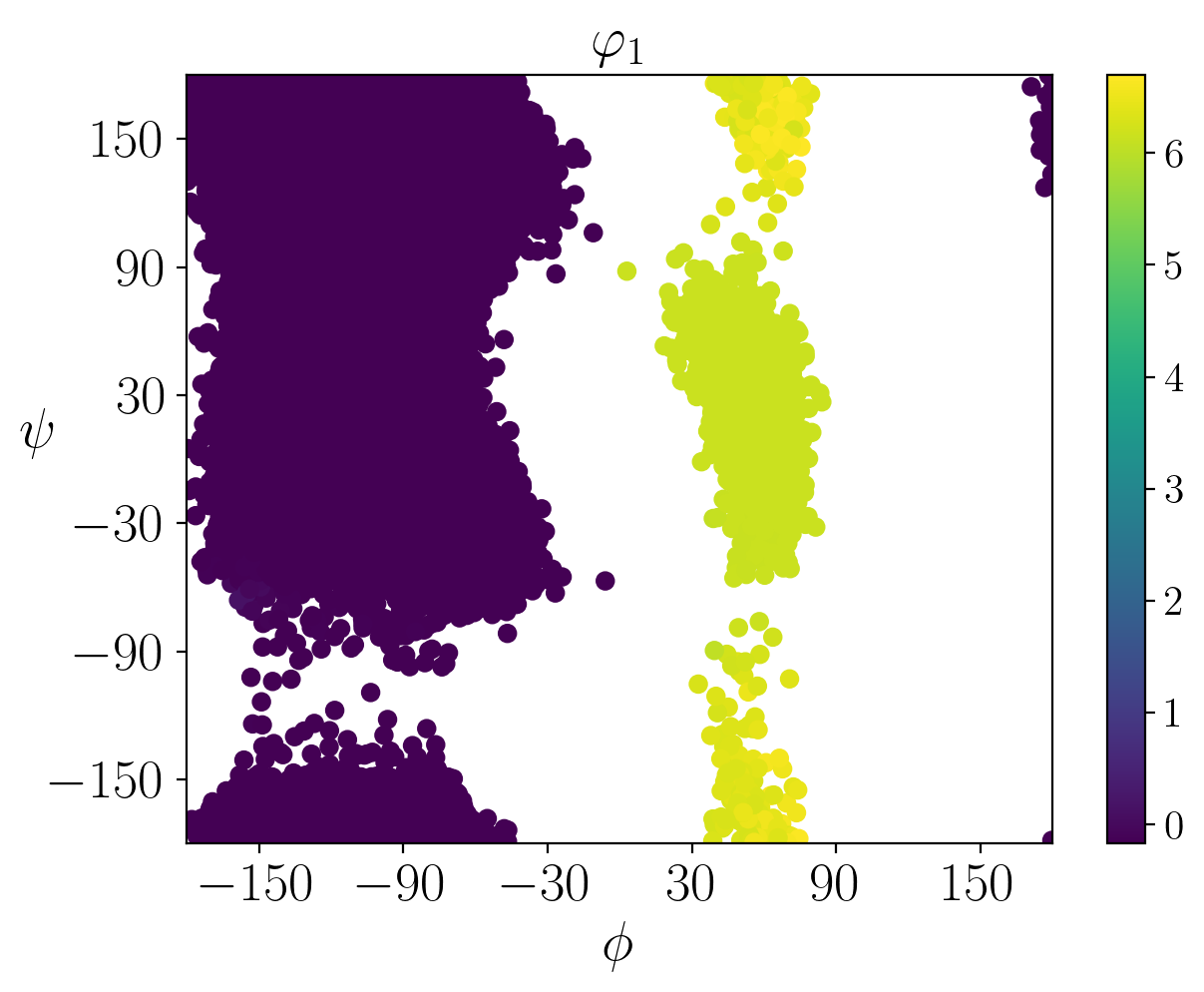}
    \caption{}
    \label{fig-ad-reg-ae_d}
  \end{subfigure}
  \begin{subfigure}{0.32\textwidth}
    \includegraphics[width=1.0\textwidth]{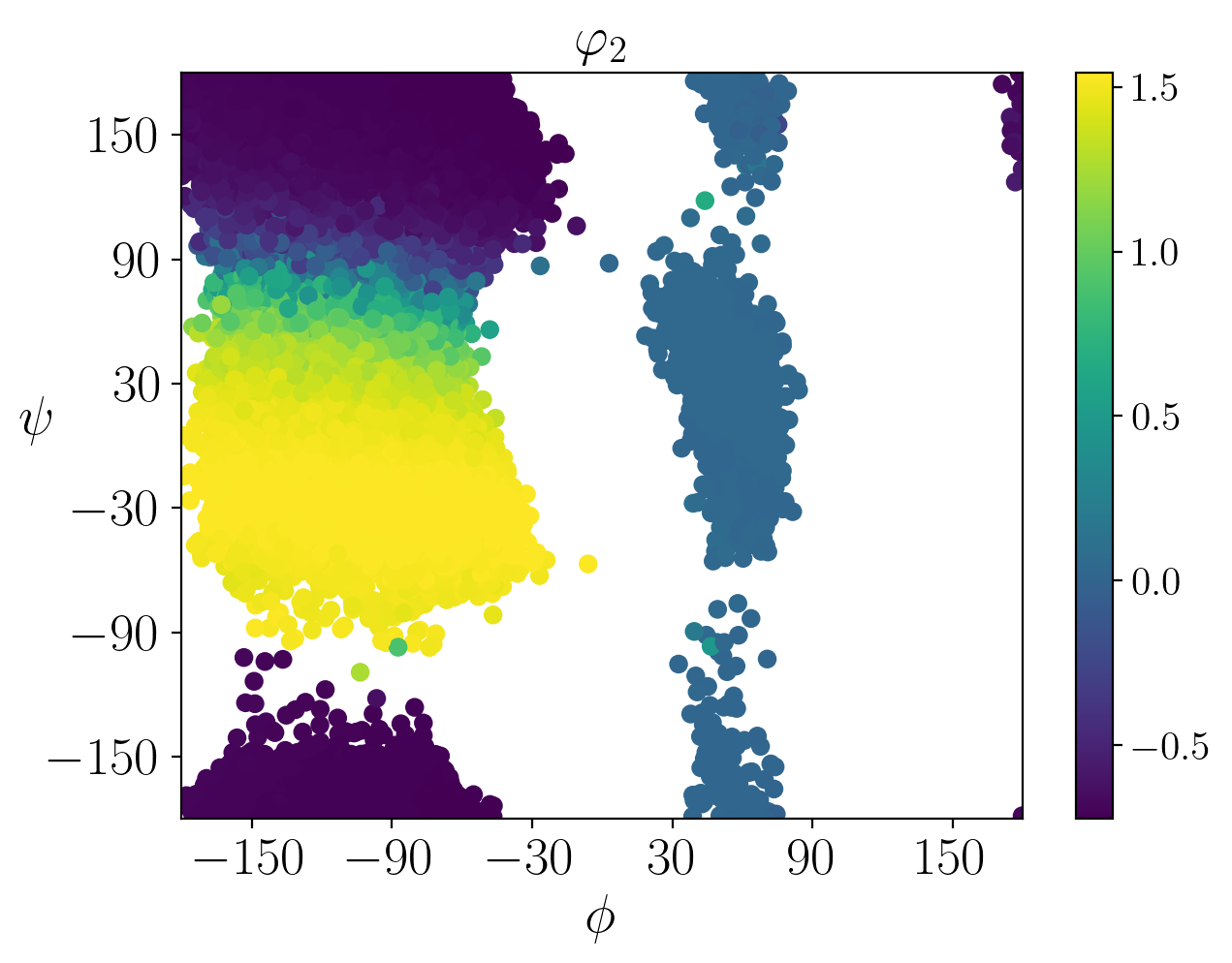}
    \caption{}
  \end{subfigure}
  \begin{subfigure}{0.32\textwidth}
    \includegraphics[width=1.0\textwidth]{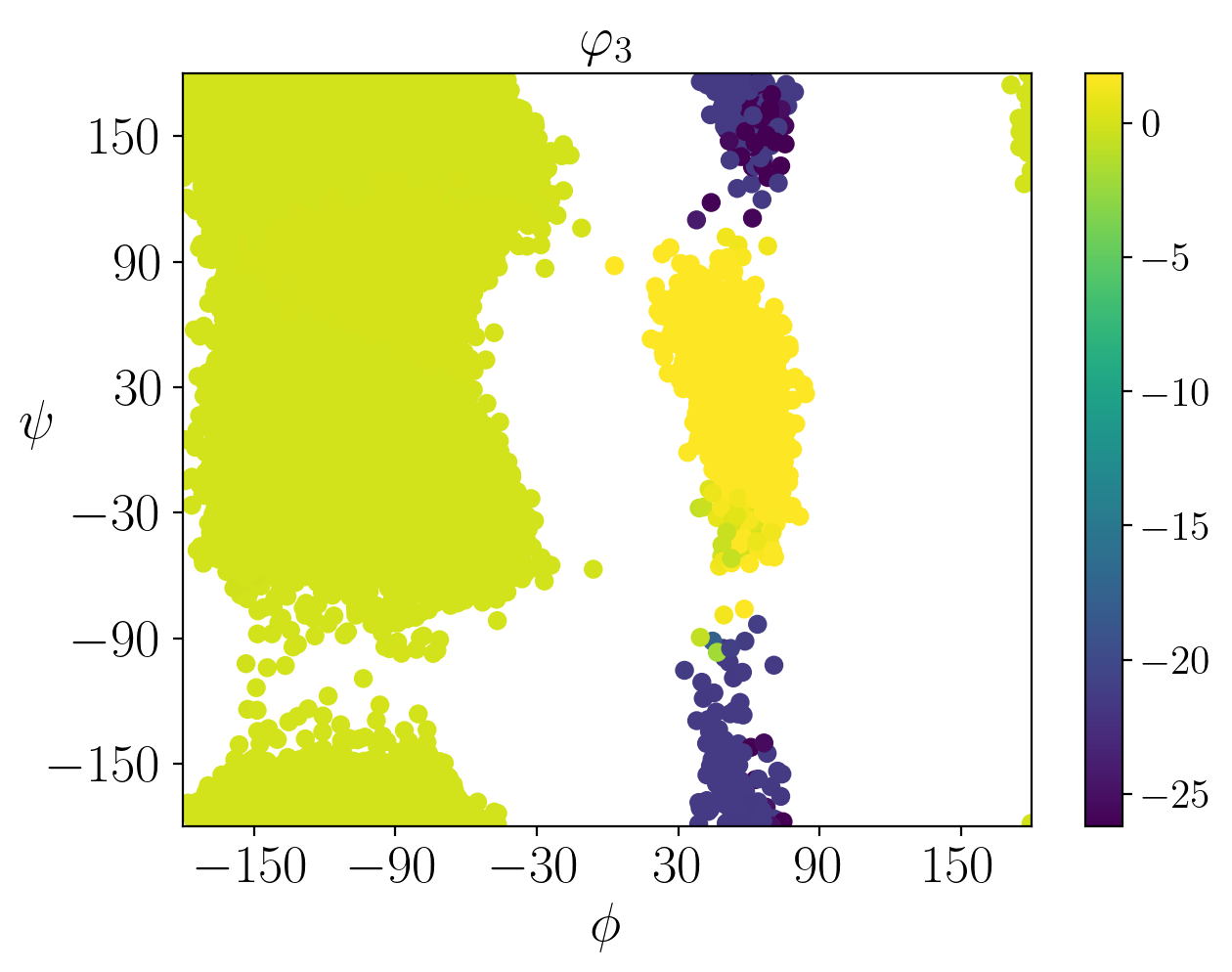}
    \caption{}
  \end{subfigure}
  \begin{subfigure}{0.32\textwidth}
    \includegraphics[width=0.95\textwidth]{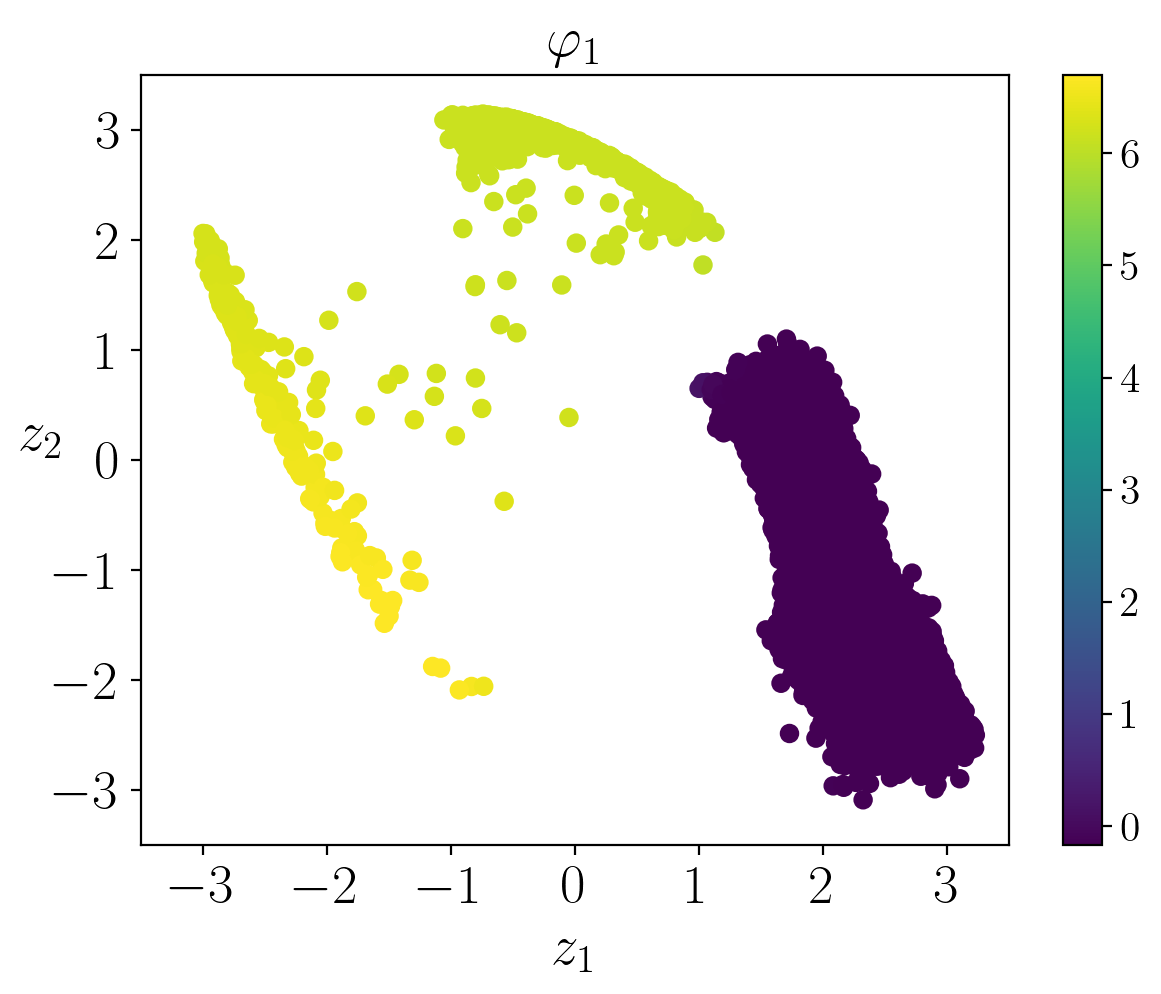}
    \caption{}
  \end{subfigure}
  \begin{subfigure}{0.32\textwidth}
    \includegraphics[width=1.0\textwidth]{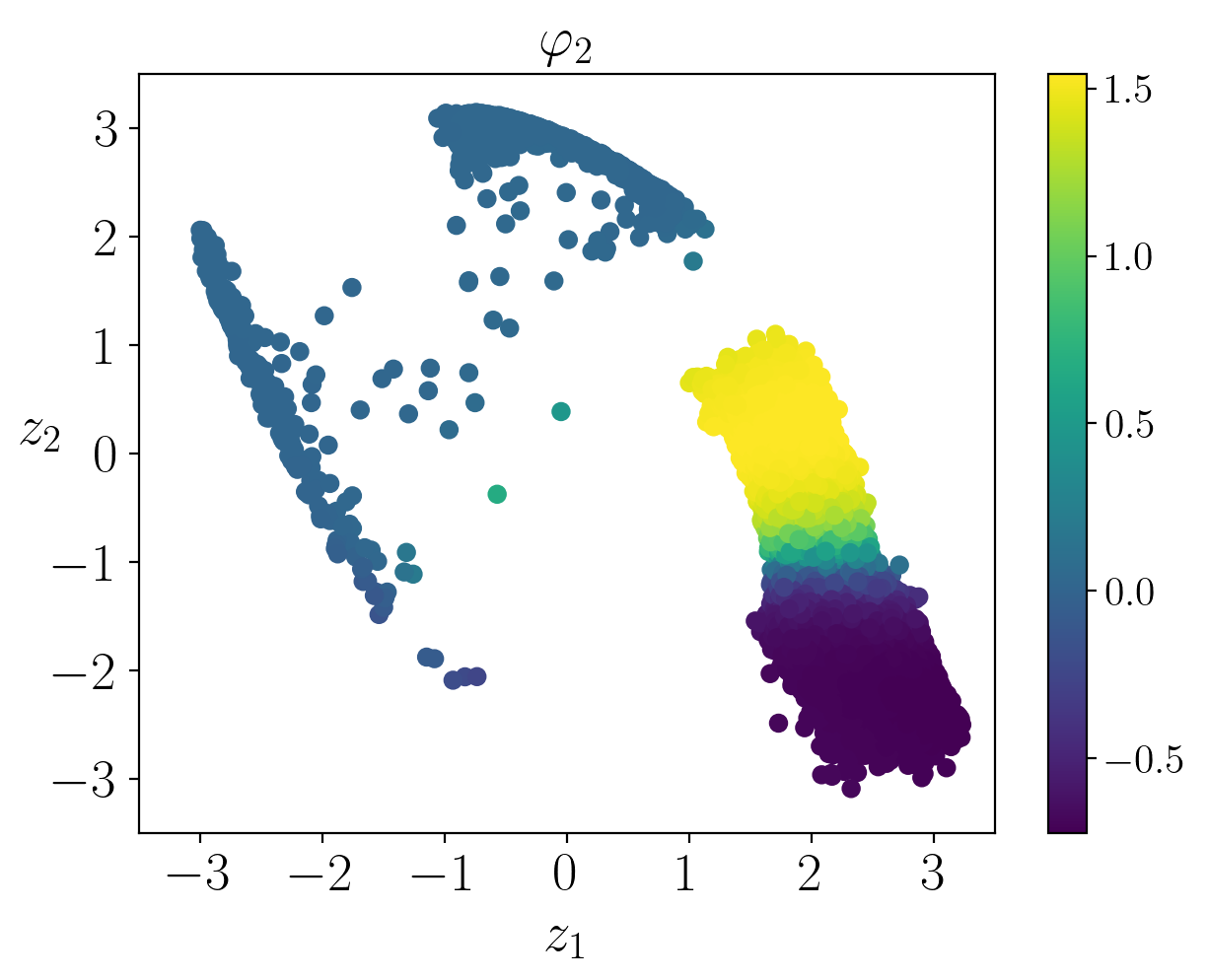}
    \caption{}
  \end{subfigure}
  \begin{subfigure}{0.32\textwidth}
    \includegraphics[width=1.0\textwidth]{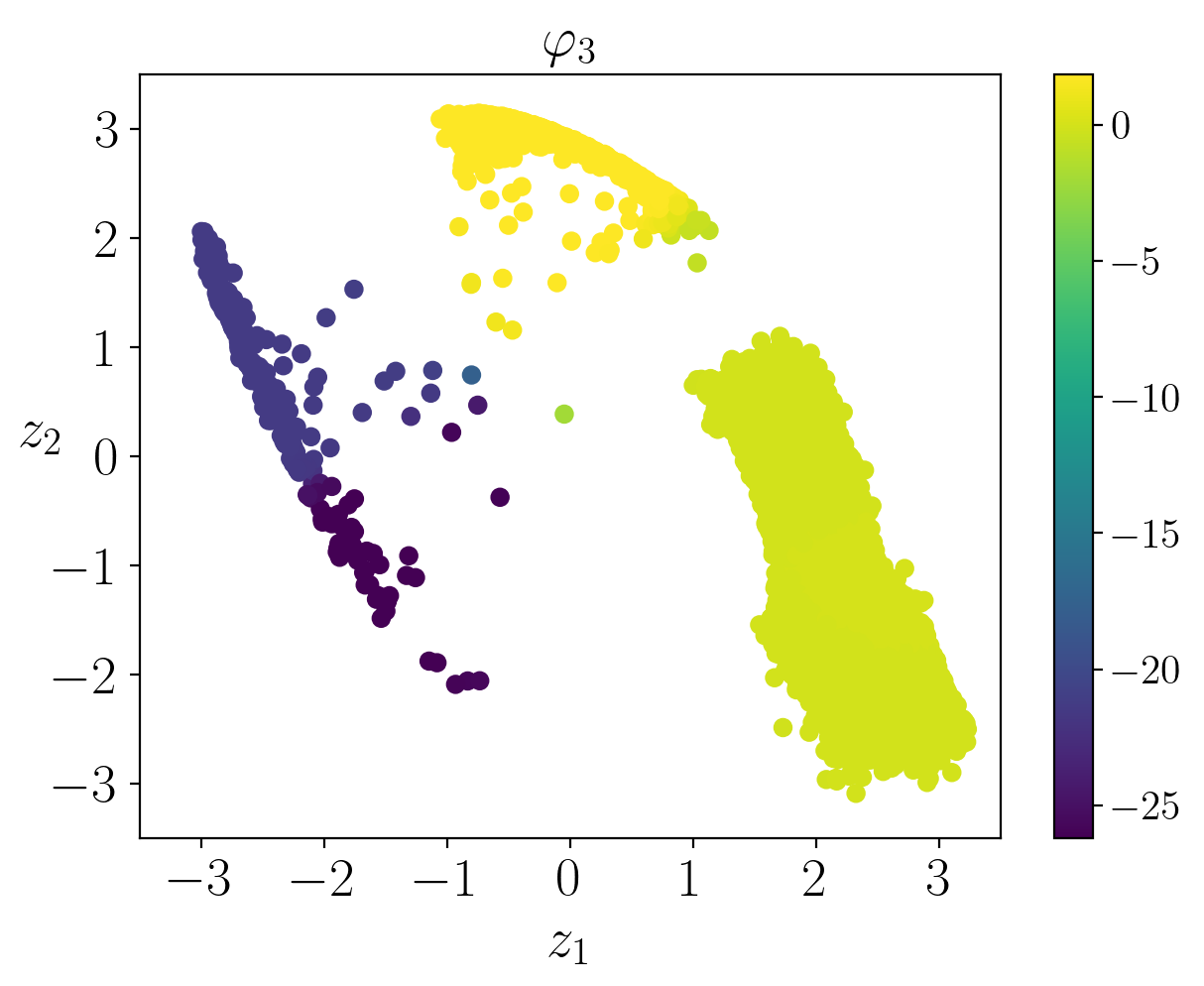}
    \caption{}
    \label{fig-ad-reg-ae_i}
  \end{subfigure}
  \centering
  \caption{Autoencoder and first three eigenfunctions learned by training the autoencoder (see Figure~\ref{fig-reg-ae}) with the loss~\eqref{regularized-loss-in-practice}. (a) Histogram of the encoded values along the $1.5~\mu$s-long trajectory. (b) Conditional variances of coordinates of the $10$ non-hydrogen atoms given the encoded values. (c) Mean squared differences between the decoded coordinates and the conditional mean coordinates of the 10 non-hydrogen atoms, given the encoded values. (d)-(f): Scatter plots of the first three eigenfunctions~$\varphi_1$, $\varphi_2$, and $\varphi_3$ (as compositions of the learned encoder and regularizer) evaluated on the trajectory data in the space of dihedral angles. (g)-(i): Scatter plots of the first three eigenfunctions evaluated on the trajectory data in the space of the learned encoder.}
  \label{fig-ad-reg-ae}
\end{figure}

\section{Conclusions and perspectives}
\label{sec:perspectives}

We have shown in this work that the usual setting of autoencoder training in terms of the reconstruction error can be reformulated in various ways. This allows to relate the problem to the well-studied class of techniques for computing principal curves and manifolds. Somewhat more interestingly in our view, it also allows to interpret the decoder as some approximation of the Bayes predictor associated with a given encoder, which corresponds to a conditional expectation. The quality of the decoding can be assessed by comparing actual conditional expectations and the output of the decoder for a given value of the latent variable corresponding to the value of the encoder. We also suggested various extensions beyond the usual training of autoencoders to capture some information on transitions from one metastable state to another, by (i) emphasizing transition points through some modified distribution of training points, (ii) allowing for multiple decoders to parametrize multiple transition paths, and (iii) adding extra terms to the loss function to take into account dynamical properties.

There are various natural follow-ups of this work. A first one is to combine numerical methods to sample reactive trajectories from one metastable state to another and the training of autoencoders. One instance of such an idea is to start from some (partial) data set of configurations, train an autoencoder with a one-dimensional bottleneck, and then use the corresponding encoder as a one-dimensional reaction coordinate in the adaptive multilevel splitting algorithm~\cite{CG07,LL19,Pigeon_al_2023} to sample reactive trajectories, in order to complement the initial data set, as done in Sections~\ref{sec:changing_ref_measure} and~\ref{sec:numerics_modifying_proba_dist}. A second follow-up is to extend our analysis to slow CVs, \emph{i.e.} genuinely consider dynamical aspects from the start. This can be done by working with time-lagged autoencoders~\cite{WN18,CSF19_TAE}, relying on some data set where couples of configurations separated by a fixed time lag are considered; see Ref.~\citenum{zhang2023understanding} for elements in this direction. An alternative option is to search for CVs satisfying some optimality conditions in terms of their effective dynamics~\cite{LL09,ZHS16}. Finally, let us recall Remark~\ref{rmk:measure_is_known}, which highlights that the data distributions at hand in molecular dynamics are not unknown, in contrast to the usual machine learning setting. This extra information can probably be used to improve off-the-shelf machine learning methods.

\begin{acknowledgement}
	
The work of T.L and G.S. was funded in part by the European Research Council (ERC) under the European Union's Horizon 2020 research and innovation programme (grant agreement No. 810367). 
The work of W.Z. was funded by the Deutsche Forschungsgemeinschaft (DFG, German
Research Foundation) through grant CRC 1114 ``Scaling Cascades in Complex Systems'' (project No. 235221301) and DFG Eigene Stelle (project No. 524086759).

\end{acknowledgement}

\appendix

\section{Derivation of~\eqref{eq:formal_EL_condition_optimal_fenc}}
\label{sec:derivation_formal_EL_condition_optimal_fenc}

We formally derive in this appendix the necessary condition~\eqref{eq:formal_EL_condition_optimal_fenc} satisfied by (local) maxima of~\eqref{eq:equivalent_maximization_on_fenc}, for~$\cZ \subset \mathbb{R}^d$ and~$\cX \subset \mathbb{R}^D$. We assume that~$\mu$ has a smooth density with respect to the Lebesgue measure, still denoted by~$\mu$ with some abuse of notation. We denote by~$\mathrm{Supp}(\mu)$ the support of this measure, namely the closure of the set of points at which the density is positive. A key assumption in our derivation is that the encoder is a smooth function, more precisely~$\fenc \in C^2(\cX,\cZ)$, and~$\nabla \fenc(x) \in \mathbb{R}^{D \times d}$ has full rank~$d$ for all~$x\in\mathrm{Supp}(\mu)$ (we use here the convention that the columns of~$\nabla \fenc$ are the gradients of the components of the function~$\fenc$). 

\paragraph{Analytical reformulation of the maximization problem.}
Our aim here is to rewrite the problem~\eqref{eq:equivalent_maximization_on_fenc} in an analytic form, where expectations are replaced by integrals. We start by writing it as the minimization of
\[
\sS(\fenc) = \int_\cX \left\|g^\star_{\fenc}(\fenc(x))\right\|^2 \mu(x) \, dx, 
\]
where we recall that~$g^\star_{\fenc}(z) = \E[X \, | \, \fenc(X) = z]$ is defined in~\eqref{eq:Bayes_predictor_dec}. From an analytical viewpoint, the conditional expectation can be rewritten as
\begin{equation}
\label{eq:analytical_formula_g_star_fenc}
g^\star_{\fenc}(z) = \frac{\dps \int_{\Sigma_z} x \, \mu(x) \, \delta_{\fenc(x)-z}(dx)}{\dps \int_{\Sigma_z} \mu(x) \, \delta_{\fenc(x)-z}(dx)},
\end{equation}
where~$\Sigma_z \subset \cX$ is the submanifold defined in~\eqref{eq:Sigma_z} (the definition makes sense in view of the assumptions on~$\fenc$), and the delta measure~$\delta_{\fenc(x)-z}(dx)$ on~$\Sigma_z$ is defined by the decomposition formula
\[
dx = \delta_{\fenc(x)-z}(dx) \, dz.
\]
The latter equality should be understood as follows: for any bounded measurable functions~$\varphi:\cX\to\mathbb{R}$ and~$u:\cZ \to \mathbb{R}$ with compact supports,
\begin{equation}
\label{eq:weak_formulation_decomposition}
\int_\cX \varphi(x) u(\fenc(x)) \, dx = \int_\cZ u(z) \left( \int_{\Sigma_z} \varphi(x) \, \delta_{\fenc(x)-z}(dx) \right) dz.
\end{equation}
As discussed in Section~3.2.1.1 of Ref.~\citenum{LRS10}, the delta measure~$\delta_{\fenc(x)-z}(dx)$ can be related to the surface measure~$\sigma_{\Sigma_z}(dx)$ induced by the Lebesgue measure in the ambient Euclidean space~$\mathbb{R}^D$ through the co-area formula~\cite{EG92,AFP00} as
\[
\delta_{\fenc(x)-z}(dx) = \left(\det G(x) \right)^{-1/2} \sigma_{\Sigma_z}(dx),
\]
where we introduced the Gram matrix~$G(x) = \nabla\fenc(x)^\top \nabla\fenc(x) \in \mathbb{R}^{d\times d}$ in order to alleviate the notation.

In order to make sense of derivatives of the conditional expectation, one possible strategy, considered for instance in Section~2 of Ref.~\citenum{GW13}, is to rely on some mollified version of the conditional expectation, namely
\[
g^\varepsilon_{\fenc}(z) = \frac{\dps \int_\cX x \chis(\fenc(x)-z) \, \mu(x) \, dx}{\dps \int_\cX \chis(\fenc(x)-z) \, \mu(x) \, dx},
\]
where
\[
\chis(z) = \frac{1}{\varepsilon^d} \chi\left(\frac{z}{\varepsilon}\right)
\]
is an approximation of the Dirac mass at~$0 \in \mathbb{R}^d$ for~$\varepsilon\to 0$ (the function~$\chi$ being a smooth nonnegative function of integral~1). We follow here an alternative route, where we do not consider regularization by convolutions. More precisely, we provide results ensuring that the functions under consideration are differentiable, by relying on a weak formulation to compute derivatives of integrals with respect to the delta measure; which then allows us to establish~\eqref{eq:formal_EL_condition_optimal_fenc}. For completeness, we also give an expression for the derivatives of these functions, although this will not be used in the proof of~\eqref{eq:formal_EL_condition_optimal_fenc}.

\paragraph{Differentiability results.}
A first differentiability result makes precise derivatives with respect to~$z$. 

\begin{lemma}
	\label{lem:first_lemma_coarea_derivatives}
	Fix a function~$\varphi \in L^1(\cX)$ such that~$\div_x\left(\varphi \nabla\fenc G^{-1} \right) \in L^1(\cX)$. Then, for almost every~$z \in \cZ$,
	\[
	\nabla_z \left( \int_{\Sigma_z} \varphi(x) \, \delta_{\fenc(x)-z}(dx) \right) = \int_{\Sigma_z} \div_x\left(\varphi(x) \nabla\fenc(x)G(x)^{-1} \right) \, \delta_{\fenc(x)-z}(dx), 
	\]
	where the divergence of the~$D \times d$ matrix~$\varphi(x)\nabla\fenc(x)G(x)^{-1}$ on the right-hand side of the previous equality is applied column by column.
\end{lemma}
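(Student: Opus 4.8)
The plan is to characterize the $z$-gradient of $F(z) := \int_{\Sigma_z} \varphi(x)\,\delta_{\fenc(x)-z}(dx)$ in the weak (distributional) sense, by testing against an arbitrary $u \in C_c^\infty(\cZ)$ and integrating by parts in~$x$. First I would record that $F$ is integrable: applying the decomposition formula~\eqref{eq:weak_formulation_decomposition} to $|\varphi|$ (with $u$ approximating the constant function~$1$) gives $\int_\cZ |F| \leq \int_\cX |\varphi|\,dx < \infty$, so $F$ defines a distribution on~$\cZ$ and it suffices to identify its weak partial derivatives $\partial_{z_j} F$ for $1 \le j \le d$.

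For the identification, fix $u \in C_c^\infty(\cZ)$ and start from $\int_\cZ \partial_{z_j}u(z)\,F(z)\,dz$. Using~\eqref{eq:weak_formulation_decomposition} with the test function $\partial_{z_j}u$ in place of~$u$ transfers this to an integral over~$\cX$, namely $\int_\cX \varphi(x)\,(\partial_{z_j}u)(\fenc(x))\,dx$. The key algebraic step is then to convert the $z$-derivative of~$u$ evaluated at~$\fenc(x)$ into an $x$-derivative of $u \circ \fenc$. By the chain rule $\nabla_x(u\circ\fenc) = \nabla\fenc\,(\nabla u)\circ\fenc$, and multiplying on the left by $G^{-1}\nabla\fenc^\top$ and using $G = \nabla\fenc^\top\nabla\fenc$ yields the pointwise identity
\[
(\partial_{z_j}u)(\fenc(x)) = \left(\nabla\fenc(x)\,G(x)^{-1} e_j\right)^\top \nabla_x\!\left(u\circ\fenc\right)(x),
\]
where $e_j$ is the $j$-th canonical basis vector of~$\mathbb{R}^d$ and we used that $G^{-1}$ is symmetric. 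The full-rank assumption on $\nabla\fenc$ over $\mathrm{Supp}(\mu)$ is exactly what guarantees that~$G$ is invertible.

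Next I would integrate by parts in~$x$: the relevant vector field is $\varphi\,\nabla\fenc\,G^{-1}e_j$, whose divergence is the $j$-th component of the column-by-column divergence $\div_x(\varphi\nabla\fenc G^{-1})$ and belongs to $L^1(\cX)$ by hypothesis. This produces $-\int_\cX \left[\div_x(\varphi\nabla\fenc G^{-1})\right]_j (x)\,(u\circ\fenc)(x)\,dx$. Applying~\eqref{eq:weak_formulation_decomposition} once more, this time with $\left[\div_x(\varphi\nabla\fenc G^{-1})\right]_j \in L^1(\cX)$ in the role of~$\varphi$ and with test function~$u$, brings the integral back to the latent space:
\[
\int_\cZ \partial_{z_j}u(z)\,F(z)\,dz = -\int_\cZ u(z) \left( \int_{\Sigma_z} \left[\div_x\!\left(\varphi\,\nabla\fenc\,G^{-1}\right)\right]_j(x)\,\delta_{\fenc(x)-z}(dx) \right) dz.
\]
Since $u$ is arbitrary, this identifies $\partial_{z_j}F$ with the $j$-th component of the claimed right-hand side for almost every~$z$; collecting the components over $j \in \{1,\dots,d\}$ then gives the stated gradient formula.

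The main obstacle I anticipate is the rigorous justification of the integration by parts, since $u \circ \fenc$ need not have compact support in~$\cX$ (its support is $\fenc^{-1}(\mathrm{supp}\,u)$, which may be unbounded along the level sets $\Sigma_z$), so boundary contributions are not automatically absent. I would control this through a truncation and approximation argument, exploiting the $L^1$ integrability of both $\varphi\nabla\fenc G^{-1}$ and its divergence to pass to the limit and discard the surface terms; in the model case $\cX = \mathbb{R}^D$ with a density decaying at infinity, this integrability is precisely what makes those terms vanish. Two further routine points to verify along the way are the applicability of Fubini's theorem when interchanging the $x$- and $z$-integrations in each use of~\eqref{eq:weak_formulation_decomposition} (licensed by the $L^1$ assumptions together with the co-area formula), and the measurability in~$z$ of the inner level-set integrals.
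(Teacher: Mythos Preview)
Your proposal is correct and follows essentially the same route as the paper: test against $u \in C_c^\infty(\cZ)$, use the decomposition formula~\eqref{eq:weak_formulation_decomposition} to pass to~$\cX$, convert $(\nabla_z u)\circ\fenc$ into an $x$-gradient via the chain rule identity $(\nabla_z u)(\fenc(x)) = G(x)^{-1}\nabla\fenc(x)^\top \nabla_x[u\circ\fenc](x)$, integrate by parts, and go back to~$\cZ$. The one place where the paper proceeds slightly differently is your ``main obstacle'': rather than a direct truncation argument, the paper first establishes the identity for smooth compactly supported~$\varphi$ (so that $\varphi\,\nabla\fenc\,G^{-1}e_j$ has compact support and the integration by parts is immediate regardless of the support of~$u\circ\fenc$) and then concludes for general~$\varphi$ by density using the $L^1$ hypotheses.
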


The above equality should be understood as follows: for any~$\omega \in \mathbb{R}^d$,
\[
\omega^\top \nabla_z \left( \int_{\Sigma_z} \varphi(x) \, \delta_{\fenc(x)-z}(dx) \right) = \int_{\Sigma_z} \div_x\left(\varphi(x) \nabla\fenc(x)G(x)^{-1} \omega \right) \, \delta_{\fenc(x)-z}(dx).
\]
The formula established in Lemma~\ref{lem:first_lemma_coarea_derivatives} allows to compute the derivative of~\eqref{eq:analytical_formula_g_star_fenc} for a given function~$\fenc$, by considering~$\varphi(x) = x_i \, \mu(x)$ with~$1 \leq i \leq D$ for the integral in the numerator (where~$x_i$ is the $i$-th component of~$x \in \mathbb{R}^D$), and~$\varphi(x) = \mu(x)$ for the one in the denominator.

\begin{proof}
	We establish the result for a smooth bounded function~$\varphi$ with compact support, the general case following by density. We start from~\eqref{eq:weak_formulation_decomposition}, written for a smooth bounded function~$u:\cZ \to \mathbb{R}$ with compact support:
	\[
	\int_\cZ \nabla_z u(z) \left( \int_{\Sigma_z} \varphi(x) \, \delta_{\fenc(x)-z}(dx) \right) dz = \int_\cX \varphi(x) (\nabla_z u)(\fenc(x))  \, dx.
	\]
	Since~$\nabla_x (u \circ \fenc) = (\nabla_x \fenc) [(\nabla_z u) \circ \fenc]$ with our convention for~$\nabla_x \fenc$,
	we can rewrite the derivative in~$z$ as a derivative in~$x$ as
	\begin{equation}
	\label{eq:rewriting_nabla_z_u_with_x}
	(\nabla_z u)(\fenc(x)) = G(x)^{-1} \nabla\fenc(x)^\top \nabla_x \left[ u(\fenc(x)) \right].
	\end{equation}
	Therefore,
	\[
	\begin{aligned}
	& \int_\cZ \omega^\top \nabla u(z) \left( \int_{\Sigma_z} \varphi(x) \, \delta_{\fenc(x)-z}(dx) \right) dz \\
	& = \int_\cX \varphi(x) \, \omega^\top G(x)^{-1} \nabla\fenc(x)^\top \nabla_x \left[ u(\fenc(x)) \right] \, dx \\
	& = -\int_\cX \div_x\left(\varphi(x)  \nabla\fenc(x)G(x)^{-1}\omega \right) u(\fenc(x)) \, dx \\
	& = -\int_\cZ u(z) \left( \int_{\Sigma_z} \div_x\left(\varphi(x) \nabla\fenc(x)G(x)^{-1}\omega \right) \, \delta_{\fenc(x)-z}(dx) \right) dz.
	\end{aligned}
	\]
	This allows to identify the desired derivative. 
\end{proof}

In order to understand variations with respect to~$\fenc$, we consider a smooth and bounded perturbation~$\wfenc$ of~$\fenc$. We assume that there exists~$\eta_\star > 0$ (which depends on~$\fenc$ and~$\wfenc$) such that, for any~$\eta \in (-\eta_\star,\eta_\star)$, the smooth function~$\fenc+\eta\wfenc$ has a gradient of full rank~$d$ at every~$x \in \cX$. This can be ensured for instance when~$\wfenc$ has compact support. We can then introduce the functional~$I_\varphi:\cZ \times (-\eta_\star,\eta_\star) \to \mathbb{R}$ defined as
\[
I_\varphi(z,\eta) = \int_{\widetilde{\Sigma}_{z,\eta}} \varphi(x) \, \delta_{(\fenc+\eta\wfenc)(x)-z}(dx), \qquad \widetilde{\Sigma}_{z,\eta} = \left(\fenc+\eta\wfenc\right)^{-1}\{z\}.
\]
Note that
\begin{equation}
\label{eq:sS_eta}
\sS\left(\fenc+\eta\wfenc\right) = \int_\cX \sum_{i=1}^D \left|\frac{\dps I_{\phi_i}\left((\fenc+\eta\wfenc)(x),\eta\right)}{I_{\mu}\left((\fenc+\eta\wfenc)(x),\eta\right)}\right|^2 \mu(x) \, dx, \qquad \phi_i(x) = x_i \, \mu(x). 
\end{equation}
The next result establishes the differentiability of~$I_\varphi$ with respect to~$\eta$.

\begin{lemma}
	\label{lem:second_lemma_coarea_derivatives}
	Fix a function~$\varphi \in L^1(\cX)$ such that~$\div_x\left(\varphi \nabla\fenc G^{-1} \wfenc\right) \in L^1(\cX)$. Then, for almost every~$z \in \cZ$,
	\[
	\partial_\eta I_\varphi(z,0) = -\int_{\Sigma_z} \div_x\left( \varphi \nabla\fenc G^{-1}\wfenc \right)(x) \, \delta_{\fenc(x)-z}(dx).
	\]
\end{lemma}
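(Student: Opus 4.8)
The plan is to follow the template of the proof of Lemma~\ref{lem:first_lemma_coarea_derivatives}, establishing the claimed identity in a weak sense by testing against an arbitrary smooth function $u:\cZ\to\mathbb{R}$ with compact support and then invoking the fundamental lemma of the calculus of variations. The starting point is the analogue of the decomposition formula~\eqref{eq:weak_formulation_decomposition} written for the perturbed encoder $\fenc+\eta\wfenc$, namely
\[
\int_\cZ u(z)\,I_\varphi(z,\eta)\,dz = \int_\cX \varphi(x)\,u\bigl((\fenc+\eta\wfenc)(x)\bigr)\,dx,
\]
which holds for every $\eta\in(-\eta_\star,\eta_\star)$ by definition of $I_\varphi$ and of $\widetilde{\Sigma}_{z,\eta}$, and remains meaningful because the full-rank assumption on $\fenc+\eta\wfenc$ keeps the associated Gram matrix invertible along the perturbation.

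First I would differentiate this identity with respect to $\eta$ at $\eta=0$. On the right-hand side this amounts to differentiating under the $x$-integral, producing $\int_\cX \varphi(x)\,(\nabla_z u)(\fenc(x))\cdot\wfenc(x)\,dx$. I would then convert the latent-space gradient into a physical-space gradient using the chain-rule identity~\eqref{eq:rewriting_nabla_z_u_with_x}, $(\nabla_z u)(\fenc(x)) = G(x)^{-1}\nabla\fenc(x)^\top\nabla_x[u(\fenc(x))]$, so that, writing $W(x)=\varphi(x)\,\nabla\fenc(x)G(x)^{-1}\wfenc(x)\in\mathbb{R}^D$ and using the symmetry of $G^{-1}$, the integrand becomes $W(x)^\top\nabla_x[u(\fenc(x))]$. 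An integration by parts in $x$, with no boundary contribution since $u$ has compact support, turns this into $-\int_\cX \div_x(W)(x)\,u(\fenc(x))\,dx$, and a final application of the decomposition formula~\eqref{eq:weak_formulation_decomposition} rewrites it in terms of the delta measure on the level sets $\Sigma_z$.

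Matching this with the left-hand side, for which I would interchange the $\eta$-derivative and the $z$-integration to obtain $\int_\cZ u(z)\,\partial_\eta I_\varphi(z,0)\,dz$, yields the equality
\[
\int_\cZ u(z)\,\partial_\eta I_\varphi(z,0)\,dz = -\int_\cZ u(z)\left(\int_{\Sigma_z}\div_x\!\left(\varphi\nabla\fenc G^{-1}\wfenc\right)(x)\,\delta_{\fenc(x)-z}(dx)\right)dz
\]
for every admissible test function $u$, from which the pointwise almost-everywhere identity follows by the fundamental lemma of the calculus of variations. As in Lemma~\ref{lem:first_lemma_coarea_derivatives}, I would first prove everything for smooth bounded $\varphi$ with compact support and recover the general $L^1$ case by density.

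The main obstacle will be the two interchanges of limit and integral: differentiating under the $x$-integral on the right and, more delicately, justifying that the $\eta$-derivative of $\int_\cZ u\,I_\varphi(\cdot,\eta)\,dz$ equals $\int_\cZ u\,\partial_\eta I_\varphi(\cdot,0)\,dz$. The hypothesis $\div_x(\varphi\nabla\fenc G^{-1}\wfenc)\in L^1(\cX)$ is precisely what guarantees that the limiting integral is well defined and supplies an $L^1$ dominating bound for a dominated-convergence argument, while the full-rank condition on $\fenc+\eta\wfenc$ controls the deformation of the submanifolds $\widetilde{\Sigma}_{z,\eta}$ near $\eta=0$. I would keep the argument at the same formal level as the rest of the appendix, since a fully rigorous treatment would require quantitative control of $\widetilde{\Sigma}_{z,\eta}$ as $\eta\to0$.
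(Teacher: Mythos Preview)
Your proposal is correct and follows essentially the same route as the paper: test against a smooth compactly supported $u$, use the decomposition formula for the perturbed encoder to write $\int_\cZ u(z)I_\varphi(z,\eta)\,dz = \int_\cX \varphi(x)\,u((\fenc+\eta\wfenc)(x))\,dx$, differentiate in~$\eta$, apply~\eqref{eq:rewriting_nabla_z_u_with_x}, integrate by parts, and re-apply~\eqref{eq:weak_formulation_decomposition}. The paper also first treats smooth compactly supported~$\varphi$ and invokes density for the general case, exactly as you outline.
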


\begin{proof}
	As for the proof of Lemma~\ref{lem:first_lemma_coarea_derivatives}, we establish the result for a smooth bounded function~$\varphi$ with compact support, the general case following by density. We introduce a smooth function~$u:\cZ \to \mathbb{R}$ with compact support and write
	\[
	\int_\cZ I_\varphi(z,\eta) u(z) \, dz = \int_\cX \varphi(x) u\left((\fenc+\eta\wfenc)(x)\right) dx.
	\]
	The right-hand side of the previous equality is a smooth function of~$\eta$, which can be differentiated. Using again~\eqref{eq:rewriting_nabla_z_u_with_x} to rewrite~$(\nabla_z u) \circ \fenc$ in terms of the gradient of~$u\circ\fenc$,
	\begin{align}
	\left. \frac{d}{d\eta} \left( \int_\cZ I_\varphi(z,\eta) u(z) \, dz\right) \right|_{\eta=0}
	& = \int_\cX \varphi(x) \wfenc(x)^\top \nabla_z u(\fenc(x)) \, dx \label{eq:derivative_eta_before_duality} \\
	& = \int_\cX \varphi(x)\wfenc(x)^\top \left( G(x)^{-1} \nabla\fenc(x)^\top \nabla_x [ u \circ \fenc](x) \right) dx \notag \\
	& = -\int_\cX \div_x\left( \varphi \nabla\fenc G^{-1}\wfenc \right) (u \circ \fenc)\, dx\notag \\
	& = -\int_\cZ u(z) \left(\int_{\Sigma_z} \div_x\left( \varphi \nabla\fenc G^{-1}\wfenc \right)(x) \, \delta_{\fenc(x)-z}(dx) \right) dz, \notag
	\end{align}
	which leads to the claimed result.
\end{proof}

\paragraph{Necessary condition for~$\fenc$ to be a maximizer.}
We finally establish~\eqref{eq:formal_EL_condition_optimal_fenc}. We rely on the differentiability results established in Lemmas~\ref{lem:first_lemma_coarea_derivatives} and~\ref{lem:second_lemma_coarea_derivatives}, and assume to this end that the functions~$\phi_i$ (for $1 \leq i \leq d$) all belong to~$L^1(\cX)$, and also that $\div_x\left(\mu (\nabla\fenc) G^{-1}\right)$, $\div_x\left(\mu \nabla\fenc G^{-1} \wfenc\right)$, $\div_x\left(\phi_i \nabla\fenc G^{-1}\right)$ and~$\div_x\left(\phi_i \nabla\fenc G^{-1} \wfenc\right)$ all belong to~$L^1(\cX)$ (recalling~\eqref{eq:sS_eta}). Denoting by~$\sI_\varphi(z) = I_\varphi(z,0)$, it holds
\[
\begin{aligned}
& \left. \frac{d}{d\eta}\left[\frac12 \sS\left(\fenc+\eta\wfenc\right) \right] \right|_{\eta=0} \\
& \qquad \qquad = \int_\cX \sum_{i=1}^D \left( \wfenc(x)^\top \nabla_z \left(\frac{\sI_{\phi_i}}{\sI_\mu}\right)(\fenc(x)) + \sD_{\phi_i,\wfenc}(\fenc(x))\right)\frac{\dps \sI_{\phi_i}(\fenc(x))}{\sI_{\mu}(\fenc(x))} \mu(x)\, dx, \\
\end{aligned}
\]
where
\[
\sD_{\phi_i,\wfenc}(z) = \frac{\partial_\eta I_{\phi_i}(z,0)}{\sI_\mu(z)}-\frac{\sI_{\phi_i}(z)\partial_\eta I_\mu(z,0)}{\sI_\mu(z)^2}.
\]
Note also that
\[
\frac{\dps \sI_{\phi_i}(z)}{\sI_{\mu}(z)} = g^\star_{\fenc,i}(z).
\]
The aim is to factor out the dependence on~$\wfenc$ in the integral involving~$\sD_{\phi_i,\wfenc}$. We consider the first term on the right hand side of the above equality, and use~\eqref{eq:derivative_eta_before_duality} to write
\begin{align*}
\int_\cX \partial_\eta I_{\phi_i}(\fenc(x),0) \frac{\dps \sI_{\phi_i}(\fenc(x))}{\sI_{\mu}(\fenc(x))^2} \mu(x)\, dx
& = \int_\cZ \partial_\eta I_{\phi_i}(z,0) \frac{\dps \sI_{\phi_i}(z)}{\sI_{\mu}(z)^2} \left(\int_{\Sigma_z} \mu(x) \, \delta_{\fenc(x)-z}(dx)\right) dz \\
& = \int_\cZ \partial_\eta I_{\phi_i}(z,0) \frac{\dps \sI_{\phi_i}(z)}{\sI_{\mu}(z)} \, dz \\
&= \int_\cZ \partial_\eta I_{\phi_i}(z,0) g^\star_{\fenc,i}(z) \, dz \\
& = \int_\cX \phi_i(x) \wfenc(x)^\top \nabla_z g^\star_{\fenc,i}(\fenc(x)) \, dx.
\end{align*}
Similarly,
\[
\int_\cX \partial_\eta I_\mu(\fenc(x),0) \frac{\dps \sI_{\phi_i}(\fenc(x))^2}{\sI_{\mu}(\fenc(x))^3} \mu(x)\, dx
= \int_\cX \mu(x) \wfenc(x)^\top \nabla_z \left[\left(g^\star_{\fenc,i}\right)^2\right](\fenc(x)) \, dx.
\]
A necessary condition for~$\fenc$ to be a critical point is therefore
\[
\begin{aligned}
& \int_\cX \wfenc(x)^\top \left(\sum_{i=1}^D \nabla_z g^\star_{\fenc,i}(\fenc(x)) g^\star_{\fenc,i}(\fenc(x))\right) \mu(x) \, dx \\
& \qquad \qquad + \sum_{i=1}^D \int_\cX \wfenc(x)^\top \left( \frac{\phi_i(x)}{\mu(x)}\nabla_z g^\star_{\fenc,i}(\fenc(x)) - \nabla_z \left[\left(g^\star_{\fenc,i}\right)^2\right](\fenc(x)) \right) \mu(x)\,dx = 0.
\end{aligned}
\]
Since~$\wfenc$ is arbitrary, this shows that the following equality must hold for all~$x \in \cX$ and~$1 \leq j \leq d$:
\[
\sum_{i=1}^D \left(\frac{\phi_i(x)}{\mu(x)}\partial_{z_j} g^\star_{\fenc,i}(\fenc(x)) - \partial_{z_j} g^\star_{\fenc,i}(\fenc(x)) g^\star_{\fenc,i}(\fenc(x))\right) = 0.
\]

This indeed gives~\eqref{eq:formal_EL_condition_optimal_fenc} since~$\phi_i(x) = x_i \mu(x)$ by~\eqref{eq:sS_eta}.

\section{Proof of Lemma~\ref{lem:cond_exp_PCA}}
\label{sec:proof_lem:cond_exp_PCA}

For a given a matrix~$W\in \mathbb{R}^{D\times K}$ such that $W^\top W= \mathrm{I}_K$, let us compute the conditional expectation~$\E(X\,|\,\fred(X))$ for the linear map $\fred(x)=W^\top x$. Introduce to this end a matrix~$U\in \mathbb{R}^{D\times (D-K)}$ such that $U^\top W=0$ and $U^\top U = \mathrm{I}_{D-K}$. 
Since the columns of $W$ and $U$ form an orthonormal basis of $\mathbb{R}^D$, the identities $(WW^\top + UU^\top)W=W$ and $(WW^\top + UU^\top)U=U$ imply that~$WW^\top + UU^\top = \mathrm{I}_D$. 

Given $z \in \mathbb{R}^K$, we can then write (with~$\mu$ the distribution of the centered Gaussian measure under consideration)
\begin{align}
  \E \left(X\,\middle|\,\fred(X)=z\right)
  &= \E \left(X\,\middle|\,W^\top X = z\right) \notag \\
  &= Z_z^{-1}\int_{\{x\in \mathbb{R}^D : W^\top x = z\}} x\, \mu(dx) \notag\\
  &= Z_z^{-1} \int_{\{x\in \mathbb{R}^D : W^\top x = z\}} \left(WW^\top x + UU^\top x \right) \mu(dx) \notag\\
  &= Wz + Z_z^{-1} U\int_{\{x\in \mathbb{R}^D : W^\top x = z\}} U^\top x\, \mu(dx), \label{eq:cond_Gaussian_PCA}
\end{align}
where
\[
Z_z = \int_{\{x\in \mathbb{R}^D : W^\top x = z\}} \mu(dx).
\]
Note that, since $X$ is a centered Gaussian random variable and $U^\top W=0$, the two vectors $U^\top X$ and $W^\top X$ are independent centered Gaussian random variables in~$\mathbb{R}^{D-K}$ and $\mathbb{R}^{K}$, respectively. Therefore, the second term in~\eqref{eq:cond_Gaussian_PCA} vanishes, and
\[
\E \left(X\,\middle|\,\fred(X)\right) = WW^\top X.
\]
This shows that the two minimization tasks~\eqref{pca-objective} and~\eqref{eq:dim-reduction-pca} are equivalent. 

\section{Details of the molecular dynamics simulations of Section~\ref{sec:ad}}
\label{app:MD}

Alanine dipeptide was put into a cubic simulation box with periodic boundary condition, with box sizes set to the diameter of the system plus~$2.4$~nm. Water molecules were added into the box, resulting in~$3,469$ atoms in total. The force field AMBER99SB-ILDN~\cite{amber99sb-ildn} and TIP3P water models were adopted for alanine dipeptide and water, respectively.
Energy minimization was performed using steepest descent minimization, with initial step-size set to~$0.01$~nm in the GROMACS input file and maximum number of minimization steps set to $500$.
The convergence was reached after~$289$ steps when the maximum of the absolute values of the force components were below $500$~kJ~mol$^{-1}$~nm$^{-1}$.

The MD simulation to obtain the training data was conducted using a leap-frog stochastic dynamics integrator with time step~$1$~fs. The linear constraint solver (LINCS) of order~$4$ was adopted to impose holonomic constraints (fixing bond lengths involving H-atoms). Verlet lists were used to compute interactions between neighboring atoms, the frequency to update the neighbor list being initially set to~$20$ steps and the cut-off radius to~$1.0$~nm. Both Coulomb and van der Waals potentials were shifted to zero at the cut-off distance $1.0$~nm.
Electrostatic interactions were computed using the fourth order fast smooth particle-mesh Ewald (SPME) method with grid of size $0.16$~nm. A modified Berendsen thermostat was used with time constant $1$~ps and reference temperature $300$~K.
After equilibrating the system by a NVT simulation for $100$~ps followed by a NPT simulation for $200$~ns, we simulated the system for $1.5~\mu$s and, by recording the state every $10$~ps, we obtained a dataset consisting of~$1.5\times 10^5$ states along the system's trajectory.


\bibliography{AE_biblio}

\providecommand{\latin}[1]{#1}
\makeatletter
\providecommand{\doi}
  {\begingroup\let\do\@makeother\dospecials
  \catcode`\{=1 \catcode`\}=2 \doi@aux}
\providecommand{\doi@aux}[1]{\endgroup\texttt{#1}}
\makeatother
\providecommand*\mcitethebibliography{\thebibliography}
\csname @ifundefined\endcsname{endmcitethebibliography}  {\let\endmcitethebibliography\endthebibliography}{}
\begin{mcitethebibliography}{114}
\providecommand*\natexlab[1]{#1}
\providecommand*\mciteSetBstSublistMode[1]{}
\providecommand*\mciteSetBstMaxWidthForm[2]{}
\providecommand*\mciteBstWouldAddEndPuncttrue
  {\def\EndOfBibitem{\unskip.}}
\providecommand*\mciteBstWouldAddEndPunctfalse
  {\let\EndOfBibitem\relax}
\providecommand*\mciteSetBstMidEndSepPunct[3]{}
\providecommand*\mciteSetBstSublistLabelBeginEnd[3]{}
\providecommand*\EndOfBibitem{}
\mciteSetBstSublistMode{f}
\mciteSetBstMaxWidthForm{subitem}{(\alph{mcitesubitemcount})}
\mciteSetBstSublistLabelBeginEnd
  {\mcitemaxwidthsubitemform\space}
  {\relax}
  {\relax}

\bibitem[Hollingsworth and Dror(2018)Hollingsworth, and Dror]{HOLLINGSWORTH20181129-md-for-all}
Hollingsworth,~S.~A.; Dror,~R.~O. Molecular dynamics simulation for all. \emph{Neuron} \textbf{2018}, \emph{99}, 1129--1143\relax
\mciteBstWouldAddEndPuncttrue
\mciteSetBstMidEndSepPunct{\mcitedefaultmidpunct}
{\mcitedefaultendpunct}{\mcitedefaultseppunct}\relax
\EndOfBibitem
\bibitem[Durrant and McCammon(2011)Durrant, and McCammon]{Durrant2011-md-drug}
Durrant,~J.~D.; McCammon,~J.~A. Molecular dynamics simulations and drug discovery. \emph{BMC Biol.} \textbf{2011}, \emph{9}, 71\relax
\mciteBstWouldAddEndPuncttrue
\mciteSetBstMidEndSepPunct{\mcitedefaultmidpunct}
{\mcitedefaultendpunct}{\mcitedefaultseppunct}\relax
\EndOfBibitem
\bibitem[Fiorin \latin{et~al.}(2013)Fiorin, Klein, and H{\'e}nin]{colvar}
Fiorin,~G.; Klein,~M.~L.; H{\'e}nin,~J. Using collective variables to drive molecular dynamics simulations. \emph{Mol. Phys.} \textbf{2013}, \emph{111}, 3345--3362\relax
\mciteBstWouldAddEndPuncttrue
\mciteSetBstMidEndSepPunct{\mcitedefaultmidpunct}
{\mcitedefaultendpunct}{\mcitedefaultseppunct}\relax
\EndOfBibitem
\bibitem[Laio and Parrinello(2002)Laio, and Parrinello]{metadynamics-2002}
Laio,~A.; Parrinello,~M. Escaping free-energy minima. \emph{Proc. Natl. Acad. Sci. USA} \textbf{2002}, \emph{99}, 12562--12566\relax
\mciteBstWouldAddEndPuncttrue
\mciteSetBstMidEndSepPunct{\mcitedefaultmidpunct}
{\mcitedefaultendpunct}{\mcitedefaultseppunct}\relax
\EndOfBibitem
\bibitem[Bussi and Laio(2020)Bussi, and Laio]{using-metadynamics-2020}
Bussi,~G.; Laio,~A. Using metadynamics to explore complex free-energy landscapes. \emph{Nat. Rev. Phys.} \textbf{2020}, \emph{2}, 200--212\relax
\mciteBstWouldAddEndPuncttrue
\mciteSetBstMidEndSepPunct{\mcitedefaultmidpunct}
{\mcitedefaultendpunct}{\mcitedefaultseppunct}\relax
\EndOfBibitem
\bibitem[Darve \latin{et~al.}(2008)Darve, Rodr{\'i}guez-G{\'o}mez, and Pohorille]{abf-darve}
Darve,~E.; Rodr{\'i}guez-G{\'o}mez,~D.; Pohorille,~A. Adaptive biasing force method for scalar and vector free energy calculations. \emph{J. Chem. Phys.} \textbf{2008}, \emph{128}, 144120\relax
\mciteBstWouldAddEndPuncttrue
\mciteSetBstMidEndSepPunct{\mcitedefaultmidpunct}
{\mcitedefaultendpunct}{\mcitedefaultseppunct}\relax
\EndOfBibitem
\bibitem[H{\'e}nin \latin{et~al.}(2022)H{\'e}nin, Leli{\`e}vre, Shirts, Valsson, and Delemotte]{enhanced-sampling-for-md-review}
H{\'e}nin,~J.; Leli{\`e}vre,~T.; Shirts,~M.~R.; Valsson,~O.; Delemotte,~L. Enhanced sampling methods for molecular dynamics simulations [Article v1.0]. \emph{Living J. Comp. Mol. Sci.} \textbf{2022}, \emph{4}, 1583\relax
\mciteBstWouldAddEndPuncttrue
\mciteSetBstMidEndSepPunct{\mcitedefaultmidpunct}
{\mcitedefaultendpunct}{\mcitedefaultseppunct}\relax
\EndOfBibitem
\bibitem[Westermayr \latin{et~al.}(2021)Westermayr, Gastegger, Schütt, and Maurer]{WGSM21}
Westermayr,~J.; Gastegger,~M.; Schütt,~K.~T.; Maurer,~R.~J. {Perspective on integrating machine learning into computational chemistry and materials science}. \emph{J. Chem. Phys.} \textbf{2021}, \emph{154}, 230903\relax
\mciteBstWouldAddEndPuncttrue
\mciteSetBstMidEndSepPunct{\mcitedefaultmidpunct}
{\mcitedefaultendpunct}{\mcitedefaultseppunct}\relax
\EndOfBibitem
\bibitem[Ferguson(2017)]{Ferguson_2018}
Ferguson,~A.~L. Machine learning and data science in soft materials engineering. \emph{J. Phys. Condens. Matter} \textbf{2017}, \emph{30}, 043002\relax
\mciteBstWouldAddEndPuncttrue
\mciteSetBstMidEndSepPunct{\mcitedefaultmidpunct}
{\mcitedefaultendpunct}{\mcitedefaultseppunct}\relax
\EndOfBibitem
\bibitem[Sidky \latin{et~al.}(2020)Sidky, Chen, and Ferguson]{SCF20_review}
Sidky,~H.; Chen,~W.; Ferguson,~A.~L. Machine learning for collective variable discovery and enhanced sampling in biomolecular simulation. \emph{Mol. Phys.} \textbf{2020}, \emph{118}, e1737742\relax
\mciteBstWouldAddEndPuncttrue
\mciteSetBstMidEndSepPunct{\mcitedefaultmidpunct}
{\mcitedefaultendpunct}{\mcitedefaultseppunct}\relax
\EndOfBibitem
\bibitem[No\'{e} \latin{et~al.}(2020)No\'{e}, Tkatchenko, M\"{u}ller, and Clementi]{NTMC20}
No\'{e},~F.; Tkatchenko,~A.; M\"{u}ller,~K.-R.; Clementi,~C. Machine learning for molecular simulation. \emph{Annual Review of Physical Chemistry} \textbf{2020}, \emph{71}, 361--390\relax
\mciteBstWouldAddEndPuncttrue
\mciteSetBstMidEndSepPunct{\mcitedefaultmidpunct}
{\mcitedefaultendpunct}{\mcitedefaultseppunct}\relax
\EndOfBibitem
\bibitem[Gkeka \latin{et~al.}(2020)Gkeka, Stoltz, Barati~Farimani, Belkacemi, Ceriotti, Chodera, Dinner, Ferguson, Maillet, Minoux, Peter, Pietrucci, Silveira, Tkatchenko, Trstanova, Wiewiora, and Lelièvre]{Gkeka_CECAM}
Gkeka,~P.; Stoltz,~G.; Barati~Farimani,~A.; Belkacemi,~Z.; Ceriotti,~M.; Chodera,~J.~D.; Dinner,~A.~R.; Ferguson,~A.~L.; Maillet,~J.-B.; Minoux,~H. \latin{et~al.}  Machine learning force fields and coarse-grained variables in molecular dynamics: {A}pplication to materials and biological systems. \emph{J. Chem. Theory and Comput.} \textbf{2020}, \emph{16}, 4757--4775\relax
\mciteBstWouldAddEndPuncttrue
\mciteSetBstMidEndSepPunct{\mcitedefaultmidpunct}
{\mcitedefaultendpunct}{\mcitedefaultseppunct}\relax
\EndOfBibitem
\bibitem[Glielmo \latin{et~al.}(2021)Glielmo, Husic, Rodriguez, Clementi, No{\'e}, and Laio]{GHRCNL21}
Glielmo,~A.; Husic,~B.~E.; Rodriguez,~A.; Clementi,~C.; No{\'e},~F.; Laio,~A. Unsupervised learning methods for molecular simulation data. \emph{Chem. Rev.} \textbf{2021}, \emph{121}, 9722--9758\relax
\mciteBstWouldAddEndPuncttrue
\mciteSetBstMidEndSepPunct{\mcitedefaultmidpunct}
{\mcitedefaultendpunct}{\mcitedefaultseppunct}\relax
\EndOfBibitem
\bibitem[Chen(2021)]{Chen21}
Chen,~M. Collective variable-based enhanced sampling and machine learning. \emph{Eur. Phys. J. B} \textbf{2021}, \emph{94}, 211\relax
\mciteBstWouldAddEndPuncttrue
\mciteSetBstMidEndSepPunct{\mcitedefaultmidpunct}
{\mcitedefaultendpunct}{\mcitedefaultseppunct}\relax
\EndOfBibitem
\bibitem[Schütte \latin{et~al.}(2023)Schütte, Klus, and Hartmann]{SKH23}
Schütte,~C.; Klus,~S.; Hartmann,~C. Overcoming the timescale barrier in molecular dynamics: {T}ransfer operators, variational principles and machine learning. \emph{Acta Numerica} \textbf{2023}, \emph{32}, 517–673\relax
\mciteBstWouldAddEndPuncttrue
\mciteSetBstMidEndSepPunct{\mcitedefaultmidpunct}
{\mcitedefaultendpunct}{\mcitedefaultseppunct}\relax
\EndOfBibitem
\bibitem[Leli{\`e}vre(2013)]{lelievre2013two}
Leli{\`e}vre,~T. In \emph{Proceedings of the {ENUMATH} 2011, the 9th {E}uropean {C}onference held at the {U}niversity of {L}eicester, {L}eicester, {S}eptember 5--9, 2011}; Cangiani,~A., Davidchack,~R.~L., Georgoulis,~E., Gorban,~A.~N., Levesley,~J., Tretyakov,~M.~V., Eds.; Springer, 2013; pp 791--810\relax
\mciteBstWouldAddEndPuncttrue
\mciteSetBstMidEndSepPunct{\mcitedefaultmidpunct}
{\mcitedefaultendpunct}{\mcitedefaultseppunct}\relax
\EndOfBibitem
\bibitem[Prinz \latin{et~al.}(2011)Prinz, Wu, Sarich, Keller, Senne, Held, Chodera, Sch{\"u}tte, and No{\'e}]{msm_generation}
Prinz,~J.-H.; Wu,~H.; Sarich,~M.; Keller,~B.; Senne,~M.; Held,~M.; Chodera,~J.~D.; Sch{\"u}tte,~C.; No{\'e},~F. Markov models of molecular kinetics: {G}eneration and validation. \emph{J. Chem. Phys.} \textbf{2011}, \emph{134}, 174105\relax
\mciteBstWouldAddEndPuncttrue
\mciteSetBstMidEndSepPunct{\mcitedefaultmidpunct}
{\mcitedefaultendpunct}{\mcitedefaultseppunct}\relax
\EndOfBibitem
\bibitem[Vanden-Eijnden and Venturoli(2009)Vanden-Eijnden, and Venturoli]{VEV09}
Vanden-Eijnden,~E.; Venturoli,~M. {Revisiting the finite temperature string method for the calculation of reaction tubes and free energies}. \emph{J. Chem. Phys.} \textbf{2009}, \emph{130}, 194103\relax
\mciteBstWouldAddEndPuncttrue
\mciteSetBstMidEndSepPunct{\mcitedefaultmidpunct}
{\mcitedefaultendpunct}{\mcitedefaultseppunct}\relax
\EndOfBibitem
\bibitem[Bonati \latin{et~al.}(2020)Bonati, Rizzi, and Parrinello]{Bonati2020}
Bonati,~L.; Rizzi,~V.; Parrinello,~M. Data-driven collective variables for enhanced sampling. \emph{J. Phys. Chem. Lett.} \textbf{2020}, \emph{11}, 2998--3004\relax
\mciteBstWouldAddEndPuncttrue
\mciteSetBstMidEndSepPunct{\mcitedefaultmidpunct}
{\mcitedefaultendpunct}{\mcitedefaultseppunct}\relax
\EndOfBibitem
\bibitem[Trizio and Parrinello(2021)Trizio, and Parrinello]{DeepTDA}
Trizio,~E.; Parrinello,~M. From Enhanced Sampling to Reaction Profiles. \emph{J. Phys. Chem. Lett.} \textbf{2021}, \emph{12}, 8621--8626\relax
\mciteBstWouldAddEndPuncttrue
\mciteSetBstMidEndSepPunct{\mcitedefaultmidpunct}
{\mcitedefaultendpunct}{\mcitedefaultseppunct}\relax
\EndOfBibitem
\bibitem[Sultan and Pande(2018)Sultan, and Pande]{automated-design-cv-supervised-ml}
Sultan,~M.~M.; Pande,~V.~S. {Automated design of collective variables using supervised machine learning}. \emph{J. Chem. Phys.} \textbf{2018}, \emph{149}, 094106\relax
\mciteBstWouldAddEndPuncttrue
\mciteSetBstMidEndSepPunct{\mcitedefaultmidpunct}
{\mcitedefaultendpunct}{\mcitedefaultseppunct}\relax
\EndOfBibitem
\bibitem[Belkacemi \latin{et~al.}(2023)Belkacemi, Bianciotto, Minoux, Leli{\`e}vre, Stoltz, and Gkeka]{BBMLSG23}
Belkacemi,~Z.; Bianciotto,~M.; Minoux,~H.; Leli{\`e}vre,~T.; Stoltz,~G.; Gkeka,~P. {Autoencoders for dimensionality reduction in molecular dynamics: Collective variable dimension, biasing, and transition states}. \emph{J. Chem. Phys.} \textbf{2023}, \emph{159}, 024122\relax
\mciteBstWouldAddEndPuncttrue
\mciteSetBstMidEndSepPunct{\mcitedefaultmidpunct}
{\mcitedefaultendpunct}{\mcitedefaultseppunct}\relax
\EndOfBibitem
\bibitem[Peters and Trout(2006)Peters, and Trout]{cv-by-likelihood-maximization}
Peters,~B.; Trout,~B.~L. {Obtaining reaction coordinates by likelihood maximization}. \emph{J. Chem. Phys.} \textbf{2006}, \emph{125}, 054108\relax
\mciteBstWouldAddEndPuncttrue
\mciteSetBstMidEndSepPunct{\mcitedefaultmidpunct}
{\mcitedefaultendpunct}{\mcitedefaultseppunct}\relax
\EndOfBibitem
\bibitem[Mori \latin{et~al.}(2020)Mori, Okazaki, Mori, Kim, and Matubayasi]{cv-by-cross-entropy-minimization}
Mori,~Y.; Okazaki,~K.-i.; Mori,~T.; Kim,~K.; Matubayasi,~N. {Learning reaction coordinates via cross-entropy minimization: Application to alanine dipeptide}. \emph{J. Chem. Phys.} \textbf{2020}, \emph{153}, 054115\relax
\mciteBstWouldAddEndPuncttrue
\mciteSetBstMidEndSepPunct{\mcitedefaultmidpunct}
{\mcitedefaultendpunct}{\mcitedefaultseppunct}\relax
\EndOfBibitem
\bibitem[Zhang and Sch{\"u}tte(2023)Zhang, and Sch{\"u}tte]{zhang2023understanding}
Zhang,~W.; Sch{\"u}tte,~C. Understanding recent deep-learning techniques for identifying collective variables of molecular dynamics. \emph{arXiv preprint} \textbf{2023}, \emph{2307.00365}\relax
\mciteBstWouldAddEndPuncttrue
\mciteSetBstMidEndSepPunct{\mcitedefaultmidpunct}
{\mcitedefaultendpunct}{\mcitedefaultseppunct}\relax
\EndOfBibitem
\bibitem[Mehta \latin{et~al.}(2019)Mehta, Bukov, Wang, Day, Richardson, Fisher, and Schwab]{Mehta2019}
Mehta,~P.; Bukov,~M.; Wang,~C.-H.; Day,~A. G.~R.; Richardson,~C.; Fisher,~C.~K.; Schwab,~D.~J. A high-bias, low-variance introduction to machine learning for physicists. \emph{Physics Reports} \textbf{2019}, \emph{810}, 1--124\relax
\mciteBstWouldAddEndPuncttrue
\mciteSetBstMidEndSepPunct{\mcitedefaultmidpunct}
{\mcitedefaultendpunct}{\mcitedefaultseppunct}\relax
\EndOfBibitem
\bibitem[Murphy(2022)]{Murphy22}
Murphy,~K.~P. \emph{Probabilistic Machine Learning: An introduction}; MIT Press, 2022\relax
\mciteBstWouldAddEndPuncttrue
\mciteSetBstMidEndSepPunct{\mcitedefaultmidpunct}
{\mcitedefaultendpunct}{\mcitedefaultseppunct}\relax
\EndOfBibitem
\bibitem[Ma and Dinner(2005)Ma, and Dinner]{automatic-identify-rc-ma-and-dinner}
Ma,~A.; Dinner,~A.~R. Automatic method for identifying reaction coordinates in complex systems. \emph{J. Phys. Chem. B} \textbf{2005}, \emph{109}, 6769--6779\relax
\mciteBstWouldAddEndPuncttrue
\mciteSetBstMidEndSepPunct{\mcitedefaultmidpunct}
{\mcitedefaultendpunct}{\mcitedefaultseppunct}\relax
\EndOfBibitem
\bibitem[Hooft \latin{et~al.}(2021)Hooft, P\'{e}rez~de Alba~Ort\'{i}z, and Ensing]{cv-genetic-algorithm}
Hooft,~F.; P\'{e}rez~de Alba~Ort\'{i}z,~A.; Ensing,~B. Discovering collective variables of molecular transitions via genetic algorithms and neural networks. \emph{J. Chem. Theory Comput.} \textbf{2021}, \emph{17}, 2294--2306\relax
\mciteBstWouldAddEndPuncttrue
\mciteSetBstMidEndSepPunct{\mcitedefaultmidpunct}
{\mcitedefaultendpunct}{\mcitedefaultseppunct}\relax
\EndOfBibitem
\bibitem[Chen and Chipot(2023)Chen, and Chipot]{CC23}
Chen,~H.; Chipot,~C. Chasing collective variables using temporal data-driven strategies. \emph{QRB Discovery} \textbf{2023}, \emph{4}, e2\relax
\mciteBstWouldAddEndPuncttrue
\mciteSetBstMidEndSepPunct{\mcitedefaultmidpunct}
{\mcitedefaultendpunct}{\mcitedefaultseppunct}\relax
\EndOfBibitem
\bibitem[Ramil \latin{et~al.}(2022)Ramil, Boudier, Goryaeva, Marinica, and Maillet]{RBGMM22}
Ramil,~M.; Boudier,~C.; Goryaeva,~A.~M.; Marinica,~M.-C.; Maillet,~J.-B. On sampling minimum energy path. \emph{J. Chem. Theory and Comput.} \textbf{2022}, \emph{18}, 5864--5875\relax
\mciteBstWouldAddEndPuncttrue
\mciteSetBstMidEndSepPunct{\mcitedefaultmidpunct}
{\mcitedefaultendpunct}{\mcitedefaultseppunct}\relax
\EndOfBibitem
\bibitem[Kramer(1991)]{Kramer91}
Kramer,~M.~A. Nonlinear principal component analysis using autoassociative neural networks. \emph{AIChE Journal} \textbf{1991}, \emph{37}, 233--243\relax
\mciteBstWouldAddEndPuncttrue
\mciteSetBstMidEndSepPunct{\mcitedefaultmidpunct}
{\mcitedefaultendpunct}{\mcitedefaultseppunct}\relax
\EndOfBibitem
\bibitem[Hinton and Salakhutdinov(2006)Hinton, and Salakhutdinov]{HS06}
Hinton,~G.~E.; Salakhutdinov,~R.~R. Reducing the dimensionality of data with neural networks. \emph{Science} \textbf{2006}, \emph{313}, 504--507\relax
\mciteBstWouldAddEndPuncttrue
\mciteSetBstMidEndSepPunct{\mcitedefaultmidpunct}
{\mcitedefaultendpunct}{\mcitedefaultseppunct}\relax
\EndOfBibitem
\bibitem[Bourlard and Kamp(1988)Bourlard, and Kamp]{BK88}
Bourlard,~H.; Kamp,~Y. {Auto-association by multilayer perceptrons and singular value decomposition}. \emph{Biol. Cybern.} \textbf{1988}, \emph{59}, 291--294\relax
\mciteBstWouldAddEndPuncttrue
\mciteSetBstMidEndSepPunct{\mcitedefaultmidpunct}
{\mcitedefaultendpunct}{\mcitedefaultseppunct}\relax
\EndOfBibitem
\bibitem[Baldi and Hornik(1989)Baldi, and Hornik]{BH89}
Baldi,~P.; Hornik,~K. Neural networks and principal component analysis: {L}earning from examples without local minima. \emph{Neural Networks} \textbf{1989}, \emph{2}, 53--58\relax
\mciteBstWouldAddEndPuncttrue
\mciteSetBstMidEndSepPunct{\mcitedefaultmidpunct}
{\mcitedefaultendpunct}{\mcitedefaultseppunct}\relax
\EndOfBibitem
\bibitem[Bishop(2006)]{Bishop06}
Bishop,~C.~M. \emph{Pattern Recognition and Machine Learning}; Information Science and Statistics; Springer-Verlag, 2006\relax
\mciteBstWouldAddEndPuncttrue
\mciteSetBstMidEndSepPunct{\mcitedefaultmidpunct}
{\mcitedefaultendpunct}{\mcitedefaultseppunct}\relax
\EndOfBibitem
\bibitem[Goodfellow \latin{et~al.}(2016)Goodfellow, Bengio, and Courville]{GBC16}
Goodfellow,~I.; Bengio,~Y.; Courville,~A. \emph{Deep Learning}; MIT Press, 2016\relax
\mciteBstWouldAddEndPuncttrue
\mciteSetBstMidEndSepPunct{\mcitedefaultmidpunct}
{\mcitedefaultendpunct}{\mcitedefaultseppunct}\relax
\EndOfBibitem
\bibitem[Ng(2011)]{Ng11}
Ng,~A. Sparse autoencoder. CS294A Lecture notes. 2011; pp 1--19\relax
\mciteBstWouldAddEndPuncttrue
\mciteSetBstMidEndSepPunct{\mcitedefaultmidpunct}
{\mcitedefaultendpunct}{\mcitedefaultseppunct}\relax
\EndOfBibitem
\bibitem[Vincent \latin{et~al.}(2010)Vincent, Larochelle, Lajoie, Bengio, and Manzagol]{VLLBM10}
Vincent,~P.; Larochelle,~H.; Lajoie,~I.; Bengio,~Y.; Manzagol,~P.-A. Stacked denoising autoencoders: {L}earning useful representations in a deep network with a local denoising criterion. \emph{J. Mach. Learn. Res} \textbf{2010}, \emph{11}, 3371--3408\relax
\mciteBstWouldAddEndPuncttrue
\mciteSetBstMidEndSepPunct{\mcitedefaultmidpunct}
{\mcitedefaultendpunct}{\mcitedefaultseppunct}\relax
\EndOfBibitem
\bibitem[Alain and Bengio(2014)Alain, and Bengio]{AB14}
Alain,~G.; Bengio,~Y. What regularized auto-encoders learn from the data-generating distribution. \emph{J. Mach. Learn. Res} \textbf{2014}, \emph{15}, 3743--3773\relax
\mciteBstWouldAddEndPuncttrue
\mciteSetBstMidEndSepPunct{\mcitedefaultmidpunct}
{\mcitedefaultendpunct}{\mcitedefaultseppunct}\relax
\EndOfBibitem
\bibitem[Rifai \latin{et~al.}(2011)Rifai, Vincent, Muller, Glorot, and Bengio]{RVMGB11}
Rifai,~S.; Vincent,~P.; Muller,~X.; Glorot,~X.; Bengio,~Y. Contractive auto-encoders: {E}xplicit invariance during feature extraction. ICML. 2011; pp 833--840\relax
\mciteBstWouldAddEndPuncttrue
\mciteSetBstMidEndSepPunct{\mcitedefaultmidpunct}
{\mcitedefaultendpunct}{\mcitedefaultseppunct}\relax
\EndOfBibitem
\bibitem[Kingma and Welling(2014)Kingma, and Welling]{KW14}
Kingma,~D.~P.; Welling,~M. Auto-encoding variational {B}ayes. 2nd International Conference on Learning Representations, {ICLR} 2014, Banff, AB, Canada, April 14-16, 2014, Conference Track Proceedings. 2014\relax
\mciteBstWouldAddEndPuncttrue
\mciteSetBstMidEndSepPunct{\mcitedefaultmidpunct}
{\mcitedefaultendpunct}{\mcitedefaultseppunct}\relax
\EndOfBibitem
\bibitem[Kingma and Welling(2019)Kingma, and Welling]{KW19}
Kingma,~D.~P.; Welling,~M. An introduction to variational autoencoders. \emph{Foundations and Trends® in Machine Learning} \textbf{2019}, \emph{12}, 307--392\relax
\mciteBstWouldAddEndPuncttrue
\mciteSetBstMidEndSepPunct{\mcitedefaultmidpunct}
{\mcitedefaultendpunct}{\mcitedefaultseppunct}\relax
\EndOfBibitem
\bibitem[Ghosh \latin{et~al.}(2020)Ghosh, Sajjadi, Vergari, Black, and Scholkopf]{GSVBS20}
Ghosh,~P.; Sajjadi,~M. S.~M.; Vergari,~A.; Black,~M.; Scholkopf,~B. From variational to deterministic autoencoders. International Conference on Learning Representations. 2020\relax
\mciteBstWouldAddEndPuncttrue
\mciteSetBstMidEndSepPunct{\mcitedefaultmidpunct}
{\mcitedefaultendpunct}{\mcitedefaultseppunct}\relax
\EndOfBibitem
\bibitem[Jia \latin{et~al.}(2015)Jia, Sun, Gao, Song, and Shi]{JSGSS15}
Jia,~K.; Sun,~L.; Gao,~S.; Song,~Z.; Shi,~B.~E. Laplacian auto-encoders: An explicit learning of nonlinear data manifold. \emph{Neurocomputing} \textbf{2015}, \emph{160}, 250--260\relax
\mciteBstWouldAddEndPuncttrue
\mciteSetBstMidEndSepPunct{\mcitedefaultmidpunct}
{\mcitedefaultendpunct}{\mcitedefaultseppunct}\relax
\EndOfBibitem
\bibitem[Duque \latin{et~al.}(2020)Duque, Morin, Wolf, and Moon]{DMWM20}
Duque,~A.~F.; Morin,~S.; Wolf,~G.; Moon,~K. Extendable and invertible manifold learning with geometry regularized autoencoders. 2020 IEEE International Conference on Big Data (Big Data), Atlanta, GA, USA. 2020; pp 5027--5036\relax
\mciteBstWouldAddEndPuncttrue
\mciteSetBstMidEndSepPunct{\mcitedefaultmidpunct}
{\mcitedefaultendpunct}{\mcitedefaultseppunct}\relax
\EndOfBibitem
\bibitem[Lee \latin{et~al.}(2022)Lee, Yoon, Son, and Park]{LYSP22}
Lee,~Y.; Yoon,~S.; Son,~M.; Park,~F.~C. Regularized autoencoders for isometric representation learning. International Conference on Learning Representations. 2022\relax
\mciteBstWouldAddEndPuncttrue
\mciteSetBstMidEndSepPunct{\mcitedefaultmidpunct}
{\mcitedefaultendpunct}{\mcitedefaultseppunct}\relax
\EndOfBibitem
\bibitem[Sun \latin{et~al.}(2016)Sun, Mao, Guo, and Zhang]{SMGY16}
Sun,~Y.; Mao,~H.; Guo,~Q.; Zhang,~Y. Learning a good representation with unsymmetrical auto-encoder. \emph{Neural Comput. Appl.} \textbf{2016}, \emph{27}, 1361--1367\relax
\mciteBstWouldAddEndPuncttrue
\mciteSetBstMidEndSepPunct{\mcitedefaultmidpunct}
{\mcitedefaultendpunct}{\mcitedefaultseppunct}\relax
\EndOfBibitem
\bibitem[Dai and Wipf(2019)Dai, and Wipf]{DW19}
Dai,~B.; Wipf,~D. Diagnosing and enhancing {VAE} models. International Conference on Learning Representations. 2019\relax
\mciteBstWouldAddEndPuncttrue
\mciteSetBstMidEndSepPunct{\mcitedefaultmidpunct}
{\mcitedefaultendpunct}{\mcitedefaultseppunct}\relax
\EndOfBibitem
\bibitem[Iyer \latin{et~al.}(2022)Iyer, Bhalodia, and Elhabian]{IBE22}
Iyer,~K.; Bhalodia,~R.; Elhabian,~S. {REN}s: {R}elevance encoding networks. \emph{arXiv preprint} \textbf{2022}, \emph{2205.13061}\relax
\mciteBstWouldAddEndPuncttrue
\mciteSetBstMidEndSepPunct{\mcitedefaultmidpunct}
{\mcitedefaultendpunct}{\mcitedefaultseppunct}\relax
\EndOfBibitem
\bibitem[Chen and Ferguson(2018)Chen, and Ferguson]{CF18}
Chen,~W.; Ferguson,~A. {Molecular enhanced sampling with autoencoders: On-the-fly collective variable discovery and accelerated free energy landscape exploration}. \emph{J. Comput. Chem.} \textbf{2018}, \emph{39}, 2079--2102\relax
\mciteBstWouldAddEndPuncttrue
\mciteSetBstMidEndSepPunct{\mcitedefaultmidpunct}
{\mcitedefaultendpunct}{\mcitedefaultseppunct}\relax
\EndOfBibitem
\bibitem[Chen \latin{et~al.}(2018)Chen, Tan, and Ferguson]{CTF18}
Chen,~W.; Tan,~A.; Ferguson,~A. {Collective variable discovery and enhanced sampling using autoencoders: Innovations in network architecture and error function design}. \emph{J. Chem. Phys.} \textbf{2018}, \emph{149}, 072--312\relax
\mciteBstWouldAddEndPuncttrue
\mciteSetBstMidEndSepPunct{\mcitedefaultmidpunct}
{\mcitedefaultendpunct}{\mcitedefaultseppunct}\relax
\EndOfBibitem
\bibitem[Belkacemi \latin{et~al.}(2022)Belkacemi, Gkeka, Leli{\`e}vre, and Stoltz]{BGLS22}
Belkacemi,~Z.; Gkeka,~P.; Leli{\`e}vre,~T.; Stoltz,~G. Chasing collective variables using autoencoders and biased trajectories. \emph{J. Chem. Theory and Comput.} \textbf{2022}, \emph{18}, 59--78\relax
\mciteBstWouldAddEndPuncttrue
\mciteSetBstMidEndSepPunct{\mcitedefaultmidpunct}
{\mcitedefaultendpunct}{\mcitedefaultseppunct}\relax
\EndOfBibitem
\bibitem[Chen \latin{et~al.}(2022)Chen, Liu, Feng, Fu, Cai, Shao, and Chipot]{CLFFCSC22}
Chen,~H.; Liu,~H.; Feng,~H.; Fu,~H.; Cai,~W.; Shao,~X.; Chipot,~C. {MLCV}: {B}ridging machine-learning-based dimensionality reduction and free-energy calculation. \emph{J. Chem. Inf. Model.} \textbf{2022}, \emph{62}, 1--8\relax
\mciteBstWouldAddEndPuncttrue
\mciteSetBstMidEndSepPunct{\mcitedefaultmidpunct}
{\mcitedefaultendpunct}{\mcitedefaultseppunct}\relax
\EndOfBibitem
\bibitem[Baima \latin{et~al.}(2022)Baima, Goryaeva, Swinburne, Maillet, Nastar, and Marinica]{BGSMNM22}
Baima,~J.; Goryaeva,~A.~M.; Swinburne,~T.~D.; Maillet,~J.-B.; Nastar,~M.; Marinica,~M.-C. Capabilities and limits of autoencoders for extracting collective variables in atomistic materials science. \emph{Phys. Chem. Chem. Phys.} \textbf{2022}, \emph{24}, 23152--23163\relax
\mciteBstWouldAddEndPuncttrue
\mciteSetBstMidEndSepPunct{\mcitedefaultmidpunct}
{\mcitedefaultendpunct}{\mcitedefaultseppunct}\relax
\EndOfBibitem
\bibitem[Ribeiro \latin{et~al.}(2018)Ribeiro, Bravo, Wang, and Tiwary]{rave}
Ribeiro,~J. M.~L.; Bravo,~P.; Wang,~Y.; Tiwary,~P. {Reweighted autoencoded variational Bayes for enhanced sampling (RAVE)}. \emph{J. Chem. Phys.} \textbf{2018}, \emph{149}, 072301\relax
\mciteBstWouldAddEndPuncttrue
\mciteSetBstMidEndSepPunct{\mcitedefaultmidpunct}
{\mcitedefaultendpunct}{\mcitedefaultseppunct}\relax
\EndOfBibitem
\bibitem[Wang \latin{et~al.}(2019)Wang, Ribeiro, and Tiwary]{rave2019}
Wang,~Y.; Ribeiro,~J. M.~L.; Tiwary,~P. Past--future information bottleneck for sampling molecular reaction coordinate simultaneously with thermodynamics and kinetics. \emph{Nat. Commun.} \textbf{2019}, \emph{10}, 1--8\relax
\mciteBstWouldAddEndPuncttrue
\mciteSetBstMidEndSepPunct{\mcitedefaultmidpunct}
{\mcitedefaultendpunct}{\mcitedefaultseppunct}\relax
\EndOfBibitem
\bibitem[Wang and G\'omez-Bombarelli(2019)Wang, and G\'omez-Bombarelli]{WGB19}
Wang,~W.; G\'omez-Bombarelli,~R. Coarse-graining auto-encoders for molecular dynamics. \emph{npj Comput. Mater.} \textbf{2019}, \emph{5}, 125\relax
\mciteBstWouldAddEndPuncttrue
\mciteSetBstMidEndSepPunct{\mcitedefaultmidpunct}
{\mcitedefaultendpunct}{\mcitedefaultseppunct}\relax
\EndOfBibitem
\bibitem[Champion \latin{et~al.}(2019)Champion, Lusch, Kutz, and Brunton]{CLKB19}
Champion,~K.; Lusch,~B.; Kutz,~J.~N.; Brunton,~S.~L. Data-driven discovery of coordinates and governing equations. \emph{Proc. Natl. Acad. Sci. U. S. A.} \textbf{2019}, \emph{116}, 22445--22451\relax
\mciteBstWouldAddEndPuncttrue
\mciteSetBstMidEndSepPunct{\mcitedefaultmidpunct}
{\mcitedefaultendpunct}{\mcitedefaultseppunct}\relax
\EndOfBibitem
\bibitem[Wehmeyer and No{\'e}(2018)Wehmeyer, and No{\'e}]{WN18}
Wehmeyer,~C.; No{\'e},~F. {Time-lagged autoencoders: Deep learning of slow collective variables for molecular kinetics}. \emph{J. Chem. Phys.} \textbf{2018}, \emph{148}, 241703\relax
\mciteBstWouldAddEndPuncttrue
\mciteSetBstMidEndSepPunct{\mcitedefaultmidpunct}
{\mcitedefaultendpunct}{\mcitedefaultseppunct}\relax
\EndOfBibitem
\bibitem[Chen \latin{et~al.}(2019)Chen, Sidky, and Ferguson]{CSF19_TAE}
Chen,~W.; Sidky,~H.; Ferguson,~A.~L. Capabilities and limitations of time-lagged autoencoders for slow mode discovery in dynamical systems. \emph{J. Chem. Phys.} \textbf{2019}, \emph{151}, 064123\relax
\mciteBstWouldAddEndPuncttrue
\mciteSetBstMidEndSepPunct{\mcitedefaultmidpunct}
{\mcitedefaultendpunct}{\mcitedefaultseppunct}\relax
\EndOfBibitem
\bibitem[Hern\'andez \latin{et~al.}(2018)Hern\'andez, Wayment-Steele, Sultan, Husic, and Pande]{HWSSHP18}
Hern\'andez,~C.~X.; Wayment-Steele,~H.~K.; Sultan,~M.~M.; Husic,~B.~E.; Pande,~V.~S. Variational encoding of complex dynamics. \emph{Phys. Rev. E} \textbf{2018}, \emph{97}, 062412\relax
\mciteBstWouldAddEndPuncttrue
\mciteSetBstMidEndSepPunct{\mcitedefaultmidpunct}
{\mcitedefaultendpunct}{\mcitedefaultseppunct}\relax
\EndOfBibitem
\bibitem[Mardt \latin{et~al.}(2018)Mardt, Pasquali, Wu, and No{\'e}]{MPWN18}
Mardt,~A.; Pasquali,~L.; Wu,~H.; No{\'e},~F. {VAMPnets} for deep learning of molecular kinetics. \emph{Nat. Commun.} \textbf{2018}, \emph{9}, 5\relax
\mciteBstWouldAddEndPuncttrue
\mciteSetBstMidEndSepPunct{\mcitedefaultmidpunct}
{\mcitedefaultendpunct}{\mcitedefaultseppunct}\relax
\EndOfBibitem
\bibitem[Wu \latin{et~al.}(2018)Wu, Mardt, Pasquali, and No{\'e}]{WMPN18}
Wu,~H.; Mardt,~A.; Pasquali,~L.; No{\'e},~F. Deep generative {Markov} state models. Advances in Neural Information Processing Systems. 2018\relax
\mciteBstWouldAddEndPuncttrue
\mciteSetBstMidEndSepPunct{\mcitedefaultmidpunct}
{\mcitedefaultendpunct}{\mcitedefaultseppunct}\relax
\EndOfBibitem
\bibitem[Chen \latin{et~al.}(2019)Chen, Sidky, and Ferguson]{CSF19}
Chen,~W.; Sidky,~H.; Ferguson,~A.~L. Nonlinear discovery of slow molecular modes using state-free reversible {VAMPnets}. \emph{J. Chem. Phys.} \textbf{2019}, \emph{150}, 214114\relax
\mciteBstWouldAddEndPuncttrue
\mciteSetBstMidEndSepPunct{\mcitedefaultmidpunct}
{\mcitedefaultendpunct}{\mcitedefaultseppunct}\relax
\EndOfBibitem
\bibitem[Sidky \latin{et~al.}(2020)Sidky, Chen, and Ferguson]{SCF20}
Sidky,~H.; Chen,~W.; Ferguson,~A.~L. Molecular latent space simulators. \emph{Chem. Sci.} \textbf{2020}, \emph{11}, 9459--9467\relax
\mciteBstWouldAddEndPuncttrue
\mciteSetBstMidEndSepPunct{\mcitedefaultmidpunct}
{\mcitedefaultendpunct}{\mcitedefaultseppunct}\relax
\EndOfBibitem
\bibitem[N{\"u}ske \latin{et~al.}(2014)N{\"u}ske, Keller, Pérez-Hernández, Mey, and No{\'e}]{NKPMN14}
N{\"u}ske,~F.; Keller,~B.~G.; Pérez-Hernández,~G.; Mey,~A. S. J.~S.; No{\'e},~F. {Variational approach to molecular kinetics}. \emph{J. Chem. Theory and Comput.} \textbf{2014}, \emph{10}, 1739--1752\relax
\mciteBstWouldAddEndPuncttrue
\mciteSetBstMidEndSepPunct{\mcitedefaultmidpunct}
{\mcitedefaultendpunct}{\mcitedefaultseppunct}\relax
\EndOfBibitem
\bibitem[Klus \latin{et~al.}(2018)Klus, N{\"u}ske, Koltai, Wu, Kevrekidis, Sch{\"u}tte, and No{\'e}]{KNKWKSN18}
Klus,~S.; N{\"u}ske,~F.; Koltai,~P.; Wu,~H.; Kevrekidis,~I.; Sch{\"u}tte,~C.; No{\'e},~F. {Data-driven model reduction and transfer operator approximation}. \emph{J. Nonlinear Sci.} \textbf{2018}, \emph{28}, 985--1010\relax
\mciteBstWouldAddEndPuncttrue
\mciteSetBstMidEndSepPunct{\mcitedefaultmidpunct}
{\mcitedefaultendpunct}{\mcitedefaultseppunct}\relax
\EndOfBibitem
\bibitem[Molgedey and Schuster(1994)Molgedey, and Schuster]{MS94}
Molgedey,~L.; Schuster,~H.~G. Separation of a mixture of independent signals using time delayed correlations. \emph{Phys. Rev. Lett.} \textbf{1994}, \emph{72}, 3634--3637\relax
\mciteBstWouldAddEndPuncttrue
\mciteSetBstMidEndSepPunct{\mcitedefaultmidpunct}
{\mcitedefaultendpunct}{\mcitedefaultseppunct}\relax
\EndOfBibitem
\bibitem[Naritomi and Fuchigami(2013)Naritomi, and Fuchigami]{NF13}
Naritomi,~Y.; Fuchigami,~S. Slow dynamics of a protein backbone in molecular dynamics simulation revealed by time-structure based independent component analysis. \emph{J. Chem. Phys.} \textbf{2013}, \emph{139}, 215102\relax
\mciteBstWouldAddEndPuncttrue
\mciteSetBstMidEndSepPunct{\mcitedefaultmidpunct}
{\mcitedefaultendpunct}{\mcitedefaultseppunct}\relax
\EndOfBibitem
\bibitem[Frassek \latin{et~al.}(2021)Frassek, Arjun, and Bolhuis]{FAB21}
Frassek,~M.; Arjun,~A.; Bolhuis,~P.~G. An extended autoencoder model for reaction coordinate discovery in rare event molecular dynamics datasets. \emph{J. Chem. Phys.} \textbf{2021}, \emph{155}, 064103\relax
\mciteBstWouldAddEndPuncttrue
\mciteSetBstMidEndSepPunct{\mcitedefaultmidpunct}
{\mcitedefaultendpunct}{\mcitedefaultseppunct}\relax
\EndOfBibitem
\bibitem[Jung \latin{et~al.}(2023)Jung, Covino, Arjun, Leitold, Dellago, Bolhuis, and Hummer]{machine-guided-path-sampling}
Jung,~H.; Covino,~R.; Arjun,~A.; Leitold,~C.; Dellago,~C.; Bolhuis,~P.~G.; Hummer,~G. Machine-guided path sampling to discover mechanisms of molecular self-organization. \emph{Nat. Comput. Sci.} \textbf{2023}, \emph{3}, 334--345\relax
\mciteBstWouldAddEndPuncttrue
\mciteSetBstMidEndSepPunct{\mcitedefaultmidpunct}
{\mcitedefaultendpunct}{\mcitedefaultseppunct}\relax
\EndOfBibitem
\bibitem[Hochreiter and Schmidhuber(1997)Hochreiter, and Schmidhuber]{HS97}
Hochreiter,~S.; Schmidhuber,~J. {Long short-term memory}. \emph{Neural Comput.} \textbf{1997}, \emph{9}, 1735--1780\relax
\mciteBstWouldAddEndPuncttrue
\mciteSetBstMidEndSepPunct{\mcitedefaultmidpunct}
{\mcitedefaultendpunct}{\mcitedefaultseppunct}\relax
\EndOfBibitem
\bibitem[Tsai \latin{et~al.}(2020)Tsai, Kuo, and Tiwary]{TKT20}
Tsai,~S.~T.; Kuo,~E.~J.; Tiwary,~P. Learning molecular dynamics with simple language model built upon long short-term memory neural network. \emph{Nat. Commun.} \textbf{2020}, \emph{11}, 5115\relax
\mciteBstWouldAddEndPuncttrue
\mciteSetBstMidEndSepPunct{\mcitedefaultmidpunct}
{\mcitedefaultendpunct}{\mcitedefaultseppunct}\relax
\EndOfBibitem
\bibitem[Vlachas \latin{et~al.}(2022)Vlachas, Zavadlav, Praprotnik, and Koumoutsakos]{VZPK22}
Vlachas,~P.~R.; Zavadlav,~J.; Praprotnik,~M.; Koumoutsakos,~P. Accelerated simulations of molecular systems through learning of effective dynamics. \emph{J. Chem. Theory and Comput.} \textbf{2022}, \emph{18}, 538--549\relax
\mciteBstWouldAddEndPuncttrue
\mciteSetBstMidEndSepPunct{\mcitedefaultmidpunct}
{\mcitedefaultendpunct}{\mcitedefaultseppunct}\relax
\EndOfBibitem
\bibitem[Ketkaew \latin{et~al.}(2022)Ketkaew, Creazzo, and Luber]{KCL22}
Ketkaew,~R.; Creazzo,~F.; Luber,~S. Machine learning-assisted discovery of hidden states in expanded free energy space. \emph{J. Phys. Chem. Lett.} \textbf{2022}, \emph{13}, 1797--1805\relax
\mciteBstWouldAddEndPuncttrue
\mciteSetBstMidEndSepPunct{\mcitedefaultmidpunct}
{\mcitedefaultendpunct}{\mcitedefaultseppunct}\relax
\EndOfBibitem
\bibitem[Ketkaew and Luber(2022)Ketkaew, and Luber]{KL22}
Ketkaew,~R.; Luber,~S. {DeepCV}: {A} deep learning framework for blind search of collective variables in expanded configurational space. \emph{J. Chem. Inf. Model.} \textbf{2022}, \emph{62}, 6352--6364\relax
\mciteBstWouldAddEndPuncttrue
\mciteSetBstMidEndSepPunct{\mcitedefaultmidpunct}
{\mcitedefaultendpunct}{\mcitedefaultseppunct}\relax
\EndOfBibitem
\bibitem[Hastie and Stuetzle(1989)Hastie, and Stuetzle]{HS89}
Hastie,~T.; Stuetzle,~W. Principal curves. \emph{J. Amer. Statist. Assoc.} \textbf{1989}, \emph{84}, 502--516\relax
\mciteBstWouldAddEndPuncttrue
\mciteSetBstMidEndSepPunct{\mcitedefaultmidpunct}
{\mcitedefaultendpunct}{\mcitedefaultseppunct}\relax
\EndOfBibitem
\bibitem[Tibshirani(1992)]{Tib92}
Tibshirani,~R. Principal curves revisited. \emph{Stat. Comput.} \textbf{1992}, \emph{2}, 183--190\relax
\mciteBstWouldAddEndPuncttrue
\mciteSetBstMidEndSepPunct{\mcitedefaultmidpunct}
{\mcitedefaultendpunct}{\mcitedefaultseppunct}\relax
\EndOfBibitem
\bibitem[Gerber(2021)]{Gerber21}
Gerber,~S. Saddlepoints in unsupervised least squares. \emph{arXiv preprint} \textbf{2021}, \emph{2104.05000}\relax
\mciteBstWouldAddEndPuncttrue
\mciteSetBstMidEndSepPunct{\mcitedefaultmidpunct}
{\mcitedefaultendpunct}{\mcitedefaultseppunct}\relax
\EndOfBibitem
\bibitem[Gerber and Whitaker(2013)Gerber, and Whitaker]{GW13}
Gerber,~S.; Whitaker,~R. Regularization-free principal curve estimation. \emph{J. Mach. Learn. Res.} \textbf{2013}, \emph{14}, 1285--1302\relax
\mciteBstWouldAddEndPuncttrue
\mciteSetBstMidEndSepPunct{\mcitedefaultmidpunct}
{\mcitedefaultendpunct}{\mcitedefaultseppunct}\relax
\EndOfBibitem
\bibitem[Duchamp and Stuetzle(1996)Duchamp, and Stuetzle]{DS96}
Duchamp,~T.; Stuetzle,~W. Extremal properties of principal curves in the plane. \emph{Ann. Statist.} \textbf{1996}, \emph{24}, 1511--1520\relax
\mciteBstWouldAddEndPuncttrue
\mciteSetBstMidEndSepPunct{\mcitedefaultmidpunct}
{\mcitedefaultendpunct}{\mcitedefaultseppunct}\relax
\EndOfBibitem
\bibitem[Gerber \latin{et~al.}(2009)Gerber, Tasdizen, and Whitaker]{GTW09}
Gerber,~S.; Tasdizen,~T.; Whitaker,~R. Dimensionality reduction and principal surfaces via kernel map manifolds. 2009 IEEE 12th International Conference on Computer Vision. 2009; pp 529--536\relax
\mciteBstWouldAddEndPuncttrue
\mciteSetBstMidEndSepPunct{\mcitedefaultmidpunct}
{\mcitedefaultendpunct}{\mcitedefaultseppunct}\relax
\EndOfBibitem
\bibitem[Chipot and Pohorille(2007)Chipot, and Pohorille]{CP07}
Chipot,~C., Pohorille,~A., Eds. \emph{Free Energy Calculations}; Springer Series in Chemical Physics; Springer, 2007; Vol.~86\relax
\mciteBstWouldAddEndPuncttrue
\mciteSetBstMidEndSepPunct{\mcitedefaultmidpunct}
{\mcitedefaultendpunct}{\mcitedefaultseppunct}\relax
\EndOfBibitem
\bibitem[Leli{\`e}vre \latin{et~al.}(2010)Leli{\`e}vre, Rousset, and Stoltz]{LRS10}
Leli{\`e}vre,~T.; Rousset,~M.; Stoltz,~G. \emph{Free-energy Computations: A Mathematical Perspective}; Imperial College Press, 2010\relax
\mciteBstWouldAddEndPuncttrue
\mciteSetBstMidEndSepPunct{\mcitedefaultmidpunct}
{\mcitedefaultendpunct}{\mcitedefaultseppunct}\relax
\EndOfBibitem
\bibitem[Bach(2023)]{Bach23}
Bach,~F. \emph{Learning Theory from First Principles}; MIT Press, 2023\relax
\mciteBstWouldAddEndPuncttrue
\mciteSetBstMidEndSepPunct{\mcitedefaultmidpunct}
{\mcitedefaultendpunct}{\mcitedefaultseppunct}\relax
\EndOfBibitem
\bibitem[Evans and Gariepy(1992)Evans, and Gariepy]{EG92}
Evans,~L.~C.; Gariepy,~R.~F. \emph{Measure Theory and Fine Properties of Functions}; Studies in Advanced Mathematics; CRC Press, 1992\relax
\mciteBstWouldAddEndPuncttrue
\mciteSetBstMidEndSepPunct{\mcitedefaultmidpunct}
{\mcitedefaultendpunct}{\mcitedefaultseppunct}\relax
\EndOfBibitem
\bibitem[Ambrosio \latin{et~al.}(2000)Ambrosio, Fusco, and Pallara]{AFP00}
Ambrosio,~L.; Fusco,~N.; Pallara,~D. \emph{Functions of Bounded Variation and Free Discontinuity Problems}; Oxford Science Publications, 2000\relax
\mciteBstWouldAddEndPuncttrue
\mciteSetBstMidEndSepPunct{\mcitedefaultmidpunct}
{\mcitedefaultendpunct}{\mcitedefaultseppunct}\relax
\EndOfBibitem
\bibitem[Henkelman \latin{et~al.}(2002)Henkelman, J{\'o}hannesson, and J{\'o}nsson]{HJJ02}
Henkelman,~G.; J{\'o}hannesson,~G.; J{\'o}nsson,~H. In \emph{Theoretical Methods in Condensed Phase Chemistry}; Schwartz,~S.~D., Ed.; Springer Netherlands: Dordrecht, 2002; pp 269--302\relax
\mciteBstWouldAddEndPuncttrue
\mciteSetBstMidEndSepPunct{\mcitedefaultmidpunct}
{\mcitedefaultendpunct}{\mcitedefaultseppunct}\relax
\EndOfBibitem
\bibitem[Liu \latin{et~al.}(2022)Liu, Chen, and Ortner]{LCO22}
Liu,~X.; Chen,~H.; Ortner,~C. Stability of the minimum energy path. \emph{arXiv preprint} \textbf{2022}, \emph{2204.00984}\relax
\mciteBstWouldAddEndPuncttrue
\mciteSetBstMidEndSepPunct{\mcitedefaultmidpunct}
{\mcitedefaultendpunct}{\mcitedefaultseppunct}\relax
\EndOfBibitem
\bibitem[Rifai \latin{et~al.}(2012)Rifai, Bengio, Dauphin, and Vincent]{RBDV12}
Rifai,~S.; Bengio,~Y.; Dauphin,~Y.~N.; Vincent,~P. A generative process for sampling contractive auto-encoders. Proceedings of the 29th International Coference on International Conference on Machine Learning. 2012; p 1811–1818\relax
\mciteBstWouldAddEndPuncttrue
\mciteSetBstMidEndSepPunct{\mcitedefaultmidpunct}
{\mcitedefaultendpunct}{\mcitedefaultseppunct}\relax
\EndOfBibitem
\bibitem[Bengio \latin{et~al.}(2012)Bengio, Alain, and Rifai]{BAS12}
Bengio,~Y.; Alain,~G.; Rifai,~S. Implicit density estimation by local moment matching to sample from auto-encoders. \emph{arXiv preprint} \textbf{2012}, \emph{1207.0057}\relax
\mciteBstWouldAddEndPuncttrue
\mciteSetBstMidEndSepPunct{\mcitedefaultmidpunct}
{\mcitedefaultendpunct}{\mcitedefaultseppunct}\relax
\EndOfBibitem
\bibitem[Bengio \latin{et~al.}(2013)Bengio, Courville, and Vincent]{BCV13}
Bengio,~Y.; Courville,~A.; Vincent,~P. Representation learning: A review and new perspectives. \emph{IEEE Trans. Pattern Anal. Mach. Intell.} \textbf{2013}, \emph{35}, 1798--1828\relax
\mciteBstWouldAddEndPuncttrue
\mciteSetBstMidEndSepPunct{\mcitedefaultmidpunct}
{\mcitedefaultendpunct}{\mcitedefaultseppunct}\relax
\EndOfBibitem
\bibitem[Fischer(2014)]{Fischer14}
Fischer,~A. Deux m\'{e}thodes d'apprentissage non supervis\'{e}~: synth\`ese sur la m\'{e}thode des centres mobiles et pr\'{e}sentation des courbes principales. \emph{J. SFdS} \textbf{2014}, \emph{155}, 2--35\relax
\mciteBstWouldAddEndPuncttrue
\mciteSetBstMidEndSepPunct{\mcitedefaultmidpunct}
{\mcitedefaultendpunct}{\mcitedefaultseppunct}\relax
\EndOfBibitem
\bibitem[Shalev-Shwartz and Ben-David(2014)Shalev-Shwartz, and Ben-David]{SSBD14}
Shalev-Shwartz,~S.; Ben-David,~S. \emph{Understanding Machine Learning: From Theory to Algorithms}; Cambridge University Press, 2014\relax
\mciteBstWouldAddEndPuncttrue
\mciteSetBstMidEndSepPunct{\mcitedefaultmidpunct}
{\mcitedefaultendpunct}{\mcitedefaultseppunct}\relax
\EndOfBibitem
\bibitem[E and Vanden-Eijnden(2006)E, and Vanden-Eijnden]{EVE06}
E,~W.; Vanden-Eijnden,~E. Towards a theory of transition paths. \emph{J. Stat. Phys.} \textbf{2006}, \emph{123}, 503--523\relax
\mciteBstWouldAddEndPuncttrue
\mciteSetBstMidEndSepPunct{\mcitedefaultmidpunct}
{\mcitedefaultendpunct}{\mcitedefaultseppunct}\relax
\EndOfBibitem
\bibitem[Lu and Nolen(2015)Lu, and Nolen]{LN15}
Lu,~J.; Nolen,~J. Reactive trajectories and the transition path process. \emph{Probab. Theory Related Fields} \textbf{2015}, \emph{161}, 195--244\relax
\mciteBstWouldAddEndPuncttrue
\mciteSetBstMidEndSepPunct{\mcitedefaultmidpunct}
{\mcitedefaultendpunct}{\mcitedefaultseppunct}\relax
\EndOfBibitem
\bibitem[C\'{e}rou and Guyader(2007)C\'{e}rou, and Guyader]{CG07}
C\'{e}rou,~F.; Guyader,~A. Adaptive multilevel splitting for rare event analysis. \emph{Stoch. Anal. Appl.} \textbf{2007}, \emph{25}, 417--443\relax
\mciteBstWouldAddEndPuncttrue
\mciteSetBstMidEndSepPunct{\mcitedefaultmidpunct}
{\mcitedefaultendpunct}{\mcitedefaultseppunct}\relax
\EndOfBibitem
\bibitem[Lopes and Leli\`evre(2019)Lopes, and Leli\`evre]{LL19}
Lopes,~L. J.~S.; Leli\`evre,~T. Analysis of the Adaptive Multilevel Splitting method with the alanine di-peptide's isomerization. \emph{J. Comput. Chem.} \textbf{2019}, \emph{40}, 1198--1208\relax
\mciteBstWouldAddEndPuncttrue
\mciteSetBstMidEndSepPunct{\mcitedefaultmidpunct}
{\mcitedefaultendpunct}{\mcitedefaultseppunct}\relax
\EndOfBibitem
\bibitem[Pigeon \latin{et~al.}(2023)Pigeon, Stoltz, Corral-Valero, Anciaux-Sedrakian, Moreaud, Lelièvre, and Raybaud]{Pigeon_al_2023}
Pigeon,~T.; Stoltz,~G.; Corral-Valero,~M.; Anciaux-Sedrakian,~A.; Moreaud,~M.; Lelièvre,~T.; Raybaud,~P. Computing surface reaction rates by Adaptive Multilevel Splitting combined with machine learning and ab initio molecular dynamics. \emph{J. Chem. Theory Comput.} \textbf{2023}, \emph{19}, 3538--3550\relax
\mciteBstWouldAddEndPuncttrue
\mciteSetBstMidEndSepPunct{\mcitedefaultmidpunct}
{\mcitedefaultendpunct}{\mcitedefaultseppunct}\relax
\EndOfBibitem
\bibitem[Kirov and Slepcev(2017)Kirov, and Slepcev]{KS17}
Kirov,~S.; Slepcev,~D. Multiple Penalized Principal Curves: {A}nalysis and Computation. \emph{J. Math. Imaging Vis.} \textbf{2017}, \emph{59}, 234--256\relax
\mciteBstWouldAddEndPuncttrue
\mciteSetBstMidEndSepPunct{\mcitedefaultmidpunct}
{\mcitedefaultendpunct}{\mcitedefaultseppunct}\relax
\EndOfBibitem
\bibitem[Zhang \latin{et~al.}(2022)Zhang, Li, and Sch{\"u}tte]{ZLS22}
Zhang,~W.; Li,~T.; Sch{\"u}tte,~C. Solving eigenvalue {PDE}s of metastable diffusion processes using artificial neural networks. \emph{J. Comput. Phys.} \textbf{2022}, \emph{465}, 111377\relax
\mciteBstWouldAddEndPuncttrue
\mciteSetBstMidEndSepPunct{\mcitedefaultmidpunct}
{\mcitedefaultendpunct}{\mcitedefaultseppunct}\relax
\EndOfBibitem
\bibitem[Bovier \latin{et~al.}(2005)Bovier, Gayrard, and Klein]{Bovier2005}
Bovier,~A.; Gayrard,~V.; Klein,~M. Metastability in reversible diffusion processes {II}: {P}recise asymptotics for small eigenvalues. \emph{J. Eur. Math. Soc.} \textbf{2005}, \emph{007}, 69--99\relax
\mciteBstWouldAddEndPuncttrue
\mciteSetBstMidEndSepPunct{\mcitedefaultmidpunct}
{\mcitedefaultendpunct}{\mcitedefaultseppunct}\relax
\EndOfBibitem
\bibitem[M{\"u}ller and Brown(1979)M{\"u}ller, and Brown]{Muller1979}
M{\"u}ller,~K.; Brown,~L.~D. Location of saddle points and minimum energy paths by a constrained simplex optimization procedure. \emph{Theoretica Chimica Acta} \textbf{1979}, \emph{53}, 75--93\relax
\mciteBstWouldAddEndPuncttrue
\mciteSetBstMidEndSepPunct{\mcitedefaultmidpunct}
{\mcitedefaultendpunct}{\mcitedefaultseppunct}\relax
\EndOfBibitem
\bibitem[Kingma and Ba(2014)Kingma, and Ba]{Adam}
Kingma,~D.; Ba,~J. {Adam: A method for stochastic optimization}. \emph{3rd International Conference for Learning Representations} \textbf{2014}, \relax
\mciteBstWouldAddEndPunctfalse
\mciteSetBstMidEndSepPunct{\mcitedefaultmidpunct}
{}{\mcitedefaultseppunct}\relax
\EndOfBibitem
\bibitem[Evans \latin{et~al.}(2022)Evans, Cameron, and Tiwary]{ECT22}
Evans,~L.; Cameron,~M.~K.; Tiwary,~P. {Computing committors via Mahalanobis diffusion maps with enhanced sampling data}. \emph{J. Chem. Phys.} \textbf{2022}, \emph{157}, 214107\relax
\mciteBstWouldAddEndPuncttrue
\mciteSetBstMidEndSepPunct{\mcitedefaultmidpunct}
{\mcitedefaultendpunct}{\mcitedefaultseppunct}\relax
\EndOfBibitem
\bibitem[Park \latin{et~al.}(2003)Park, Sener, Lu, and Schulten]{Park2003}
Park,~S.; Sener,~M.~K.; Lu,~D.; Schulten,~K. {Reaction paths based on mean first-passage times}. \emph{J. Chem. Phys.} \textbf{2003}, \emph{119}, 1313--1319\relax
\mciteBstWouldAddEndPuncttrue
\mciteSetBstMidEndSepPunct{\mcitedefaultmidpunct}
{\mcitedefaultendpunct}{\mcitedefaultseppunct}\relax
\EndOfBibitem
\bibitem[Van Der~Spoel \latin{et~al.}(2005)Van Der~Spoel, Lindahl, Hess, Groenhof, Mark, and Berendsen]{GROMACS}
Van Der~Spoel,~D.; Lindahl,~E.; Hess,~B.; Groenhof,~G.; Mark,~A.~E.; Berendsen,~H.~J. {{G}{R}{O}{M}{A}{C}{S}: {F}ast, flexible, and free}. \emph{J. Comput. Chem.} \textbf{2005}, \emph{26}, 1701--1718\relax
\mciteBstWouldAddEndPuncttrue
\mciteSetBstMidEndSepPunct{\mcitedefaultmidpunct}
{\mcitedefaultendpunct}{\mcitedefaultseppunct}\relax
\EndOfBibitem
\bibitem[Kabsch(1976)]{Kabsch}
Kabsch,~W. {A solution for the best rotation to relate two sets of vectors}. \emph{Acta Crystallogr. Sec. A} \textbf{1976}, \emph{32}, 922--923\relax
\mciteBstWouldAddEndPuncttrue
\mciteSetBstMidEndSepPunct{\mcitedefaultmidpunct}
{\mcitedefaultendpunct}{\mcitedefaultseppunct}\relax
\EndOfBibitem
\bibitem[\v{S}\'{i}pka \latin{et~al.}(2023)\v{S}\'{i}pka, Erlebach, and Grajciar]{cv-using-invariant-rep}
\v{S}\'{i}pka,~M.; Erlebach,~A.; Grajciar,~L. Constructing collective variables using invariant learned representations. \emph{J. Chem. Theory Comput.} \textbf{2023}, \emph{19}, 887--901\relax
\mciteBstWouldAddEndPuncttrue
\mciteSetBstMidEndSepPunct{\mcitedefaultmidpunct}
{\mcitedefaultendpunct}{\mcitedefaultseppunct}\relax
\EndOfBibitem
\bibitem[Legoll and Leli{\`e}vre(2010)Legoll, and Leli{\`e}vre]{LL09}
Legoll,~F.; Leli{\`e}vre,~T. Effective dynamics using conditional expectations. \emph{Nonlinearity} \textbf{2010}, \emph{23}, 2131--2163\relax
\mciteBstWouldAddEndPuncttrue
\mciteSetBstMidEndSepPunct{\mcitedefaultmidpunct}
{\mcitedefaultendpunct}{\mcitedefaultseppunct}\relax
\EndOfBibitem
\bibitem[Zhang \latin{et~al.}(2016)Zhang, Hartmann, and Sch{\"u}tte]{ZHS16}
Zhang,~W.; Hartmann,~C.; Sch{\"u}tte,~C. Effective dynamics along given reaction coordinates, and reaction rate theory. \emph{Faraday Discuss.} \textbf{2016}, \emph{195}, 365--394\relax
\mciteBstWouldAddEndPuncttrue
\mciteSetBstMidEndSepPunct{\mcitedefaultmidpunct}
{\mcitedefaultendpunct}{\mcitedefaultseppunct}\relax
\EndOfBibitem
\bibitem[Lindorff-Larsen \latin{et~al.}(2010)Lindorff-Larsen, Piana, Palmo, Maragakis, Klepeis, Dror, and Shaw]{amber99sb-ildn}
Lindorff-Larsen,~K.; Piana,~S.; Palmo,~K.; Maragakis,~P.; Klepeis,~J.~L.; Dror,~R.~O.; Shaw,~D.~E. Improved side-chain torsion potentials for the Amber ff99SB protein force field. \emph{Proteins: Struct. Funct.} \textbf{2010}, \emph{78}, 1950--1958\relax
\mciteBstWouldAddEndPuncttrue
\mciteSetBstMidEndSepPunct{\mcitedefaultmidpunct}
{\mcitedefaultendpunct}{\mcitedefaultseppunct}\relax
\EndOfBibitem
\end{mcitethebibliography}

\end{document}